\DeclareMathOperator{\Rea}{Re}
\DeclareMathOperator{\Ima}{Im}
\DeclareMathOperator{\Tr}{Tr}
\pgfplotsset{width=10cm,compat=1.9}
\newtheorem*{theorem}{Theorem}
\newcommand{\overbar}[1]{\mkern 1.5mu\overline{\mkern-1.5mu#1\mkern-1.5mu}\mkern 1.5mu}
\newcommand{\lsp}{\hspace{1pt}}
\newcommand{\llsp}{\hspace{0.5pt}}
\newcommand{\lnsp}{\hspace{-1pt}}
\newcommand{\veps}{\varepsilon}
\renewcommand{\geq}{\geqslant}
\renewcommand{\leq}{\leqslant}
\definecolor{darkblue}{rgb}{0.1,0.1,0.7}
\title{Scalar-Fermion Fixed Points in the $\varepsilon$ Expansion}
\author{William H.\ Pannell and Andreas Stergiou\emails{(\href{mailto:william.pannell@kcl.ac.uk}{william.pannell}, \href{mailto:andreas.stergiou@kcl.ac.uk}{andreas.stergiou})@kcl.ac.uk}}
\affiliation{Department of Mathematics, King's College London, Strand, London WC2R 2LS, United Kingdom}
\abstract{The one-loop beta functions for systems of $N_s$ scalars and $N_f$ fermions interacting via a general potential are analysed as tensorial equations in $4-\varepsilon$ dimensions. Two distinct bounds on combinations of invariants constructed from the couplings are derived and, subject to an assumption, are used to prove that at one-loop order the anomalous dimensions of the elementary fields are universally restricted by $\gamma_\phi\leq\frac{1}{2}N_s\lsp\varepsilon$ and $\gamma_\psi\leq N_s\lsp \varepsilon$. For each root of the Yukawa beta function there is a number of roots of the quartic beta function, giving rise to the concept of `levels' of fixed points in scalar-fermion theories. It is proven that if a stable fixed point exists within a certain level, then it is the only such fixed point at that level. Solving the beta function equations, both analytically and numerically, for low numbers of scalars and fermions, well-known and novel fixed points are found and their stability properties are examined. While a number of fixed points saturate one out of the two bounds, only one fixed point is found which saturates both of them.}
\date{May 2023}
\begin{document}

\maketitle

\toc

\section{Introduction}
There exist a number of well-known models involving scalars and fermions coupled together by a Yukawa-type interaction, which have a variety of applications both in modelling condensed matter systems and in providing toy models for behaviour such as chiral symmetry breaking in quantum field theory (QFT). Two of the more famous are the Gross--Neveu--Yukawa (GNY) model~\cite{Gross:1974jv, Zinn-Justin:1991ksq}, and the Nambu--Jona-Lasinio--Yukawa (NJLY) model~\cite{Nambu:1961tp, Nambu:1961fr, Zinn-Justin:1991ksq}, ultraviolet (UV) completions of two-dimensional purely fermionic theories first introduced to study continuous and discrete chiral symmetry breaking. Recently, it has been suggested that these models may have a rich structure in 3d, where for certain numbers of fermions, including non-integer values, they flow to an infrared (IR) conformal field theory (CFT) with emergent supersymmetry~\cite{Fei:2016sgs, Liendo:2021wpo}.

In the Wilsonian paradigm, one can consider CFTs to arise as theories which do not flow under renormalisation. These fixed points will then act as the UV and IR limits of QFT flows through theory space. The $\varepsilon$ expansion, first introduced by Wilson and Fisher in the 1970s~\cite{Wilson:1971dc}, has been a remarkably powerful tool for calculating and understanding the properties of these fixed points. Picking out the $\varepsilon$ poles arising from Feynman diagrams via counterterms, the beta functions become polynomials in the interaction couplings, so that the problem of finding fixed points becomes a purely algebraic one. Though this expansion  relies on taking $\varepsilon$ to be a small parameter, it is still able to provide meaningful information in the limit $\varepsilon\rightarrow 1$. Thus, one can derive information about CFTs in three dimensions by considering theories close to four dimensions. While it may be difficult to determine rigorously how data from the four dimensional theory carries over to three dimensions, it is believed that, except for special bifurcation points, the existence of a fixed point of the one-loop beta function is not affected by higher order terms when taking $\varepsilon\rightarrow1$.

To derive fixed points one typically considers a specific system, such as the GNY or NJLY model, where the interactions preserve a definite symmetry group $G$. One then examines how this system behaves under symmetry-preserving deformations. The group $G$ will restrict the number of couplings and the allowed form of the beta functions, generally simplifying the algebraic equations. The restriction provided by the symmetry group $G$ is not a general one, and greatly limits the number and type of fixed points which will be found. Recently, there has been interest in extending beyond this line of thinking in order to systematically examine the space of all fixed points. Such a search requires a selection of the field content of the theory, and starting with a potential that contains all possible classically marginal terms that can be written down from the basic fields. Previously, this search has been undertaken in the context of purely (massless) scalar theories\cite{Osborn:2017ucf,Osborn:2020cnf}, where the potential is the completely unconstrained
\begin{equation}
    V(\phi)=\tfrac{1}{4!}\lambda_{ijkl}\phi_i\phi_j\phi_k\phi_l\,,
    \label{generalscalarpotential}
\end{equation}
for $N_s$ scalar fields $\phi_i$ with $i=1,\ldots,N_s$ and $\lambda_{ijkl}$ a symmetric tensor. Beyond the well-known models with simple symmetry groups and only a few contributions to the scalar potential, there exist a plethora of more complicated fixed points with reduced symmetry.

Considering the importance of theories containing both scalars and fermions, it is natural to generalise (\ref{generalscalarpotential}) to include $N_f$ Weyl fermions as
\begin{equation}
    V(\phi,\psi,\bar{\psi})=\tfrac{1}{4!}\lambda_{ijkl}\phi_i\phi_j\phi_k\phi_l+\big(\tfrac{1}{2}y_{iab}\phi_i\psi_a\psi_b+\,\text{h.c.}\big)\,,
\end{equation} 
where $y_{iab}$ is symmetric in $a$ and $b$. For specific choices of $N_s$, $\lambda_{ijkl}$ and $y_{iab}$, one is able to recover the GNY or NJLY model. However, this framework is presumably general enough to, as in the scalar case, produce novel fixed points. Recently, the beta functions for $\lambda_{ijkl}$ and $y_{iab}$ have been written down to three-loop order and examined in a supersymmetric context~\cite{Jack:2023zjt}. As we are more concerned with the simple existence of fixed points in a simple brute-force search, the one-loop beta function will be sufficient for our purposes.

In this context, we undertake here a general analysis of fixed points in scalar-fermion theories. After discussing the beta functions in such theories, we analyse the stability matrix which determines repulsive and attractive directions close to a fixed point in a flow towards the IR. We show that there is always a positive eigenvalue equal to $\varepsilon$ and that whenever the global symmetry of the free action is broken there are zero eigenvalues at the fixed point. These zero eigenvalues do not correspond to exactly marginal operators but are rather zero due to the associated deformations being total derivatives~\cite{Rychkov:2018vya}. Such deformations arise from broken currents of the free theory.

In scalar models, it is well-known that if with a given set of deformations there exists a renormalisation group (RG) stable fixed point, in the sense that all directions around it are attractive in a flow towards the IR, then it is the only RG stable fixed point \cite{Michel:1983in}. In the case of scalar-fermion models, the one-loop Yukawa beta function does not depend on the quartic couplings and so it can be solved independently of the quartic beta function. For every root of the Yukawa beta function there will be multiple roots of the quartic beta function and thus there are `levels' of fixed points (with each level corresponding to a different root of the Yukawa beta function). We prove, under fairly general assumptions, that if a stable fixed point exists within a level, then it is the only fixed point with that property. Note that different levels can have different stable fixed points.

We also derive upper bounds on linear combinations of group invariants that can be used to characterise scalar-fermion fixed points. This extends results found for purely scalar theories~\cite{Brezin:1973jt, Rychkov:2018vya, Osborn:2020cnf, Hogervorst:2020gtc}. The Yukawa beta function leads to a bound on a combination of invariants constructed from the Yukawa couplings alone, while the quartic beta function leads to a bound on a combination of mixed invariants involving both Yukawa and quartic couplings. There is a way to combine these bounds so that a linear combination of the tensorial norms $||\lambda||^2=\lambda_{ijkl}\lambda_{ijkl}$ and $||y_iy^*{\hspace{-5pt}}_i\llsp||^2=\Tr(y_{i}y^*{\hspace{-5pt}}_{i}y_{j}y^*{\!\!\lnsp}_{j})$ is bounded above:
\begin{equation}\label{bintro}
    ||\lambda||^2-6\,||y_iy^*{\hspace{-5pt}}_i\llsp||^2\leq\tfrac18 N_s\lsp\varepsilon^2\,.
\end{equation}
The minus sign in \eqref{bintro} is ultimately due to the minus sign in front of the contribution of purely Yukawa terms to the beta function of $\lambda$. The bound \eqref{bintro} indicates that the allowed region in which unitary fixed points may be found is bounded by a hyperbola in the $||\lambda||$-$||y_iy^*{\hspace{-5pt}}_i\llsp||$ plane. 

These bounds are shown to lead to upper bounds on the one-loop anomalous dimensions of the scalar and fermion fields if a certain combination quartic in Yukawa couplings is non-negative. This follows from the fact that anomalous dimensions are given simply by combinations of couplings evaluated at fixed points. In fact, at leading order only the Yukawa couplings at the fixed point are enough to determine the anomalous dimensions. Subject to an unproven assumption that we have numerically checked in a variety of cases, we find that 
\begin{equation}
    \gamma_\phi\leq \tfrac12\lsp N_s\lsp\varepsilon\,,\qquad \gamma_\psi\leq N_s\lsp\varepsilon\,.
\end{equation}

As we have already mentioned, fixed points obtained in the $\varepsilon$ expansion are usually limited by overarching assumptions of symmetry. Here we relax these assumptions and seek all fixed points in a brute force search for small $N_s, N_f$. We perform analytic searches but also numerical ones when we can no longer find all roots of the beta functions analytically. These searches are not systematic enough to claim that we have found all possible fixed points, but they are broad enough to indicate the size of the space of fixed points and to locate well-known theories. While the potentials of fixed points found in purely scalar theories are necessarily bounded below \cite{Brezin:1973jt, Rychkov:2018vya}, this is not the case for scalar-fermion fixed points. In this work we will not filter fixed points based on the criterion of stability of the scalar potential.

The outline of the paper is as follows. In section \ref{generalsec} we describe the scalar-fermion systems, their beta functions and stability matrices, and then derive general bounds that the beta function equations place on certain combinations of invariants constructed from the couplings. In this section we also examine how to transfer between Weyl fermions in four dimensions and real two-component Majorana fermions in three dimensions, and prove that for each Yukawa-coupling solution there exists at most one stable fixed point. In section \ref{sec:wellknown} we survey some well-known fixed points involving scalars and fermions. In section \ref{analyticsec} we begin our search of the space of fixed points by directly solving the beta function equations for small numbers of scalars and fermions, and verify the consistency with bounds from the previous section. Section \ref{numericsec} then extends the search beyond the scope of simple analytic techniques by applying numerical methods to the problem. We first verify that the numerical results are consistent with the analytical results of the previous section, and then set out towards the unknown. Finally, we conclude in section \ref{conc}.

\section{General results}\label{generalsec}
In $d=4$ we may start with the general Lagrangian
\eqn{\mathscr{L}=\tfrac12\lsp\partial^\mu\phi_i\lsp\partial_\mu\phi_i+i\bar{\psi}^a\bar{\sigma}^\mu\partial_\mu\psi_a+\tfrac{1}{4!}\lambda_{ijkl}\lsp\phi_i\phi_j\phi_k\phi_l+(\tfrac12\lsp  y_{iab}\lsp\phi_i\psi_a\psi_b+\text{h.c.})\,,}[basicLag]
where $\phi_i$, $i=1,\ldots,N_s$, are real scalar fields, $\psi_a$, $a=1,\ldots,N_f$, are two-component Weyl fermions, $\bar{\psi}=\psi^\dagger$ and $\bar{\sigma}^\mu=(\mathds{1}_{2\times 2},-\vec{\sigma})$ with $\vec{\sigma}$ containing the three Pauli matrices, $\vec{\sigma}=(\sigma^1,\sigma^2,\sigma^3)$. Repeated indices are always summed over the values they take. The tensor $\lambda_{ijkl}$ is fully symmetric and thus has $\frac{1}{4!}N_s(N_s+1)(N_s+2)(N_s+3)$ independent real components, while $y_{iab}$ is a complex tensor symmetric in the two fermionic flavour indices and thus has $\frac12 N_sN_f(N_f+1)$ independent complex components.

Two-component spinors may be related to four-component spinors in the standard way; see e.g.\ \cite{Dreiner:2008tw}. First of all, the Lagrangian \basicLag is equivalent to a Lagrangian built out of four-component Majorana spinors 
\eqn{\Psi^M_a=\begin{pmatrix}\psi_{a\lsp\alpha} \\ \bar{\psi}^{a\lsp\dot{\alpha}}\end{pmatrix}\,,\qquad
\overbar{\Psi}^M_a=\begin{pmatrix}\psi_{a}^{\lsp\alpha} & \bar{\psi}^{a}_{\lsp\dot{\alpha}}\end{pmatrix}\,.}[]
This Lagrangian takes the form
\eqn{\mathscr{L}=\tfrac12\lsp\partial^\mu\phi_i\lsp\partial_\mu\phi_i+\tfrac12 i\lsp\overbar{\Psi}^M_a\gamma^\mu\partial_\mu\Psi^M_a+\tfrac{1}{4!}\lambda_{ijkl}\lsp\phi_i\phi_j\phi_k\phi_l+\tfrac12\lsp  y^M_{iab}\lsp\phi_i\overbar{\Psi}^M_a\Psi^M_b+\tfrac12 i\lsp \hat{y}^M_{iab}\lsp\phi_i\overbar{\Psi}^M_a\gamma^5\Psi^M_b\,,}[basicLagM]
where now $y^M_{iab}$ and $\hat{y}^M_{iab}$ are real tensors symmetric in the fermionic flavour indices. Comparing with \basicLag we find
\eqn{y_{iab}=y^M_{iab}-i\lsp \hat{y}^M_{iab}\,.}[]
For the gamma matrices we use the chiral representation and $\overbar{\Psi}=\Psi^\dagger\gamma^0$. 

Dirac spinors can also be used. To define a Dirac spinor one needs two different two-component spinors. Let us choose
\eqn{\Psi_a=\begin{pmatrix}\eta_{a\lsp\alpha} \\ \bar{\chi}^{a\lsp\dot{\alpha}}\end{pmatrix}\,,\qquad
\overbar{\Psi}_a=\begin{pmatrix}\chi_{a}^{\lsp\alpha} & \bar{\eta}^{a}_{\lsp\dot{\alpha}}\end{pmatrix}\,,}[diracfers]
with $a=1,\ldots,N_f$ counting Dirac spinors so that we need an even number of two-component spinors. Without using explicit projectors to two-component spinors, we may write the Lagrangian
\eqn{\mathscr{L}_D=\tfrac12\lsp\partial^\mu\phi_i\lsp\partial_\mu\phi_i+i\lsp\overbar{\Psi}_a\gamma^\mu\partial_\mu\Psi_a+\tfrac{1}{4!}\lambda_{ijkl}\lsp\phi_i\phi_j\phi_k\phi_l+y^D_{iab}\lsp\phi_i\overbar{\Psi}_a\Psi_b+i\lsp \hat{y}^D_{iab}\lsp\phi_i\overbar{\Psi}_a\gamma^5\Psi_b\,,}[basicLagD]
where $y^D_{iab}$ and $\hat{y}^D_{iab}$ are complex tensors Hermitian in the fermionic flavour indices, i.e.\ $y^{D\lsp *}_{iba}=y^D_{iab}$ and $\hat{y}^{D\lsp *}_{iba}=\hat{y}^D_{iab}$. Beginning with the Lagrangian (\ref{basicLagD}), we may write each Dirac fermion as in \eqref{diracfers} for Weyl fermions $\eta_a$ and $\chi_a$, and re-express this Lagrangian in the form of (\ref{basicLag}) for $2N_f$ Weyl fermions. If we define the Weyl fermions $\psi_A, A=1,\ldots,2N_f$, such that
\begin{equation}
    \eta_a=\psi_{a}\,,\qquad \chi_a=\psi_{N_f+a}\,,
\end{equation}
we find that the Yukawa coupling of the Weyl fermions can be written in block-matrix form like
\begin{equation}
    y_{iAB}=\begin{pmatrix}\textbf{0} & \frac{1}{2}(y_i^{D})^*-\frac{i}{2}(\hat{y}_i^D)^* \\ 
    \frac{1}{2}y_i^D-\frac{i}{2}\hat{y}_i^D & \textbf{0}\end{pmatrix}\,.
    \label{eq:diractoweyl}
\end{equation}
As $y^D_i, \hat{y}^D_i$ are Hermitian matrices, this coupling will be, as expected, symmetric but not real.

The Lagrangian \basicLagD is not as general as \basicLag, but widely analysed models like the GNY and NJLY models are most easily described as special cases of \basicLagD. Let us remark here that parity is broken in \basicLagD. To restore it one may consider scalars $\phi_i$ and pseudoscalars $\phi'_p$ and allow only the couplings $y^D_{iab}\lsp\phi_i\overbar{\Psi}_a\Psi_b+i\lsp \hat{y}^D_{pab}\lsp\phi'_p\overbar{\Psi}_a\gamma^5\Psi^M_b$. Time reversal is also broken in \basicLagD.

In our treatment we will use \basicLag in $d=4-\veps$ dimensions, with the eventual goal of taking the limit $\veps\to1$. A proper treatment of fermions in this excursion across dimensions from 4 to 3 is, to our knowledge, unknown. In the literature, it is common to start with Dirac fermions in $d=4$, which have 8 real degrees of freedom, and assert that proper Majorana fermions in $d=3$ can be obtained, with two real degrees of freedom each, essentially emerging by some decomposition of the original Dirac spinor. This gives rise to 4 Majorana fermions in $d=3$ per Dirac fermion in $d=4$. Our Lagrangian \basicLag contains two-component Weyl fermions in $d=4$, which can be equivalently thought of as $d=4$ Majorana fermions. Despite the lack of precise mappings, we treat each of our two-component Weyl fermions in $d=4$ as containing 2 Majorana fermions in $d=3$. However, there is evidence that the story is not so simple when one wants to include global flavour symmetries. For instance, it is believed that though the GNY model, (\ref{basicLagD}) for $N_s=1$ with $\hat{y}=0$ and $y_{ab}=y\lsp\delta_{ab}$, has $U(N_f)$ symmetry in 4d, the Dirac fermions will break apart such that the theory has $O(2N_f)^2\rtimes\mathbb{Z}_2$ symmetry rather than the $O(4N_f)$ symmetry one might naively expect\cite{Jack:2023zjt, Erramilli:2022kgp, Kubota:2001kk}.

\subsection{Beta functions for scalar-fermion theories}
The beta functions for the coupling constants $\lambda$ and $y$ for scalar-fermion theories are well known~\cite{Jack:2023zjt, Jack:1983sk, Jack:1984vj, Machacek:1983fi, Machacek:1984zw}, and depend only on the type of fermion that is used in that the number of independent components contained by the spinor can appear as a prefactor in certain terms. For the Weyl Lagrangian in (\ref{basicLag}), the beta functions are given at one loop by\foot{Here and hereafter we rescale the quartic coupling by $16\pi^2$ and the Yukawa coupling by $4\pi$.}
\begin{align}
    \beta^{\lambda,W}_{ijkl}&=-\varepsilon\lsp\lambda_{ijkl}+S_{3,ijkl}\lsp\lambda_{ijmn}\lambda_{mnkl}-4\lsp S_{6,ijkl}\lsp y_{iab}\lsp y^*{\!\!\!}_{jbc}\lsp y_{kcd}\lsp y^*{\!\!\!}_{lda}\nonumber\\
    &\hspace{5cm}+\tfrac{1}{2}S_{4,ijkl}\lsp(y_{iab}\lsp y^*{\!\!\!}_{mba}+y_{mab}\lsp y^*{\!\!\!}_{iba})\lambda_{mjkl}\,,\label{eq:betalambdaweyl}\\
    \beta^{y,W}_{iab}&=-\tfrac{1}{2}\varepsilon\lsp y_{iab}+\tfrac{1}{2}(y_{jac}\lsp y^*{\!\!\!}_{jcd}\lsp y_{idb}+y_{iac}\lsp y^*{\!\!\!}_{jcd}\lsp y_{jdb})+2\lsp y_{jac}\lsp y^*{\hspace{-5pt}}_{icd}\lsp y_{jdb}+\tfrac12(y_{icd}\lsp y^*{\!\!\!}_{jdc}+y_{jcd}\lsp y^*{\hspace{-5pt}}_{idc})\lsp y_{jab}\,,\label{eq:betayweyl}
\end{align}
where $S_{n,ijkl}$ sums over the $n$ inequivalent index permutations of the indices $i,j,k,l$. Operationally, fixed points may be calculated using the above beta functions, at which point one takes $N_f$ to be the number of desired Majorana fermions in 3d divided by the appropriate factor of two. The difficulties associated with attributing the global symmetry group in three dimensions to the starting symmetry group in four dimensions makes this method subtle. 

There is another, perhaps more naive way of continuing across dimensions, by attempting to directly guess the form the Yukawa interaction ought to take in 3d. If we divide the $N_f$ 4d Weyl fermions into $2N_f$ 3d Majorana fermions as
\begin{equation}
    \psi_{a}=\zeta_{a}+i\lsp\xi_{a}\,,
\end{equation}
the Yukawa interaction in (\ref{basicLag}) becomes
\begin{equation}
\phi_i\big(2\Rea(y_{iab})(\zeta_a\zeta_b-\xi_a\xi_b)+2\Ima(y_{iab})(\zeta_a\xi_b+\zeta_b\xi_a)\big)=g_{iAB}\phi_i\theta_A\theta_B\,,
\label{eq:weyltomajorana}
\end{equation}
where $A,B=1,\ldots,2N_f$, $\zeta_a=\theta_a$ and $\xi_a=\theta_{N_f+a}$. In four dimensions, the Lorentz symmetry requires that the coefficient of the $\zeta\zeta$ and $\xi\xi$ terms be equal, up to a sign, so that the theory's degrees of freedom can be written in terms of Weyl, or Dirac spinors. In three dimensions, however, this is no longer the case, and relaxing this condition we conjecture that these theories have the more general Yukawa interaction $y_{iAB}\phi_i\theta_A\theta_B$, where now $y_{iAB}$ is simply a real tensor symmetric with respect to its $A,B$ indices. Importantly, this includes both three-dimensional theories found by taking the $\varepsilon\rightarrow1$ limit of (\ref{basicLag}), and three-dimensional theories which cannot be lifted into four-dimensional theories in a Lorentz-invariant manner. 

The beta functions for the two-component Majorana fermions can be written down based on this conjecture. As the basic scalar-fermion interaction vertex still takes the standard Yukawa form, the coupling $y_{iab}$ will be renormalised with precisely the same diagrams as those needed to renormalise $g_{iab}$. There are only two distinctions which we must account for: first that the fermion loops will only see half of the degrees of freedom in the 3d Majorana case as they do in the Weyl case, and second that our conjecture posits that $y_{iab}$ is real, so that both $g_{iab}$ and $g^*{\hspace{-5pt}}_{iab}$ must be replaced by $y_{iab}$ wherever they appear in a diagram. Implementing these rules in (\ref{eq:betalambdaweyl}) and (\ref{eq:betayweyl}) yields the new beta functions
\begin{align}
    \beta^{\lambda,M}_{ijkl}&=-\varepsilon\lsp\lambda_{ijkl}+S_{3,ijkl}\lsp\lambda_{ijmn}\lambda_{mnkl}-4\lsp S_{3,ijkl}\lsp y_{iab}\lsp y_{jbc}\lsp y_{kcd}\lsp y_{lda}\nonumber\\
    &\hspace{7cm}+\tfrac{1}{2}S_{4,ijkl}\lsp y_{iab}\lsp y_{mba}\lambda_{mjkl}\,,\\
    \beta^{y,M}_{iab}&=-\tfrac{1}{2}\varepsilon\lsp y_{iab}+\tfrac{1}{2}(y_{jac}\lsp y_{jcd}\lsp y_{idb}+y_{iac}\lsp y_{jcd}\lsp y_{jdb})+2\lsp y_{jac}\lsp y_{icd}\lsp y_{jdb}+\tfrac12\lsp y_{jcd}\lsp y_{icd}\lsp y_{jab}\,.
\end{align}
To notationally unify the Weyl and 3d Majorana beta functions we can define the symmetric $O(N_s)$ tensors
\begin{equation}
\begin{aligned}
Z_{ij}&=\tfrac{1}{2}\alpha\big(y_{iab}\lsp y^{*}{\hspace{-5pt}}_{jba}+y^{*}{\hspace{-5pt}}_{iab}\lsp y_{jba}\big)=\tfrac{1}{2}\alpha\Tr(y_iy^*{\!\!\!}_j+y^*{\!\!\!}_iy_j)\,,\\
X_{ijkl}&=\tfrac{1}{2}\alpha\lsp S_{6,ijkl}\lsp y_{iab}\lsp y^{*}{\hspace{-5pt}}_{jbc}\lsp y_{kcd}\lsp y^{*}{\hspace{-5pt}}_{lda}=\tfrac{1}{2}\alpha\lsp S_{6,ijkl}\Tr(y_{i}\lsp y^{*}{\hspace{-5pt}}_{j}\lsp y_{k}\lsp y^{*}{\hspace{-5pt}}_{l})\,,
\label{eq:zweyl}
\end{aligned}
\end{equation}
where $\alpha=1$ for two-component Majorana fermions and $\alpha=2$ for Weyl fermions. One can view $\alpha$ as counting the number of 3d Majorana fermions the 4d fermions will produce when we take the $\varepsilon\rightarrow1$ limit. The factors of $\alpha$ account for the different number of independent components in the distinct types of fermions. In the two-component Majorana case, one must simply remember that the coupling $y_{iab}$ is real to remove the complex conjugation.

Tensors we have encountered so far are not irreducible representations (irreps) of $O(N_s)$, and can be decomposed as
\begin{equation}
\begin{aligned}
    Z_{ij}&=Z_0\lsp\delta_{ij}+Z_{2,ij}\,, \\
    X_{ijkl}&=X_0\lsp S_{3,ijkl}\lsp \delta_{ij}\delta_{kl}+S_{6,ijlk}\lsp X_{2,ij}\lsp \delta_{kl}+X_{4,ijkl}\,,\\
    \lambda_{ijkl}&=d_0\lsp S_{3,ijkl}\lsp\delta_{ij}\delta_{kl}+S_{6,ijlk}\lsp d_{2,ij}\delta_{kl}+d_{4,ijkl}\,,
    \label{eq:tensordecomposition}
    \end{aligned}
\end{equation}
where $Z_2$, $X_2$, $X_4$, $d_2$ and $d_4$ are all symmetric, traceless tensors. In terms of these tensors, the beta functions simplify in both cases to
\begin{align}
    \beta^{\lambda}_{ijkl}&=-\varepsilon\lsp \lambda_{ijkl}+S_{3,ijkl}\lambda_{ijmn}\lambda_{mnkl}-4\lsp X_{ijkl}
    +\tfrac{1}{2}S_{4,ijkl}\lsp Z_{im}\lambda_{mjkl}\,,
    \label{eq:betalambda}\\
    \beta^{y}_{iab}&=-\tfrac{1}{2}\varepsilon \lsp y_{iab}+\tfrac{1}{2}\big(y_{jac}\lsp y^{*}{\hspace{-5pt}}_{jcd}\lsp y_{idb} + y_{iac}\lsp y^{*}{\hspace{-5pt}}_{jcd}\lsp y_{jdb} + 4\lsp y_{jac}\lsp y^{*}{\hspace{-5pt}}_{icd}\lsp y_{jdb}+Z_{ij}\lsp y_{jab}\big)\,,
    \label{eq:betay}
\end{align}
where again in the Majorana case we simply use the fact that the couplings $y$ are real to remove the complex conjugation. The anomalous dimensions of the fields are also known to one loop order\cite{Machacek:1983tz}, and are given by the eigenvalues of the matrices
\begin{equation}
\gamma^\phi_{ij}=\tfrac{1}{4}\alpha\lsp Z_{ij}\,, \qquad
\gamma^\psi_{ab}=\tfrac{1}{2}\lsp y_{iac}\lsp y^{*}{\hspace{-5pt}}_{icb}\,.
    \label{eq:anomalousdimensions}
\end{equation}
These dimensions will be useful in determining how the global symmetry of the free theory is broken at different fixed points.

\subsubsection{Doubling effect} \label{doubling}
Solving for fixed points in some situations produces pairs of distinct fixed points related by a rescaling of the scalar couplings. Suppose we have some solution $(\lambda,y)$ to the beta function equations and consider the new set of couplings $(\lambda'=a\lambda,y)$. This rescaled set of couplings may now also be a fixed point of the same beta function equations, i.e.\ it may set (\ref{eq:betalambda}) and (\ref{eq:betay}) to zero. One could absorb the rescaling in a field redefinition, but that would not preserve the form of the kinetic term in the Lagrangian. Thus, demanding a canonical kinetic term shows that these fixed points will be distinct physical theories.

Let us now assume that $(a\lambda,y)$ also describes a fixed point. Then, starting from the fixed point $(\lambda,y)$ we have 
\eqn{-\varepsilon\lsp a_0+a_1+2S-4N_s(N_s+2)X_0+\tfrac{1}{2}Z_{ij}\lambda_{ijkk}=0\,,}[eqnonresc]
where $S=\lambda_{ijkl}\lambda_{ijkl}$, while for the rescaled fixed point $(a\lambda,y)$ we have
\eqn{-\varepsilon\lsp a\lsp a_0+a^2\lsp a_1+2\lsp a^2\lsp S-4N_s(N_s+2)X_0+\tfrac{1}{2}Z_{ij}\lsp a\lsp \lambda_{ijkk}=0\,.}[eqresc]
These two equations can be simultaneously satisfied for a unique value of $a\ne1$. Subtracting $1/a$ times \eqresc from \eqnonresc yields the condition which $a$ must satisfy:
\begin{equation}
    (1-a)(a_1+2S)-\bigg(1-\frac{1}{a}\bigg)4N_s(N_s+2)X_0=0\,.
    \label{eq:acondition}
\end{equation}
Assuming $a\ne1$ and using the fact that $X_0$ only vanishes when all fermions are free, the last expression becomes
\begin{equation}
    \frac{a_1+2S}{4N_s(N_s+2)X_0}=-\frac{1}{a}.
    \label{eq:doubling}
\end{equation}
This is the doubling effect: for certain $(\lambda,y)$ fixed points, there exist a physically inequivalent $(a\lambda,y)$ fixed points with $a$ given by \eqref{eq:doubling}. As everything is positive on the left-hand side of \eqref{eq:doubling}, it follows that $a<0$.

Notice that as $\lambda=0$ is not a solution to $\beta^\lambda=0$ for non-zero $y$, $a\neq1$ always exists for fixed points with interacting fermions. However, this does not a priori guarantee the existence of a doubled fixed point satisfying the same beta function equations. For that to be the case, it must hold that
\begin{equation}
    S_{3,ijkl}\lambda_{ijmn}\lambda_{mnkl}=-\frac{4}{a}X_{ijkl}
    \label{eq:betadoubled}
\end{equation}
for all $i,j,k$ and $l$, which is a non-trivial condition. 

Note that if $N_s=1$ the indices in \eqref{eq:betadoubled} disappear so that this equation trivially reduces to (\ref{eq:doubling}), and thus every fixed point is doubled in that case. When the fermions are all free or in a purely scalar theory the only solution to \eqref{eq:acondition} is $a=1$ and thus there is no doubling effect in those cases.

\subsection{The stability matrix}
\subsubsection{Eigenvalues and their interpretation}\label{stabmatrixeigenvalues}
It is crucial to notice that because $\beta^y$ is independent of $\lambda$ for both Weyl and two-component Majorana fermions, in both cases the stability matrix will necessarily take the schematic form
\begin{equation}
    S_{IJ}=\begin{pmatrix}H_{ij} & B_{ib}\\ 0 & C_{ab}\end{pmatrix}
    \label{eq:stabilitymatrixform}
\end{equation}
for some $H$, $B$ and $C$, where $I$, $i$ and $a$ represent generalised indices. Here, $C$ will be the stability matrix of the Yukawa couplings. Because of the zero matrix in the lower-left block, eigenvectors of $H$ will remain eigenvectors of the whole stability matrix as long as we simply append $\frac12 N_sN_f(N_f+1)$ zeros to the end of the vector.

Many of the results concerning eigenvalues of this stability matrix carry over virtually unchanged from those found for the purely scalar case\cite{Osborn:2020cnf}. The stability matrix will always have an eigenvector with eigenvalue $\kappa=\varepsilon$ which is related to the value of the couplings by
\begin{equation}
    v_I=\begin{pmatrix}\lambda_i\\ \frac{1}{2}y_a\end{pmatrix}.
    \label{eq:eig}
\end{equation}
To show that (\ref{eq:eig}) is indeed an eigenvector of $S_{IJ}$ we must first show that $y_a$ is an eigenvector of $C_{ab}$ with unit eigenvalue. To see that this is the case, notice that when taking derivatives of $\beta^y$ with respect to $y_{iab}$ one will get Kronecker deltas in the $O(N_s)$ indices and Kronecker deltas in the $O(N_f)$ indices symmetrised over $a$ and $b$, so that upon contracting with $y_{iab}$ one gets back the same terms as in $\beta^y$ but now with a prefactor equal to the number of $y$'s in each term. That is,
\begin{equation}
    \frac{d\beta^{y}_{iab}}{dy_{jdc}}y_{jdc}=-\tfrac12\varepsilon\lsp y_{iab}+\tfrac{3}{2}\big(y_{jac}y_{jcd}y_{idb}+y_{iac}y_{jcd}y_{jdb}+4\lsp y_{jac}y_{icd}y_{jdb}+Z_{ij}y_{jab}\big)\,,
\end{equation}
which, using the fact that $\beta^y=0$ at the fixed point, reduces immediately to
\begin{equation}
    \frac{d\beta^{y}_{iab}}{dy_{jdc}}y_{jdc}=\varepsilon\lsp y_{iab}\,.
\end{equation}
Thus, we have that 
\begin{equation}
    S_{aJ}v_J=\varepsilon\lsp v_a\,.
\end{equation}
To demonstrate that $v_I$ is an eigenvector, all that is left is then to show that $S_{iJ}v_J=\varepsilon\lsp v_i$. To see this, notice that as before contracting $S$ with $\lambda$ or $y$ will simply introduce factors equal to the power of the couplings that appear in $\beta^\lambda$ so that
\begin{equation}
\begin{aligned}
    S_{ijklmnpq}\lambda_{mnpq}&=-\varepsilon\lsp\lambda_{ijkl}+2\lsp S_{3,ijkl}\lambda_{ijmn}\lambda_{klmn}+\tfrac{1}{2} S_{4,ijkl}Z_{im}\lambda_{mjkl}\,,\\
    \tfrac{1}{2} S_{ijklmab}y_{mab}&=-8\lsp S_{3,ijkl}\text{Tr}(y_iy_jy_ky_l)+\tfrac{1}{2} S_{4,ijkl}Z_{im}\lambda_{mjkl}\,.
\end{aligned}
\end{equation}
Summing these and using the fact that $\beta^\lambda=0$ at the fixed point we arrive at the eigenvalue equation
\begin{equation}
    S_{iJ}v_J=\varepsilon\lsp v_i\,.
\end{equation}
Hence, $v_I$ is an eigenvector of the stability matrix with eigenvalue $\varepsilon$. Though the proof was performed for the two-component Majorana theory, it is also valid for the Weyl theory with only minor modifications to account for the fact that one now has both $y$ and $y^*$ couplings. In the case of decoupled theories, there will be multiple $\kappa=\varepsilon$ eigenvectors, one for each of the decoupled theories.

There may also exist Yukawa eigenvectors of the stability matrix with $\kappa=0$, which are related to the breaking of the $O(N_s)$ and $O(N_f)$ symmetries. For the $O(N_f)$ symmetry, define the vector
\begin{equation}
    v_I=\begin{pmatrix}0 \\ \omega_{ac}y_{icb}+\omega_{bc}y_{iac}\end{pmatrix},
\end{equation}
where $\omega_{ab}$ is antisymmetric and thus exists in the Lie Algebra of $O(N_f)$. Then, notice that because of the antisymmetry of $\omega$, terms arising from contracted fermionic indices in the beta functions will cancel pairwise, for
\begin{equation}
    \frac{dy_{iab}}{dy_{kde}}y_{jbc}v_{kde}=\omega_{ae}y_{ieb}y_{jbc}+\omega_{ce}y_{iab}y_{jbe}\,.
\end{equation}
Consequently, to determine the action of $S_{IJ}$ on $v_I$ we simply need to attach a factor of $\omega$ to each free fermionic index. We see immediately that
\begin{equation}
    S_{iJ}v_J=0\,,
\end{equation}
as all of the fermionic indices in $S_{iJ}$ are summed over in traces. For $\beta^y$, we get two separate $\omega$ terms
\begin{equation}
\begin{aligned}
    S_{iabjcd}v_{jcd}&=\omega_{ae}\big(-\tfrac{1}{2}\varepsilon\lsp y_{ieb}+\tfrac{1}{2}(y_{jec}y_{jcd}y_{idb}+y_{iec}y_{jcd}y_{jdb}+4\lsp y_{jec}y_{icd}y_{jdb}+Z_{ij}y_{jeb})\big)\\&\quad+\omega_{be}\big(-\tfrac{1}{2}\varepsilon\lsp y_{iae}+\tfrac{1}{2}(y_{jac}y_{jcd}y_{ide}+y_{iac}y_{jcd}y_{jde}+4\lsp y_{jac}y_{icd}y_{jde}+Z_{ij}y_{jae})\big),
    \end{aligned}
\end{equation}
which one can see vanishes when applying the fact that $\beta^y=0$ twice. If $y_{iab}$ is $O(N_f)$ invariant, then $\omega_{ac}y_{icb}+\omega_{bc}y_{iac}=0$ for all $\omega$ by virtue of an infinitesimal $O(N_f)$ rotation. Thus, non-zero $v_I$ indicate the presence of broken symmetry, with certain $O(N_f)$ generators acting non-trivially on the $y$'s. Similarly, one can define the vector
\begin{equation}
    u_I=\begin{pmatrix}\omega_{im}\lambda_{mjkl}+\omega_{jm}\lambda_{imkl}+\omega_{km}\lambda_{ijml}+\omega_{lm}\lambda_{ijkm} \\ \omega_{ij}y_{jab}\end{pmatrix},
\end{equation}
where now the antisymmetric $\omega_{ij}$ act as generators of the scalar $O(N_s)$ symmetry. One can see that the antisymmetry of $\omega$ will again, up to applications of the beta function equations, lead to vanishing eigenvalue in exactly the same way as for $v_I$. That is,
\begin{equation}
    S_{IJ}u_{J}=0\,.
\end{equation}
Here, non-zero $u_I$ demonstrate the presence of broken $O(N_s)$ generators, so that by examining the form of the eigenvectors with $\kappa=0$ at a fixed point one can see how the $O(N_s)$ and $O(N_f)$ symmetries are broken and thus determine the remaining symmetry group.

\subsubsection{RG stability} \label{stabilitysection}
The fact that $\beta^y$ has no dependence on the scalar coupling $\lambda$, at least at one loop, means that to this order the problem of finding fixed points in scalar-fermion theories splits into two separate problems: first, one can find all possible fixed points in $y$ by considering $\beta^y$, and then one can insert these $y$'s into $\beta^\lambda$ as constants to find the corresponding values of $\lambda$. In practice this splitting was not needed to determine the fixed points with numerics, but this observation allows one to observe quite simply some crucial analytic properties of the system.

In the purely scalar case, one can capture the behaviour of the beta functions and the stability matrix by writing down the $O(N_s)$ scalar $A$-function for the system\cite{Rychkov:2018vya},
\begin{equation}
    A=-\tfrac{1}{2}\varepsilon\lambda_{ijkl}\lambda_{ijkl}+\lambda_{ijkl}\lambda_{klmn}\lambda_{mnij}\,,
    \label{eq:scalara}
\end{equation}
which is defined such that
\begin{equation}
    \beta^\lambda_{ijkl}=\frac{\delta A}{\delta \lambda_{ijkl}}\,.
\end{equation}
The stability matrix is then given by the Hessian of the $A$-function in coupling space. Determining fixed points and their stability then simply becomes an exercise in extremising the $A$-function and classifying the various extrema. The uniqueness of stable fixed points in the purely scalar system is thus guaranteed by a theorem of Michel which states that the $A$-function given in (\ref{eq:scalara}) has a unique minimum\cite{Michel:1983in}.

Let us then suppose that we have previously determined the possible values of $y$ by solving the $\beta^y=0$ equation, and wish now to find the allowed $\lambda$'s for one specific $y$ of our choice. As we have, in a sense, now just simply added constants to the purely scalar beta function, we can again write down an $A$-function which generalises (\ref{eq:scalara}) to the case of non-zero fermions. Here, it is important to note that now the Hessian of $A$ does not describe the full stability matrix for the theory, but only the part for perturbations with purely scalar operators, labeled by $H$ in (\ref{eq:stabilitymatrixform}). The $A$-function which generates (\ref{eq:betalambda}) is
\begin{equation}
    A_y(\lambda)=-\tfrac{1}{2}\varepsilon\lsp(\lambda,\lambda)+(\lambda,\lambda,\lambda)-4\lsp(X,\lambda)+(\lambda,\lambda)_Z\,,
    \label{eq:afunctiondef}
\end{equation}
where we have defined the notation
\begin{equation}
    \begin{aligned}
        (a,b)&=a_{ijkl}b_{ijkl}\,,\\
        (a,b,c)&=a_{ijkl}b_{klmn}c_{mnij}\,,\\
        (a,b)_Z&=Z_{ij}a_{iklm}b_{jklm}\,.
    \end{aligned}
\end{equation}
The only difference here between the Weyl and two-component Majorana theories is which value of $\alpha$ one uses in the definition of $X_{ijkl}$ and $Z_{ij}$, (\ref{eq:zweyl}). As this is the only distinction, we will not specify which theory we are working with, as the following theorem and proof are identical in both cases. If $\Lambda_y(H)$ is the set of couplings $(\lambda,y)$ invariant under some subgroup $H<O(N_s)$ of scalar rotations, then we then find the following generalisation of Michel's theorem:
\begin{theorem}
If $(\lambda_1,y)$ and $(\lambda_2,y)$ are two non-identical fixed points which lie in $\Lambda_y(H)$ such that $A_y(\lambda_1)>A_y(\lambda_2)$, then $(\lambda_1,y)$ cannot be stable against perturbations within $\Lambda_y(H)$ at one loop.\footnote{Note that our proof does not apply if $A_y(\lambda_1)=A_y(\lambda_2)$.}
\end{theorem}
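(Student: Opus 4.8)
The plan is to exploit the fact, established above, that at fixed $y$ the scalar fixed points are exactly the critical points of the cubic functional $A_y(\lambda)$, whose Hessian in coupling space is precisely the scalar block $H$ of the stability matrix in \eqref{eq:stabilitymatrixform}. Since $(X,\lambda)$ is linear, $(\lambda,\lambda)$ and $(\lambda,\lambda)_Z$ are quadratic, and $(\lambda,\lambda,\lambda)$ is cubic in $\lambda$, the function $A_y$ is a genuine cubic polynomial in the scalar couplings. Following Michel's original argument, the strategy is to restrict $A_y$ to the straight line segment joining the two fixed points and to read off the sign of the Hessian at $\lambda_1$ from the global shape of the resulting one-variable cubic.

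Concretely, I would set $\Delta_{ijkl}=(\lambda_2)_{ijkl}-(\lambda_1)_{ijkl}$ and define $f(t)=A_y(\lambda_1+t\,\Delta)$ for $t\in[0,1]$, so that $f$ is a cubic in $t$ with $f(0)=A_y(\lambda_1)$ and $f(1)=A_y(\lambda_2)$. Because $\beta^\lambda=\delta A_y/\delta\lambda$ vanishes at both fixed points, the chain rule gives $f'(0)=\beta^\lambda(\lambda_1)\!\cdot\!\Delta=0$ and $f'(1)=\beta^\lambda(\lambda_2)\!\cdot\!\Delta=0$. Writing $f(t)=a\,t^3+b\,t^2+c\,t+d$, these two conditions force $c=0$ and $b=-\tfrac32 a$, whence $f''(0)=2b=-3a$ and $f(1)-f(0)=a+b+c=-\tfrac12 a$. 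The hypothesis $A_y(\lambda_1)>A_y(\lambda_2)$ says $f(1)-f(0)<0$, i.e. $a>0$, and therefore $f''(0)=-3a<0$.

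The final step is to recognise $f''(0)$ as the Hessian quadratic form of $A_y$ at $\lambda_1$ evaluated on the direction $\Delta$, that is, as $\Delta^{T}H\,\Delta$ with $H$ the scalar block of the stability matrix. Since $\Lambda_y(H)$ is a linear subspace (invariance under $H<O(N_s)$ is a linear constraint) containing both $\lambda_1$ and $\lambda_2$, the displacement $\Delta$ is itself an admissible $H$-preserving scalar perturbation, and the block-triangular form of $S_{IJ}$ guarantees that pure scalar perturbations are governed entirely by $H$. Exhibiting a direction within $\Lambda_y(H)$ along which $H$ is negative shows $H$ is not positive definite on that subspace, so $(\lambda_1,y)$ possesses a relevant (repulsive) direction and is unstable. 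The main point to get right — and the reason for the footnote excluding $A_y(\lambda_1)=A_y(\lambda_2)$ — is the chain of signs: the strict inequality is precisely what makes $a\neq0$ and hence $f''(0)\neq0$, whereas in the degenerate case $a=0$ the cubic collapses, the Hessian vanishes along $\Delta$, and the argument yields no information. I expect no serious obstacle beyond this bookkeeping, the only genuinely delicate point being the identification of $f''(0)$ with the stability matrix restricted to $\Lambda_y(H)$ under the paper's convention, in which stability (all IR-attractive directions) corresponds to a positive-definite $H$.
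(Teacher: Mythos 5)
Your proof is correct, and it establishes exactly the paper's statement, but the execution is genuinely different from (and more elementary than) the paper's. Both arguments are Michel-type: they test the Hessian of $A_y$ at $\lambda_1$ along the direction joining the two fixed points and tie its sign to $A_y(\lambda_1)-A_y(\lambda_2)$. The paper does this by expanding $A_y(\lambda_1+s\lambda_1+t\lambda_2)$ in the plane spanned by $\lambda_1,\lambda_2$, computing the $2\times2$ Hessian explicitly in terms of the invariants $A_i'=-6A_y(\lambda_i)$, $B$ and $(X,\lambda_1)$ (which requires the fixed-point identity $(\beta^\lambda,\lambda)=0$), and then splitting $H=J+K$ so that the null vector $(1,-1)^T$ of $K$ yields $v^T H v=A_1'-A_2'<0$; note that this $v$ corresponds precisely to your direction $-\Delta$. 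You instead restrict $A_y$ to the line $\lambda_1+t\Delta$ and use only the universal algebra of a one-variable cubic with critical points at both endpoints, which in fact gives the clean identity
\begin{equation}
    f''(0)=6\big(f(1)-f(0)\big)=6\big(A_y(\lambda_2)-A_y(\lambda_1)\big)<0\,,
\end{equation}
with no invariant bookkeeping at all. Your route buys three things: it is shorter; it makes the footnote's exclusion $A_y(\lambda_1)=A_y(\lambda_2)$ completely transparent (it is exactly the degeneration of the cubic's leading coefficient along the line, where the directional second derivative vanishes and no conclusion follows); and it makes the restriction to $\Lambda_y(H)$ manifestly harmless, since that set is a linear subspace containing the whole line, so $\Delta$ is an admissible perturbation. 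What the paper's heavier computation buys is the explicit form of the Hessian in terms of physical invariants, which could be reused for further analysis, but it is not needed for the theorem. One cosmetic remark: you conclude that $H$ is ``not positive definite,'' whereas the stability criterion as formulated in the paper requires positive semidefiniteness; since you exhibit a strictly negative direction, your argument rules that out as well, so nothing is missing.
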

\begin{proof}
At a fixed point, we will have that
\begin{equation}
    (\beta^\lambda,\lambda)=0=-\varepsilon\lsp(\lambda,\lambda)+3\lsp(\lambda,\lambda,\lambda)-4\lsp(X,\lambda)+2\lsp(\lambda,\lambda)_Z\,,
\end{equation}
so that the $A$-function will thus be reduced to
\begin{equation}
    A_*(\lambda)=-\tfrac{1}{6}\varepsilon\lsp(\lambda,\lambda)-\tfrac{8}{3}\lsp(X,\lambda)+\tfrac{1}{3}\lsp(\lambda,\lambda)_Z\,.
    \label{eq:afixed}
\end{equation}
Then, suppose that we have two distinct fixed points $(\lambda_1,y)$ and $(\lambda_2,y)$ with $A_y(\lambda_2)< A_y(\lambda_1)$, where by distinct we simply mean that $\lambda_1$ and $\lambda_2$ are unequal tensors. The fixed point $(\lambda_1,y)$ will only be stable against perturbations towards $(\lambda_2,y)$ if the Hessian, $H$, of $A(\lambda_1)$ is positive-semidefinite in the $\lambda_1$-$\lambda_2$ plane. Expanding $A(\lambda_1+s\lambda_1+t\lambda_2)$ and collecting terms of orders $s^2$, $st$ and $t^2$, we find that at $\lambda_1$ this Hessian is
\begin{equation}
    H=\begin{pmatrix}
    A'_1-8\lsp(X,\lambda_1) & B-8\lsp(X,\lambda_1) \\ B-8\lsp(X,\lambda_1) & 2\lsp B-A'_2-8\lsp(X,\lambda_1)
    \end{pmatrix},
    \label{eq:hessian}
\end{equation}
where we have defined the $O(N_s)$ invariants
\begin{equation}
    A'_i=-6A_y(\lambda_i)\,,\qquad
    B=\varepsilon\lsp(\lambda_1,\lambda_2)+8\lsp(X,\lambda_1+\lambda_2)-2\lsp(\lambda_1,\lambda_2)_Z\,.
\end{equation}
Notice that we can separate $H$ as $H=J+K$ for
\begin{equation}
    J=\begin{pmatrix} A'_1 & B \\ B & 2B-A'_2
    \end{pmatrix}, \qquad
    K=-8\lsp(X,\lambda_1)\begin{pmatrix}1 & 1 \\ 1 & 1\end{pmatrix}.
\end{equation}
As the matrix $K$ has an eigenvector $v=\begin{pmatrix}
    1 & -1
\end{pmatrix}^T$ with eigenvalue $0$, if $H$ is to be positive-semidefinite we must have that
\begin{equation}
    v^TJv=A_1'-A_2'\geq0\,.
\end{equation}
However, as $A_y(\lambda_2)< A_y(\lambda_1)$, $A_1'<A_2'$, so that $v^TJv<0$ and thus $H$ is not positive semi-definite. Hence, $\lambda_1$ will not be a stable fixed point. 
\end{proof}

It is important to notice that, unlike in the purely scalar case, the mixed scalar-fermion theory may have multiple stable fixed points, as the theorem only applies when the fixed points have the same solution for the Yukawa couplings. The different allowed values of $y$ give us different `levels' of fixed points, each of which may have its own stable fixed point.

\subsection{Bounds}
The beta function equations allow one to derive various inequalities that must be obeyed by $O(N_s)\times O(N_f)$ invariants at the fixed points. We define the invariants
\begin{equation}
    \begin{aligned}
        a_0&=N_s(N_s+2)d_0=\lambda_{iijj}\,,\\
        a_1&=\lambda_{iimn}\lambda_{jjmn}\,,\\
        a_2&=||(N_s+4)d_2||^2=a_1-\frac{1}{N_s}a_0^2\,,\\
        S&=\lambda_{ijkl}\lambda_{ijkl}=||\lambda||^2\,,\\
        b_0&=Z_{ii}=N_sZ_0\,,\qquad \tilde{b}_0=a_0Z_0\,,\\
        b_1&=||Z_2||^2\,,\\
        b_2&=||(N_s+4)d_2+Z_2||^2=a_2+b_1+2(N_s+4)d_2\cdot Z_2\,,\\
        b_3&=X_{iijj}=N_s(N_s+2)X_0\,,\\
        Y&=||y_iy^*{\hspace{-5pt}}_i\llsp ||^2=\Tr(y_{i}y^*{\hspace{-5pt}}_{i}y_{j}y^*{\!\!\lnsp}_{j})\,.
    \end{aligned}
    \label{eq:invariants}
\end{equation}
We can then derive relations between these invariants by considering contractions of the beta functions (\ref{eq:betalambda}) and (\ref{eq:betay}). By considering an expansion of $\lambda$ and $y$ in terms of $\varepsilon$ we can ignore factors of $\varepsilon$ at leading order.

Considering first $\beta^\lambda$, we find that by contracting indices in pairs in this beta function we derive the following expression which must hold at all fixed points:
\begin{equation}
\beta^{\lambda}_{iijj}=-\varepsilon\lsp a_0+a_1+2S-4\lsp b_3+2\lsp \tilde{b}_0+2(N_s+4)d_2\cdot Z_2=0\,.
\end{equation} 
Expanding the inner product in the invariant $b_2$ and then replacing $a_2$ in terms of $a_1$ and $a_0$ we find
\begin{equation}
    a_1+2(N_s+4)d_2\cdot Z_2=b_2-b_1+\frac{1}{N_s}a_0^2\,,
\end{equation}
so that we have
\begin{equation}
    -\frac{1}{N_s}(a_0^2-N_s\lsp\varepsilon\lsp a_0)=2 S+b_2-b_1-4\lsp b_3+2\lsp \tilde{b}_0\,.
    \label{eq:four}
\end{equation}
Completing the square on the left-hand side this immediately gives us the inequality
\begin{equation}
    S+\tfrac{1}{2}\lsp b_2-\tfrac{1}{2}\lsp b_1-2\lsp b_3+\tilde{b}_0\leq\tfrac{1}{8}N_s\lsp\varepsilon^2\,.
    \label{eq:three}
\end{equation}

Similarly, at a fixed point we must have $\beta^{y}_{iab}y^*_{iba}=0$, so that
\begin{equation}
    \frac{1}{\alpha}\varepsilon\lsp Z_{ii}=\frac{1}{\alpha}\varepsilon\lsp b_0=2\lsp Y+4\Tr( y_{j}y^*{\hspace{-5pt}}_{i}y_{j}y^*{\hspace{-5pt}}_{i})+\frac{1}{\alpha}Z_{ij}Z_{ij}=2\lsp Y+4\Tr( y_{j}y^*{\hspace{-5pt}}_{i}y_{j}y^*{\hspace{-5pt}}_{i})+\frac{1}{\alpha N_s}b_0^2+\frac{1}{\alpha}b_1\,.
    \label{eq:one}
\end{equation}
To express the remaining trace term in terms of more familiar invariants, notice that
\begin{equation}
    \frac{4}{\alpha}b_3=4\Tr(y_{i}y^*{\hspace{-5pt}}_{i}y_{j}y^*{\!\!\!}_{j}+y_{j}y^*{\hspace{-5pt}}_{i}y_{j}y^*{\hspace{-5pt}}_{i}+y_{i}y^*{\!\!\!}_{j}y_{j}y^*{\hspace{-5pt}}_{i})=8\lsp Y+4\Tr(y_{j}y^*{\hspace{-5pt}}_{i}y_{j}y^*{\hspace{-5pt}}_{i})\,,
\end{equation}
where the factor of $\alpha$ in this expression comes from the fact that there are twice as many terms needed to symmetrise $\text{Tr}(y_iy^*{\!\!\!}_jy_{k}y^*{\!\!\!}_l)$ in the Weyl case. Plugging this in, we see that
\begin{equation}
    -\frac{1}{\alpha N_s}(b_0^2-N_s\lsp\varepsilon\lsp b_0)=-6\lsp Y+\frac{4}{\alpha}\lsp b_3+\frac{1}{\alpha}\lsp b_1\,.
    \label{eq:five}
\end{equation}
We now see that by completing the square on the left-hand side we can turn this into the bound
\begin{equation}
    b_1+4\lsp b_3-6\lsp\alpha\lsp Y=\frac{1}{4}N_s\lsp\varepsilon^2-\frac{1}{N_s}\Big(b_0-\frac{1}{2}N_s\lsp\varepsilon\Big)^2\leq\frac{1}{4}N_s\lsp\varepsilon^2\,.
    \label{eq:two}
\end{equation}
We may obtain a single inequality relating both fermionic and scalar invariants by eliminating $b_1$ and $b_3$ from  (\ref{eq:two}) and (\ref{eq:three}):
\begin{equation}\label{simplebound}
S+\tfrac{1}{2}\lsp b_2+\tilde{b}_0-3\lsp \alpha\lsp Y\leq
    \tfrac{1}{4}N_s\lsp\varepsilon^2\,.
\end{equation}

It would be simpler to have an inequality relating $S$, the norm of $\lambda$, and $Y$, the norm of $y_iy^*{\hspace{-5pt}}_i$, only. In fact, there is a way to remove the possibly negative $\tilde{b}_0$ term from \eqref{simplebound}. To see this, notice that the $\tilde{b}_0$ term in (\ref{eq:four}) means that there is an additional term proportional to $a_0$, and thus an additional way to complete the square. Moving this term over to the other side we have that
\begin{equation}
    S+\frac{1}{2}\lsp b_2-\frac{1}{2}\lsp b_1-2\lsp b_3=-\frac{1}{2N_s}(a_0^2-(N_s\lsp\varepsilon-2\lsp b_0)a_0)=\frac{1}{2N_s}\bigg(b_0-\frac{1}{2}N_s\lsp\varepsilon\bigg)^2-\frac{1}{2N_s}\bigg(a_0+b_0-\frac{1}{2}N_s\lsp\varepsilon\bigg)^2\,,
    \label{eq:newsquare}
\end{equation}
and combining with \eqref{eq:two} we get
\begin{equation}\label{curve1}
    S+\frac12\lsp b_2-3\lsp\alpha\lsp Y=\frac{1}{8}N_s\lsp\varepsilon^2-\frac{1}{2N_s}\Big(a_0+b_0-\frac{1}{2}N_s\lsp\varepsilon\Big)^2\,,
\end{equation}
which implies
\begin{equation}\label{simplestbound}
    S+\tfrac12\lsp b_2-3\lsp\alpha\lsp Y\leq \tfrac{1}{8}N_s\lsp\varepsilon^2\,.
\end{equation}
This bound directly generalises the bound obtained in~\cite{Osborn:2020cnf, Hogervorst:2020gtc}, which reads
\begin{equation}\label{scalarbound}
    S+\tfrac12\lsp a_2\leq\tfrac18 N_s\lsp\varepsilon^2\qquad \text{(scalars only)}\,.
\end{equation}
As we observe, the only difference with the scalar case is to replace $a_2$ with $b_2-6\lsp\alpha\lsp Y$. If we have only scalars or the fermions are decoupled, then $b_2$ is equal to $a_2$ and $Y$ is zero, so that \eqref{simplestbound} reduces to \eqref{scalarbound} in those cases. Note that $b_2$ can be omitted from the left-hand side of \eqref{simplestbound} since it is positive, which gives the bound \eqref{bintro} discussed in the introduction (when $\alpha=2$).

Additionally, using \eqref{eq:newsquare} and \eqref{eq:two} (or, equivalently, substituting \eqref{eq:tensordecomposition} in \eqref{eq:betalambda} and using \eqref{eq:five}) we find
\begin{equation}\label{a0b0bound}
    (a_0+b_0)\lsp\varepsilon-\frac{1}{N_s}\bigg(\frac{N_s+8}{N_s+2}a_0^2+b_0^2\bigg)-\frac{2}{N_s}a_0b_0+6\lsp\alpha\lsp Y\geq0\,,
\end{equation}
which becomes the bound $a_0\lsp\varepsilon-\frac{N_s+8}{N_s(N_s+2)}a_0^2\geq 0\Rightarrow 0\leq a_0\leq \frac{N_s(N_s+2)}{N_s+8}\lsp\varepsilon$ when there are no fermions. Obviously from \eqref{a0b0bound} we also have that 
\begin{equation}\label{a0b0bound2}
    (a_0+b_0)\lsp\varepsilon-\frac{2}{N_s}a_0b_0+6\lsp\alpha\lsp Y\geq0\,.
\end{equation}

We would also like to point out here that with
\begin{equation}
    T=S+\tfrac12\lsp b_2-3\lsp\alpha\lsp Y\,,\qquad R=\frac{1}{\sqrt{2N_s}}\Big(a_0+b_0-\frac12N_s\lsp\varepsilon\Big)\,,
\end{equation}
we may write \eqref{curve1} as
\begin{equation}
    R^2+T=\tfrac18N_s\lsp\varepsilon^2\,,
    \label{parabolicbound}
\end{equation}
which defines a parabola on which all fixed points lie. Since $b_2\geq 0$, if we define $T'=T-\frac12 b_2\leq T$, then the points in the $R$-$T'$ plane corresponding to fixed points will satisfy $R^2+T'\leq \tfrac18N_s\lsp\varepsilon^2$. Note that fixed points that saturate the bound on $R^2+T'$ do not necessarily have a unique quadratic invariant.

\subsubsection{Anomalous dimensions}
From the beta functions (\ref{eq:betalambda}) and (\ref{eq:betay}) one can also immediately write down bounds limiting the allowed anomalous dimensions of the renormalised fields $\phi$ and $\psi$.\footnote{While these bounds apply even to non-interacting fixed points, in that case they merely express the fact that purely scalar theories have a vanishing one loop anomalous dimension. For purely scalar theories one can bound the two-loop anomalous dimension instead by $\gamma_\phi\leq\frac{1}{8}N_s\varepsilon^2$\cite{Osborn:2020cnf}.} Consider the following combination of couplings, which must vanish at all of the fixed points:
\begin{equation}
    \alpha\lsp\beta^y_{iab}y^*_{iab}=-\frac{1}{2}\varepsilon\lsp b_0+\frac{1}{2}Z_{ij}Z_{ji}+\text{Tr}\big((y_iy^*{\!\!\!}_j+y_jy^*{\hspace{-5pt}}_i)(y_iy^*{\!\!\!}_j+y_jy^*{\hspace{-5pt}}_i)\big)-\Tr(y_iy^*{\!\!\!}_jy_jy^*{\hspace{-5pt}}_i)=0\,.
\end{equation}
Let us assume that
\begin{equation}
    \text{Tr}\big((y_iy^*{\!\!\!}_j+y_jy^*{\hspace{-5pt}}_i)(y_iy^*{\!\!\!}_j+y_jy^*{\hspace{-5pt}}_i)\big)-\Tr(y_iy^*{\!\!\!}_jy_jy^*{\hspace{-5pt}}_i)\geq0
    \label{assumption}
\end{equation}
for all fixed point solutions $y_{iab}$, so that
\begin{equation}
    -\varepsilon\lsp b_0+Z_{ij}Z_{ji}\leq 0\,.
\end{equation}
As $Z_{ij}$ is a symmetric matrix, it will always be possible to use an $O(N_s)$ field redefinition to place it in a diagonal form.\footnote{Though this transformation is always possible, we will find it more useful to leave $Z_{ij}$ general in most cases, because for any specific theory the necessary form of the rotation matrix $R_{ij}$ will not be known until a solution has been found.} Using the fact that $Z_{ij}$ is positive definite, the above inequality reduces to $\varepsilon\lsp b_0\geq b_1+\frac{1}{N_s}b_0^2$ which implies that
\begin{equation}
    0\leq b_0\leq N_s\lsp\varepsilon\,.
\end{equation}
From (\ref{eq:anomalousdimensions}) this becomes a bound on the allowed anomalous dimensions of the scalar fields $\phi_i$,
\begin{equation}
    \gamma_\phi\leq\tfrac{1}{4}\alpha\lsp N_s\lsp\varepsilon\,.
\end{equation}

Subject to the assumption \eqref{assumption}, this bound on $\gamma_\phi$ applies universally to all scalar-fermion theories with a Yukawa interaction, at least to one-loop order. This bound can also be related to a bound for the allowed anomalous dimensions of the fermion fields $\psi_a$ by noting that
\begin{equation}
    \text{Tr}\big(\gamma^\psi\big)=\frac{1}{2}\alpha\lsp b_0\leq\frac{1}{2}\alpha\lsp N_s\lsp \varepsilon\,.
\end{equation}
As the unitary bound prohibits negative anomalous dimensions, this becomes the bound
\begin{equation}
    \gamma_\psi\leq\tfrac{1}{2}\alpha\lsp N_s\lsp\varepsilon\,,
\end{equation}
which, again, is applicable to all scalar-fermion theories at leading order in perturbation theory.

These derivations rest on the assumption (\ref{assumption}), the validity of which we must now examine. When $N_s=1$, the left-hand side of \eqref{assumption} becomes $3\Tr(yy^*yy^*)$,
which is the norm of the matrix $yy^*$, and is thus strictly non-negative. For $N_f=1$, the trace will be trivial so that the left-hand side of \eqref{assumption} is equal to $3\lsp|y_i|^2|y_j|^2$,
which is again strictly non-negative. Thus, the bounds on anomalous dimensions must hold if $N_s=1$ or $N_f=1$. When $N_s,N_f>1$, the situation becomes more difficult. While we have not been able to find an analytic proof of (\ref{assumption}) for the general case, we have verified numerically that it holds at all fixed points we have found for $N_s,N_f\leq4$. It thus seems likely that (\ref{assumption}) holds for solutions to the beta functions with arbitrary $N_s$ and $N_f$, so that our bounds on anomalous dimensions become general.

\section{Some well-known models}\label{sec:wellknown}
Up until this point the discussion has been completely general, with no mention made of specifying the global symmetry group $G\leq O(N_s)\times O(N_f)$ for certain theories. While the majority of the fixed points we will find in our brute-force search will not have been specifically investigated, we will locate a number of well-known and well-explored theories. Thus, for completeness we review  here two of the better known: the Gross--Neveu--Yukawa and Nambu--Jona-Lasinio--Yukawa models.

\subsection{Gross--Neveu--Yukawa model}\label{secGNY}
First, we consider the Gross--Neveu--Yukawa (GNY) model, which contains a single real scalar and $N_D$ Dirac fermions with the Lagrangian
\begin{equation}\label{lagGNY}
    \mathscr{L}_{\text{GNY}}= \tfrac{1}{2}\partial^\mu\phi\lsp\partial_\mu\phi + i\lsp \overbar{\Psi}_a\slashed{\partial}\Psi_a+ y\lsp\phi\overbar{\Psi}_a\Psi_a +\tfrac{1}{4!}\lambda \phi^4\,,
\end{equation}
for $a=1,\ldots,N_D$. As we are ultimately interested in deriving results in 3d, we will find it useful to define $N=4N_D$, which is the number of 3d Majorana fermions we will have in the end. This theory has a well-known interacting fixed point at which (taking $\varepsilon\rightarrow1$)
\begin{equation}
    y^2=\frac{1}{N+6}\,,\qquad \lambda=\frac{\sqrt{P_N}-N+6}{6(N+6)}\,,
    \label{gnyfp}
\end{equation}
for $P_N=N^2+132N+36$. Using (\ref{eq:diractoweyl}) and then (\ref{eq:weyltomajorana}), we see that in terms of 3d Majorana fermions we will obtain the Yukawa interaction
\begin{equation}
    y_{AB}=\begin{pmatrix}
         & & y\lsp\delta_{ab} & \textbf{0}\\
         &  & \textbf{0} &-y\lsp\delta_{ab}\\ y\lsp\delta_{ab} & \textbf{0} &  &  \\
      \textbf{0} & -y\lsp\delta_{ab} &   & 
    \end{pmatrix}\,,
\end{equation}
where $A,B=1,\ldots,N$. This squares to a multiple of the identity,
\begin{equation}
    y_{AC}\lsp y_{CB}=y^2\delta_{AB}\,,
\end{equation}
so that the invariants at this fixed point will be very simple:
\begin{equation}
\begin{aligned}
    b_0&=Ny^2=\frac{N}{N+6}\,,\\
    Y&=Ny^4=\frac{N}{(N+6)^2}\,,\\
    S&=\lambda^2=\frac{\big(\sqrt{P_N}-N+6\big)^2}{36(N+6)^2}\,,
\end{aligned}
\end{equation}
where we have implicitly assumed that $N_D$ is a positive integer so that $N\geq4$. In \cref{analyticsec} we will derive all of the interacting fixed points for $N_s=1$, $N_f=4$, and will thus find it useful to note that plugging in $N=4$ to the above equations yields
\begin{equation}
    S=\frac{(1+\sqrt{145})^2}{900}\,,\qquad Y=\frac{1}{25}\,,
\end{equation}
so that we will be able to identify the GNY fixed point for one Dirac fermion.

It has been suggested that one can derive information about theories with $N<4$ Majorana fermions in three dimensions by beginning with a non-integer number of Dirac fermions in four dimensions. Specifically, a connection has been drawn between the GNY model with $N_D=1/4$ and a minimal $\mathcal{N}=1$ supersymmetric theory of a real superfield in three dimensions with a cubic superpotential\cite{Fei:2016sgs}. In agreement with (\ref{gnyfp}) for $N=1$, this theory is expected to have a fixed point at
\begin{equation}
    y^2=\tfrac{1}{7}\,,\qquad\lambda=\tfrac{3}{7}\,.
\end{equation}
Plugging in $N=1$ to the above equations for $S$ and $Y$ yields
\begin{equation}
    S=\tfrac{9}{49}\,,\qquad Y=\tfrac{1}{49}\,.
\end{equation}

\subsection{Nambu--Jona-Lasinio--Yukawa model}\label{secNJLY}
The Nambu--Jona-Lasinio--Yukawa (NJLY) model provides a generalisation of the GNY model to two scalar fields $\phi_1$ and $\phi_2$ (though really $\phi_2$ is a pseudoscalar) and $N_D$ Dirac fermions $\Psi_a, a=1,\ldots,N_D$. Its Lagrangian is
\begin{equation}\label{lagNJLY}
    \mathscr{L}_{\text{NJLY}}= \tfrac{1}{2}\partial^\mu\phi_1\lsp\partial_\mu\phi_1 + \tfrac{1}{2}\partial^\mu\phi_2\lsp\partial_\mu\phi_2 + i\lsp \overbar{\Psi}_a\slashed{\partial}\Psi_a + y\lsp\overbar{\Psi}_a(\phi_1+i\lsp\gamma^5\phi_2)\Psi_a+\tfrac{1}{4!}\lambda(\phi_1^2+\phi_2^2)^2\,,
\end{equation}
which enjoys not only the $U(N_D)$ flavour symmetry of the fermions but also a $U(1)$ chiral symmetry which rotates both the scalars and the fermions. There is again a non-trivial interacting fixed point with, similarly defining $N=4N_D$,
\begin{equation}
    y^2=\frac{1}{N+4}\,,\qquad \lambda=\frac{\sqrt{R_N}-N+4}{20(N+4)}\,,
\end{equation}
where $R_N=N^2+152\lsp N+16$. When expanding the Dirac fermions in terms of 3d Majorana fermions, we see that the $\phi_1$ Yukawa interaction is simply that of the GNY model, so that
\begin{equation}
    y_{1AB}=\begin{pmatrix}
         & & y\lsp\delta_{ab} & \textbf{0}\\
         &  & \textbf{0} &-y\lsp\delta_{ab}\\ y\lsp\delta_{ab} & \textbf{0} &  &  \\
      \textbf{0} & -y\lsp\delta_{ab} &   & 
    \end{pmatrix}\,,
\end{equation}
where $A,B=1,\ldots,N$. Using the Weyl representation of the gamma matrices, the $\phi_2$ Yukawa interaction becomes, in terms of Weyl fermions,
\begin{equation}
    i\phi_2\overbar{\Psi}_a\gamma^5\Psi_a=i\phi_2(-\chi_a\eta_a+\overbar{\chi}^a\overbar{\eta}^a)\,.
\end{equation}
Applying (\ref{eq:diractoweyl}) and then (\ref{eq:weyltomajorana}), we have
\begin{equation}
    y_{2AB}=\begin{pmatrix}
        & & \textbf{0} & -y\lsp\delta_{ab}\\
        & & -y\lsp\delta_{ab} & \textbf{0}\\
        \textbf{0} & -y\lsp\delta_{ab} & &\\
        -y\lsp\delta_{ab} & \textbf{0} & & 
    \end{pmatrix}\,.
\end{equation}
As before, both of these matrices square to the same multiple of the identity
\begin{equation}
    y_{1AC}\lsp y_{1CB}=y_{2AC}\lsp y_{2CB}=y^2\delta_{AB}\,,
\end{equation}
yielding simple forms for the invariants we use to characterise fixed points:
\begin{equation}
\begin{aligned}
    b_0&=2Ny^2=\frac{2N}{N+4}\,,\\
    Y&=4Ny^4=\frac{4N}{(N+4)^2}\,,\\
    S&=24\lambda^2=\frac{3\big(\sqrt{R_N}-N+4\big)^2}{50(N+4)^2}\,,
\end{aligned}
\end{equation}
where we again have assumed that $N_D\in\mathbb{N}$. In order to identify the NJLY model in our fixed points, it will again be useful to plug $N=4$ into the above expressions for $S$ and $Y$. For future reference these values are
\begin{equation}
    S=\tfrac{3}{5}\,,\qquad Y=\tfrac{1}{4}\,.
    \label{NJLY1}
\end{equation}

As with the GNY model, one can consider taking $N_D$ to be non-integer to investigate theories with small numbers of Majorana fermions in three dimensions. For $N_D=1/2$, that is $N=2$, it is believed that the IR fixed point of the NLJY model is the same as that of a Wess--Zumino model for a four-component Majorana fermion with a cubic superpotential. Here, the $U(1)$ chiral symmetry becoming the Wess--Zumino model's $R$-symmetry\cite{Fei:2016sgs}. For $N=2$ one finds
\begin{equation}
    S=\tfrac{2}{3}\,,\qquad Y=\tfrac{2}{9}\,.
\end{equation}

\section{Analytic results for small \texorpdfstring{$\boldsymbol{N_s,N_f}$}{N\_s,N\_f}}\label{analyticsec}
As the scalar coupling $\lambda$ does not appear in $\beta^y$, we can always look for solutions with $y_{iab}=0$. However, this reduces the system to $N_s$ massless scalars with a general quartic interaction alongside $N_f$ decoupled free fermions. As this system has already been investigated in previous papers\cite{Osborn:2020cnf}, in what follows we will always assume that $y_{iab}$ is not identically zero. It is important to note that in both this and the following section we will assume for simplicity that the Yukawa coupling tensor $y_{iab}$ is real and symmetric on its $a,\,b$ indices, so that we will take $\alpha=1$. Using (\ref{eq:weyltomajorana}), one can rewrite any fixed point of (\ref{basicLag}) in terms of a real and symmetric $y_{iab}$ which will also satisfy (\ref{eq:betalambda},\,\ref{eq:betay}), so that this assumption comes at the loss of no generality. This amounts to working with 3d fermions in four dimensions, which at the level of beta functions simply changes the number of fermionic degrees of freedom in fermion loops. However, it is not always clear how one can embed these fermions in 4d Weyl or Dirac fermions. Unless otherwise indicated, we will now take $\varepsilon=1$.

We will not be interested in examining every solution to (\ref{eq:betalambda},\,\ref{eq:betay}), because there will be many solutions which are, in some ways, trivial. Given two fixed points with Lagrangians $\mathscr{L}_1$ and $\mathscr{L}_2$, one can quite easily see that the Lagrangian $\mathscr{L}_1+\mathscr{L}_2$ will also lie at a fixed point, though now with a larger number of scalars and fermions. As there is no interaction between the different sectors of this Lagrangian, this is not really a new fixed point, with the invariants (\ref{eq:invariants}) simply being the sum of the invariants of theories 1 and 2. Similarly, one could consider adding a number of free fermions or scalars to get the Lagrangian $\mathscr{L}_1+\mathscr{L}_{\text{free}}$. As the new fields will not interact with the old sector, this Lagrangian will once again be a solution to (\ref{eq:betalambda},\,\ref{eq:betay}) with precisely the same invariants as theory 1, but now with more fields. To remove these uninteresting cases, we must determine their signature and screen fixed points accordingly.

First, notice that for decoupled theories one can follow \cref{stabmatrixeigenvalues} to find a $\kappa=1$ eigenvalue for each decoupled sector, with the eigenvector being given by (\ref{eq:eig}) using only the couplings present in that sector. However, notice that the lack of a purely fermionic interaction necessitates that each decoupled sector must contain at least one scalar, else it will simply be a sector of free fermions. Thus, we must discard fixed points where the degeneracy of the $\kappa=1$ eigenvalue is greater than 1 and less than or equal to $N_s$. Second, to discard theories with free fermions we will simply need to screen out those which have a zero eigenvalue of $\gamma^\psi_{ab}$. For scalars, it will not be sufficient to look for zero eigenvalues of $\gamma^\phi_{ij}$ because to one-loop only the Yukawa interaction will contribute to the anomalous dimension.\footnote{While scalars which interact only with other scalars have zero one-loop anomalous dimension, this will be corrected by a two-loop result, where $\gamma^\phi_{ij}\supset\frac{1}{12}\lambda_{iklm}\lambda_{jklm}$.} Instead, from (\ref{eq:stabilitymatrixform}) one can see that free scalars will be marked by a column containing only $-1$ on the diagonal entry. Thus, to eliminate fixed points with free scalars we must remove those containing any $\kappa=-1$ eigenvalues.

An interesting feature of the beta functions becomes apparent when we examine fixed points analytically. Because $y_{iab}$ is an $O(N_s)$ vector, all of the Yukawa invariants we can construct, such as those in  (\ref{eq:invariants}), will contain even powers of $y$. If we take $y_{iab}$ to be diagonal in its fermionic $a$ and $b$ indices, which happens for example at fixed points which retain $U(N_f)$ or $O(N_f)$ symmetry for Weyl fermions and two-component Majorana fermions respectively, then the invariants will be sums of even powers of the diagonal elements and hence be unchanged if we invert the signs of a number of these elements. Inverting these signs also has no effect on the beta functions (\ref{eq:betalambda}) and (\ref{eq:betay}), and thus will also be a fixed point solution. However, these signs can alter the symmetry of the fixed point, and will thus alter the number of $\kappa=0$ eigenvalues of the stability matrix. 

Explicitly one can see this in the simplest case, where $N_s=1$ and $N_f=2$. Here, the Yukawa couplings become the matrix
\begin{equation}
    y=\begin{pmatrix}
        y_1 & y_2 \\ y_2 & y_3
    \end{pmatrix}\,,
\end{equation}
and their the beta-functions can be written as
\begin{equation}
    \beta^y=\begin{pmatrix}
        y_1\big(-\frac{1}{2}+\frac{7}{2}y_1^2+4y_2^2+\frac{1}{2}y_3^2\big)+3(y_1+y_2)y_2^2 & y_2\big(-\frac{1}{2}+\frac{7}{2}(y_1^2+y_3^2)+3y_1y_3+4y_2^2\big) \\
        y_2\big(-\frac{1}{2}+\frac{7}{2}(y_1^2+y_3^2)+3y_1y_3+4y_2^2\big) & y_3\big(-\frac{1}{2}+\frac{7}{2}y_3^2+4y_2^2+\frac{1}{2}y_1^2\big)+3(y_1+y_2)y_2^2
    \end{pmatrix}\,.
\end{equation}
Clearly, the off-diagonal entries can be solved by simply taking $y_2=0$, at which point one finds that
\begin{equation}
    y_1^2=y_3^2=\tfrac{1}{8}.
\end{equation}
Noting that for $y_2=0$,
\begin{equation}
    \text{Tr}(y^2)=y_1^2+y_3^2=\tfrac{1}{4}\,,\qquad\text{Tr}(y^4)=y_1^4+y_3^4=\tfrac{1}{32}\,,
\end{equation}
one finds that (\ref{eq:betalambda}) is solved by
\begin{equation}
    \lambda=\frac{1+\sqrt{19}}{12}\,,
\end{equation}
independently of how we assign signs to $y_1$ and $y_3$. We then have two distinct solutions: one in which we assign $y_1$ and $y_3$ the same sign and one in which they have opposite signs.\footnote{We can always use the field redefinition $\phi\rightarrow-\phi$ to take the sign of $y_1$ to be positive.} To see that these solutions are genuinely distinct, we write down their potentials:
\begin{equation}
    V_1(\phi,\psi)=\frac{1+\sqrt{19}}{288}\phi^4+\frac{1}{\sqrt{32}}\lsp\phi(\psi_1^2+\psi_2^2)\,,
    \label{eq:potential1}
\end{equation}
\begin{equation}
    V_2(\phi,\psi)=\frac{1+\sqrt{19}}{288}\phi^4+\frac{1}{\sqrt{32}}\lsp\phi(\psi_1^2-\psi_2^2)\,.
    \label{eq:potential2}
\end{equation}
By inspection, one can see that the first potential retains the full $O(2)$ fermionic rotation symmetry, while the second only retains a $\mathbb{Z}_2^2\rtimes\mathbb{Z}_2$ global symmetry. To verify this, one can examine the eigenvalues of the respective stability matrices to look for $\kappa=0$ eigenvalues arising due to broken symmetry generators. These eigenvalues are, respectively,
\begin{equation}
    \{\tfrac12\sqrt{19},\,1,\,\tfrac{3}{4},\,\tfrac{3}{4}\}\,,\qquad
    \{\tfrac12\sqrt{19},\,1,\,\tfrac{3}{4},0\}\,.
\end{equation}
One sees that, indeed, (\ref{eq:potential1}) is associated with no symmetry breaking while (\ref{eq:potential2}) sees the one-dimensional $O(2)$ symmetry broken down to some discrete subgroup. The peculiarity of this setup is the fact that the $\mathbb{Z}_2\times O(2)$ invariants one can construct will involve only even powers of the couplings $y_{ab}$, so that from the point of view of invariants such as $Y$ these two fixed points will be indistinguishable.\footnote{At higher loop order terms in the beta function will arise causing these fixed points to separate.} Crucially, this includes the $A$-function (\ref{eq:afunctiondef}), which will take the value
\begin{equation}
    A=-\frac{28+19\sqrt{19}}{864}
\end{equation}
at both fixed points. Examining the eigenvalues of the stability matrix, one notices that both fixed points are stable, indicating that there will be no RG flow connecting them.

Another explicit example can be seen in \cref{tab:14}. Examining section \ref{secGNY}, we see that for $N_s=1$, $N_f=4$ the $\big(S,Y\big)=\big(\frac{(1+\sqrt{145})^2}{900},\frac{1}{25}\big)$ fixed point arises from the GNY model with one Dirac fermion. Depending on whether or not chiral symmetry is broken by the interaction in the 4d Lagrangian, the symmetry in 3d may be either $O(4)$ or $O(2)^2\rtimes\mathbb{Z}_2$\cite{Jack:2023zjt,Erramilli:2022kgp,Kubota:2001kk}. The $O(3)\times\mathbb{Z}_2$ fixed point cannot arise from a Lorentz-invariant 4d Lagrangian, because there is no way to package the 3d fermions into 4d fermions consistent with both the flavour and spacetime symmetries.\footnote{The $N_s=1$ results analysed here for $N_f=2,4$ can be extended to a family of theories for every $N_f$. This was pointed out to us by H.\ Osborn.}

Also of interest is that for $N_f=1$ one is always able to use an $O(N_s)$ field redefinition so that $y_i=y\lsp\delta_{i1}$. The $\beta^y_i$ then reduces to
\begin{equation}
\beta^y=\big(-\tfrac{1}{2}\varepsilon+\tfrac12(6+\alpha)|y|^2\big)y=0\,,
\label{nfequal1}
\end{equation}
which only has a single non-zero solution. This single level will be verified in both the following analytic and numerical results.

To find analytic solutions to the beta functions, we follow the same procedure used to investigate RG stability. Namely, we solve (\ref{eq:betay}) first to find the allowed real Yukawa solutions, and then plug those into (\ref{eq:betalambda}) as constants to find the associated real solutions for the scalar coupling. Both beta functions will be polynomials in the components of the couplings, so that these problems are simply finding the shared roots of $\frac12 N_sN_f(N_f+1)$ and $\frac{1}{4!}N_s(N_s+1)(N_s+2)(N_s+3)$ polynomials respectively. We achieved this by using \textit{Mathematica's} \texttt{GroebnerBasis} function to find a simpler equivalent system of polynomials, which \textit{Mathematica's} \texttt{Solve} function was then able to dissect.\footnote{This was inspired by the discussion in \cite{434507}.} To identify the symmetries of the fixed point, we then extracted the anomalous dimensions $\{\gamma_s\}$ and $\{\gamma_f\}$ and the eigenvalues of the stability matrix $\{\kappa\}$. The results for two-component Majorana fermions with $(N_s,N_f)=(1,1)$, $(1,2)$, $(1,3)$, $(1,4)$, and $(2,1)$ are exhibited in \cref{tab:11,tab:12,tab:13,tab:14,tab:21}. Table \ref{tab:22} contains the fully interacting stable fixed point for $N_s=N_f=2$. This table also notes fixed points at which the invariants $a_2$ and $b_1$ are non-zero by use of stars, e.g.\ $\star,\,\star$ indicates a fixed point at which both are non-zero. Where these invariants do not vanish, there will be additional rank-two $O(N_s)$ tensors, so that $\phi_i\phi_i$ will not be the only quadratic singlet. All of these fixed points strictly obey the bound (\ref{simplestbound}), which can be easily seen by plotting them as in \cref{fig:11,fig:12,fig:13,fig:14,fig:21}. These plots also show the Ising fixed point for $N_s=1$, which is identical to the usual Ising fixed point but now with $N_f$ free fermions attached.

For equal values of $N_s$ and $N_f$, the theory  will be supersymmetric in three dimensions \cite{Fei:2016sgs,Liendo:2021wpo} as long as the scalar and Yukawa couplings obey the relation
\begin{equation}
\lambda_{ijkl}=S_{3,ijkl}y_{ijm}y_{mkl}\,.
    \label{eq:supersymmetry}
\end{equation}
Using this, one finds that the $N_s=N_f=1$ $\mathbb{Z}_2$ fixed point with $S=\frac{9}{49}$ and the stable $N_s=N_f=2$ $O(2)$ fixed point have emergent supersymmetry. From sections \ref{secGNY} and \ref{secNJLY} we see that these correspond to emergent supersymmetry in the $N=1$ GNY and $N=2$ NJLY model respectively.

\begin{table}[H]
\centering
\begin{tabular}{|c c c c c c c|} 
 \hline
Symmetry & $S$ & $Y$ & \makecell{$\#$ different \\[-3pt] $\gamma_\phi$(degeneracies)} & \makecell{$\#$ different \\[-3pt] $\gamma_\psi$(degeneracies)} & $\#\,\kappa<0$, =0 & $a_2,\,b_1\neq0$ \\ [0.5ex] 
 \hline
 $\mathbb{Z}_2$ & $\frac{16}{441}$ & $\frac{1}{49}$ & 1(1) & 1(1) & 1, 0 & \\ 
  $\mathbb{Z}_2$ & $\frac{9}{49}$ & $\frac{1}{49}$  & 1(1) & 1(1) & 0, 0 &\\ 
 [1ex] 
 \hline
\end{tabular}
\caption{The two interacting fixed points for $N_f=N_s=1$.}
\label{tab:11}
\end{table}

\begin{table}[H]
\centering
\begin{tabular}{|c c c c c c c|} 
 \hline
Symmetry & $S$ & $Y$ & \makecell{$\#$ different \\[-3pt] $\gamma_\phi$(degeneracies)} & \makecell{$\#$ different \\[-3pt] $\gamma_\psi$(degeneracies)} & $\#\,\kappa<0$, =0 & $a_2,\,b_1\neq0$ \\ [0.5ex] 
 \hline
  
 $\mathbb{Z}_2^2\rtimes\mathbb{Z}_2$ & $\frac{(1-\sqrt{19})^2}{144}$ & $\frac{1}{32}$ & 1(1) & 1(2) & 1, 1 &\\ 
 
$O(2)$ & $\frac{(1-\sqrt{19})^2}{144}$ & $\frac{1}{32}$ & 1(1) & 1(2) & 1, 0 &\\ 
$\mathbb{Z}_2^2\rtimes\mathbb{Z}_2$ & $\frac{(1+\sqrt{19})^2}{144}$ & $\frac{1}{32}$  & 1(1) & 1(2) & 0, 1 &\\ 
 
 $O(2)$ & $\frac{(1+\sqrt{19})^2}{144}$ & $\frac{1}{32}$  & 1(1) & 1(2) & 0, 0 &\\[1ex] 
 \hline
\end{tabular}
\caption{The four interacting fixed points for $N_s=1$ and $N_f=2$.}
\label{tab:12}
\end{table}

\begin{table}[H]
\centering
\begin{tabular}{|c c c c c c c|} 
 \hline
Symmetry & $S$ & $Y$ & \makecell{$\#$ different \\[-3pt] $\gamma_\phi$(degeneracies)} & \makecell{$\#$ different \\[-3pt] $\gamma_\psi$(degeneracies)} & $\#\,\kappa<0$, =0 & $a_2,\,b_1\neq0$ \\ [0.5ex] 
 \hline
 $O(2)\times\mathbb{Z}_2$ & $\frac{1}{9}$ & $\frac{1}{27}$ & 1(1) & 1(3) & 1, 2 &\\  
 
 $O(3)$ & $\frac{1}{9}$ & $\frac{1}{27}$ & 1(1) & 1(3) & 1, 0 &\\
 
 $O(2)\times\mathbb{Z}_2$ & $\frac{16}{81}$ & $\frac{1}{27}$  & 1(1) & 1(3) & 0, 2 &\\ 
 
 $O(3)$ & $\frac{16}{81}$ & $\frac{1}{27}$  & 1(1) & 1(3) & 0, 0 &\\[1ex] 
 \hline
\end{tabular}
\caption{The four interacting fixed points for $N_s=1$ and $N_f=3$.}
\label{tab:13}
\end{table}

\begin{table}[H]
\centering
\begin{tabular}{|c c c c c c c|} 
 \hline
Symmetry & $S$ & $Y$ & \makecell{$\#$ different \\[-3pt] $\gamma_\phi$(degeneracies)} & \makecell{$\#$ different \\[-3pt] $\gamma_\psi$(degeneracies)} & $\#\,\kappa<0$, =0 & $a_2,\,b_1\neq0$ \\ [0.5ex] 
 \hline
  $O(2)^2\rtimes\mathbb{Z}_2$ & $\frac{(1-\sqrt{145})^2}{900}$ & $\frac{1}{25}$ & 1(1) & 1(4) & 1, 4 &\\ 
  
 $O(3)\times\mathbb{Z}_2$ & $\frac{(1-\sqrt{145})^2}{900}$ & $\frac{1}{25}$ & 1(1) & 1(4) & 1, 3 &\\   
 
  $O(4)$ & $\frac{(1-\sqrt{145})^2}{900}$ & $\frac{1}{25}$ & 1(1) & 1(4) & 1, 0 &\\ 
  
   $O(2)^2\rtimes\mathbb{Z}_2$ & $\frac{(1+\sqrt{145})^2}{900}$ & $\frac{1}{25}$  & 1(1) & 1(4) & 0, 4 &\\ 

  $O(3)\times\mathbb{Z}_2$ & $\frac{(1+\sqrt{145})^2}{900}$ & $\frac{1}{25}$  & 1(1) & 1(4) & 0, 3 &\\ 

   $O(4)$ & $\frac{(1+\sqrt{145})^2}{900}$ & $\frac{1}{25}$ & 1(1) & 1(4) & 0, 0 &\\[1ex] 
 \hline
\end{tabular}
\caption{The six interacting fixed points for $N_s=1$ and $N_f=4$.}
\label{tab:14}
\end{table}

\begin{table}[H]
\centering
\begin{tabular}{|c c c c c c c|} 
 \hline
Symmetry & $S$ & $Y$ & \makecell{$\#$ different \\[-3pt] $\gamma_\phi$(degeneracies)} & \makecell{$\#$ different \\[-3pt] $\gamma_\psi$(degeneracies)} & $\#\,\kappa<0$, =0 & $a_2,\,b_1\neq0$ \\ [0.5ex] 
 \hline
 $\mathbb{Z}_2$ & $0.142072$ & $\frac{1}{49}$ & 2(1,1) & 1(1) & 4, 1 & $\star,\,\star$\\

  $\mathbb{Z}_2^2$ & $0.233467$ & $\frac{1}{49}$ & 2(1,1) & 1(1) & 2, 1 & $\star,\,\star$ \\ 
 
  $\mathbb{Z}_2^2$ & $0.296351$ & $\frac{1}{49}$  & 2(1,1) & 1(1) & 0, 1 & $\star,\,\star$ \\ [1ex] 
 \hline
\end{tabular}
\caption{The three interacting fixed points with $N_s=2$ and $N_f=1$.}
\label{tab:21}
\end{table}

\begin{table}[ht]
\centering
\begin{tabular}{|c c c c c c c|} 
 \hline
Symmetry & $S$ & $Y$ & \makecell{$\#$ different \\[-3pt] $\gamma_\phi$(degeneracies)} & \makecell{$\#$ different \\[-3pt] $\gamma_\psi$(degeneracies)} & $\#\,\kappa<0$, =0 & $a_2,\,b_1\neq0$ \\ [0.5ex] 
 \hline
 $O(2)$ & $\frac{2}{3}$ & $\frac{2}{9}$ & 1(2) & 1(2) & 0, 1 & \\ 
  [1ex] 
 \hline
\end{tabular}
\caption{Stable, fully interacting fixed point for $N_s=N_f=2$. Note that this point corresponds to the GNY model with $N=2$ and is supersymmetric. Further note that there is a stable fixed point consisting of two decoupled stable $\mathbb{Z}_2$ theories.}
\label{tab:22}
\end{table}

\begin{figure}[H]
\centering
\begin{tikzpicture}
\begin{axis}[
    xmin=0, xmax=0.2,
    ymin=0, ymax=0.025,
    xlabel= $S$,
    ylabel= $Y$,
    ylabel style={rotate=-90},
    xticklabel style={/pgf/number format/fixed, /pgf/number format/precision=3},
    xtick distance=0.025,
    yticklabel style={scaled ticks=false, /pgf/number format/fixed, /pgf/number format/precision=3},
    title={Scalar-Fermion Fixed Points for $N_s=1$ and $N_f=1$},
    legend pos = outer north east
]
\addplot+[
    only marks,
    mark=*,
    mark options={color=Dark2-A,fill=Dark2-A},
    mark size=2pt]
table{Datafiles/1,1/stable.dat};
\addplot+[
    only marks,
    mark=square*,
    mark options={color=Dark2-F,fill=Dark2-F},
    mark size=1.5pt]
table{Datafiles/1,1/semistable.dat};
\addplot+[
    only marks,
    mark=diamond*,
    mark options={color=Dark2-B,fill=Dark2-B},
    mark size=3pt]
table{Datafiles/1,1/unstable.dat};
\addplot[domain=0:2, 
    samples=2, 
    color=Dark2-C]{x/3-1/24};
\legend{Stable, Mixed Stability, Unstable, Bound on Invariants (\ref{simplestbound})}
\end{axis}
\end{tikzpicture}
 \caption{Analytic interacting fixed points for $N_s=N_f=1$ given by Table \ref{tab:11}. Note that the Ising fixed point ($Y=0$ while $S\neq0$) is not included in the table as it only has free fermions.}
    \label{fig:11}
\end{figure}
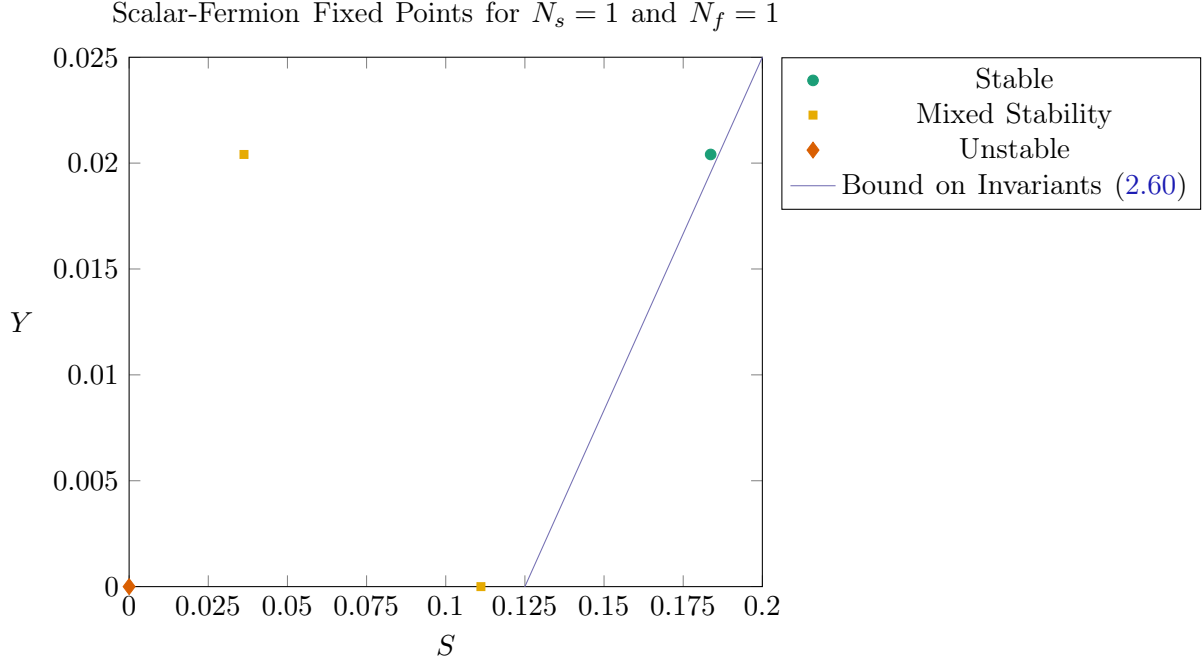

\begin{figure}[H]
\centering
\begin{tikzpicture}
\begin{axis}[
    xmin=0, xmax=0.22,
    ymin=0, ymax=0.04,
    xlabel= $S$,
    ylabel= $Y$,
    ylabel style={rotate=-90},
    xticklabel style={scaled ticks=false, /pgf/number format/fixed, /pgf/number format/precision=3},
    yticklabel style={scaled ticks=false, /pgf/number format/fixed, /pgf/number format/precision=3},
    title={Scalar-Fermion Fixed Points for $N_s=1$ and $N_f=2$},
    legend pos = outer north east
]
\addplot+[
    only marks,
    mark=*,
    mark options={color=Dark2-A,fill=Dark2-A},
    mark size=2pt]
table{Datafiles/1,2/stable.dat};
\addplot+[
    only marks,
    mark=square*,
    mark options={color=Dark2-F,fill=Dark2-F},
    mark size=1.5pt]
table{Datafiles/1,2/semistable.dat};
\addplot+[
    only marks,
    mark=diamond*,
    mark options={color=Dark2-B,fill=Dark2-B},
    mark size=3pt]
table{Datafiles/1,2/unstable.dat};
\addplot[domain=0:2, 
    samples=2, 
    color=Dark2-C]{x/3-1/24};
\legend{Stable, Mixed Stability, Unstable, Bound on Invariants (\ref{simplestbound})}
\end{axis}
\end{tikzpicture}
 \caption{Analytic interacting fixed points for $N_s=1$ and $N_f=2$ given by Table \ref{tab:12} along with the Ising fixed point.}
    \label{fig:12}
\end{figure}

\begin{figure}[H]
\centering
\begin{tikzpicture}
\begin{axis}[
    xmin=0, xmax=0.25,
    ymin=0, ymax=0.04,
    xlabel= $S$,
    ylabel= $Y$,
    ylabel style={rotate=-90},
    xticklabel style={scaled ticks=false, /pgf/number format/fixed, /pgf/number format/precision=3},
    yticklabel style={scaled ticks=false, /pgf/number format/fixed, /pgf/number format/precision=3},
    title={Scalar-Fermion Fixed Points for $N_s=1$ and $N_f=3$},
    legend pos = outer north east
]
\addplot+[
    only marks,
    mark=*,
    mark options={color=Dark2-A,fill=Dark2-A},
    mark size=2pt]
table{Datafiles/1,3/stable.dat};
\addplot+[
    only marks,
    mark=square*,
    mark options={color=Dark2-F,fill=Dark2-F},
    mark size=1.5pt]
table{Datafiles/1,3/semistable.dat};
\addplot+[
    only marks,
    mark=diamond*,
    mark options={color=Dark2-B,fill=Dark2-B},
    mark size=3pt]
table{Datafiles/1,3/unstable.dat};
\addplot[domain=0:2, 
    samples=2, 
    color=Dark2-C]{x/3-1/24};
\legend{Stable, Mixed Stability, Unstable, Bound on Invariants (\ref{simplestbound})}
\end{axis}
\end{tikzpicture}
 \caption{Analytic interacting fixed points for $N_s=1$ and $N_f=3$ given by Table \ref{tab:13} along with the Ising fixed point.}
    \label{fig:13}
\end{figure}

\begin{figure}[H]
\centering
\begin{tikzpicture}
\begin{axis}[
    xmin=0, xmax=0.25,
    ymin=0, ymax=0.05,
    xlabel= $S$,
    ylabel= $Y$,
    ylabel style={rotate=-90},
    xticklabel style={scaled ticks=false, /pgf/number format/fixed, /pgf/number format/precision=3},
    yticklabel style={scaled ticks=false, /pgf/number format/fixed, /pgf/number format/precision=3},
    title={Scalar-Fermion Fixed Points for $N_s=1$ and $N_f=4$},
    legend pos = outer north east
]
\addplot+[
    only marks,
    mark=*,
    mark options={color=Dark2-A,fill=Dark2-A},
    mark size=2pt]
table{Datafiles/1,4/stable.dat};
\addplot+[
    only marks,
    mark=square*,
    mark options={color=Dark2-F,fill=Dark2-F},
    mark size=1.5pt]
table{Datafiles/1,4/semistable.dat};
\addplot+[
    only marks,
    mark=diamond*,
    mark options={color=Dark2-B,fill=Dark2-B},
    mark size=3pt]
table{Datafiles/1,4/unstable.dat};
\addplot[domain=0:2, 
    samples=2, 
    color=Dark2-C]{x/3-1/24};
\legend{Stable, Mixed Stability, Unstable, Bound on Invariants (\ref{simplestbound})}
\end{axis}
\end{tikzpicture}
 \caption{Analytic interacting fixed points for $N_s=1$ and $N_f=4$ given by Table \ref{tab:14} along with the Ising fixed point.}
    \label{fig:14}
\end{figure}

\begin{figure}[H]
\centering
\begin{tikzpicture}
\begin{axis}[
    xmin=0, xmax=0.4,
    ymin=0, ymax=0.025,
    xlabel= $S$,
    ylabel= $Y$,
    ylabel style={rotate=-90},
    xticklabel style={scaled ticks=false, /pgf/number format/fixed, /pgf/number format/precision=3},
    yticklabel style={scaled ticks=false, /pgf/number format/fixed, /pgf/number format/precision=3},
    title={Scalar-Fermion Fixed Points for $N_s=2$ and $N_f=1$},
    legend pos = outer north east
]
\addplot+[
    only marks,
    mark=*,
    mark options={color=Dark2-A,fill=Dark2-A},
    mark size=2pt]
table{Datafiles/2,1/stable.dat};
\addplot+[
    only marks,
    mark=square*,
    mark options={color=Dark2-F,fill=Dark2-F},
    mark size=1.5pt]
table{Datafiles/2,1/semistable.dat};
\addplot+[
    only marks,
    mark=diamond*,
    mark options={color=Dark2-B,fill=Dark2-B},
    mark size=3pt]
table{Datafiles/2,1/unstable.dat};
\addplot[domain=0:3, 
    samples=2, 
    color=Dark2-C]{x/3-2/24};
\legend{Stable, Mixed Stability, Unstable, Bound on Invariants (\ref{simplestbound})}
\end{axis}
\end{tikzpicture}
 \caption{Analytic interacting fixed points for $N_s=2$ and $N_f=1$ given by Table \ref{tab:21}. To compare with the purely scalar case, we have also included the $O(2)$ fixed point with $Y=0$.}
    \label{fig:21}
\end{figure}
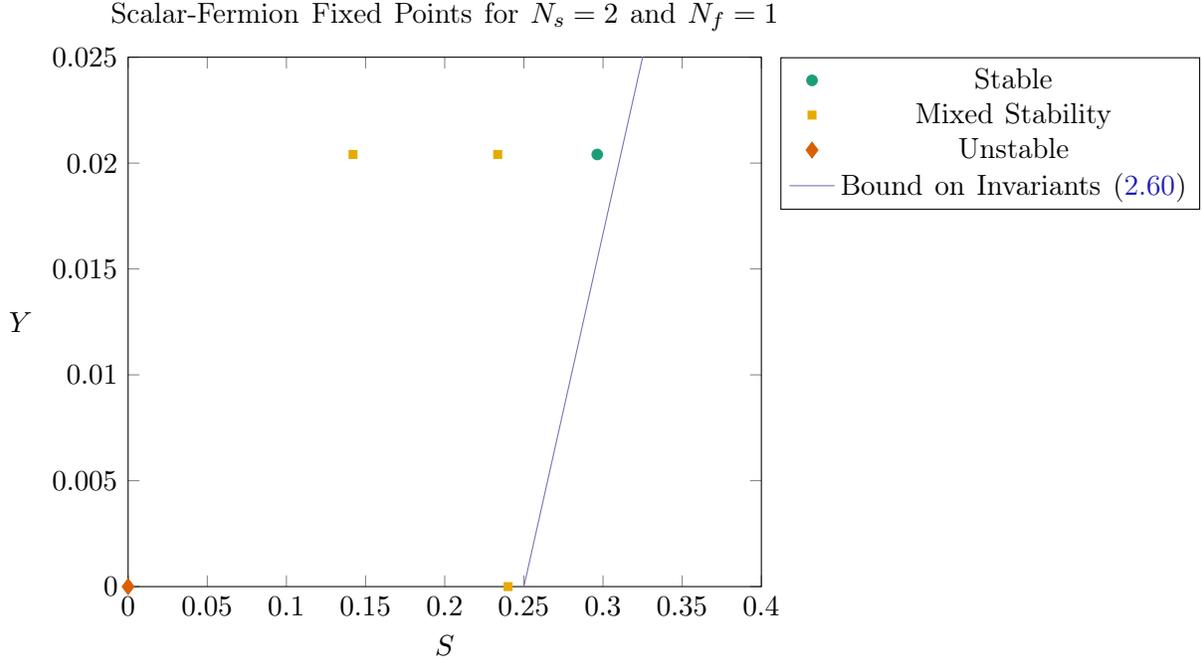

\section{Numerical searches}\label{numericsec}
The time required to implement the techniques used to find analytic fixed points rapidly increases as $N_s$ and $N_f$ increase, necessitating numerical methods for efficient searches. For low enough values of $N_s$ and $N_f$, \textit{Mathematica} is still powerful enough for our purposes, and in this paper we rely on its \texttt{FindRoot} function, selecting only those solutions with an error less than $5\times 10^{-15}$ in order to ensure accuracy. To verify convergence, we ask \texttt{FindRoot} to use these solutions as initial points to obtain new, iterated solutions with 100 digits of predision. Ultimately, we then only accept those iterated solutions with an error less than $10^{-80}$. \texttt{FindRoot} relies on an initial guess for a solution, so to ensure that the solver would be capable of locating all of the possible fixed points we selected random initial points where each of the components of $\lambda_{ijkl}$ and $y_{iab}$ are chosen to lie in the interval $[0,1]$. This is not a systematic search for all possible fixed points, but as our intention is to examine the size of the space of fixed points, a broad survey will be sufficient. One drawback of this method is that it is difficult for the solver to then find the purely scalar fixed points when $N_s$ and $N_f$ are not small, because of how increasingly rare it becomes that the initial guess has $y_{iab}$ small. However, as these fixed points have been previously explored\cite{Osborn:2020cnf}, this does not substantially affect the findings of this work. As in \cref{analyticsec} we will assume that the couplings $y_{iab}$ are all real, and we will discard any non-fully interacting fixed points.

First, to verify the efficacy of our numerical program, we applied it to the values of $N_s$ and $N_f$ in \ref{analyticsec}. In each case we began with 10,000 random starting points, and found complete agreement between with the results given in \cref{tab:11,tab:12,tab:13,tab:14,tab:21}. In every case but for $N_s=1$ and $N_f=4$, the number of starting points was sufficient for the numerical solver to correctly obtain all of the interacting fixed points. For $N_s=1$ and $N_f=4$, the solver only found eleven out of the sixteen points listed in \cref{tab:14}, but this difference is eliminated as long as we increase the number of starting points. This highlights a weakness in our method of choosing initial guesses. The use of \texttt{FindRoot} can make it difficult to find certain solutions, e.g.\ solutions with $y_{iab}$ identically vanishing, unless the initial guess is finely tuned, making our random guesses a pure game of chance with regard to these points. However, for low enough $N_s$ and $N_f$ this method is sufficient to find the majority of theories. In addition to the fully interacting fixed points, the solver was able to sporadically find solutions corresponding to interacting scalar theories accompanied by free fermions. When they appeared they could be matched with known fixed points found in \cite{Osborn:2020cnf}.

\subsection{Numerical results for small \texorpdfstring{$N_s$, $N_f$}{N\_s, N\_f}}

Using our numerical solver, we can extend our fixed point search past the values of $N_s$ and $N_f$ considered analytically. In \cref{,fig:22,fig:23,fig:24,fig:31} the values of $S$ are plotted against $Y$ for $(N_s,N_f)=(2,2)$, $(2,3)$, $(2,4)$ and $(3,1)$, where the stable fixed points are plotted as green circles, unstable fixed points are red diamonds, mixed stability fixed points are yellow squares, and the bound on invariants (\ref{simplestbound}) as the blue line. In Appendix \ref{dataappendix} we list \cref{tab:22n,tab:23,tab:24,tab:31} which contain the fixed points in the aforementioned figures, where we have used \textit{Mathematica's} \texttt{Rationalize} function to interpret the numerical values of $Y$ and $S$ where possible.

A few important properties of scalar-fermion solutions can be noted already by looking at these fixed points. Firstly, it is interesting to note that while for purely scalar $N_s=2$ there is only one non-trivial fixed point, the $O(2)$ point, introducing fermions into the system and increasing the value of $N_f$ leads to the rapid proliferation of solutions, not only by increasing the number of lines of constant $Y$, but by increasing the number of solutions for $\lambda$ within each level. This continues to be true for $N_s=3$ in \cref{fig:31} and \cref{tab:31}, where we find 7 solutions with fermions as opposed to the 3 found in the purely scalar case. Secondly, the stable fixed points are distributed in agreement with section \ref{stabilitysection}, demonstrating an important distinction between the scalar and scalar-fermion systems. If one calculates the $A$-function at each of the fixed points, one finds that the stable solution minimises it for that specific Yukawa coupling, and in the case where there are multiple stable fixed points, as with $N_s=N_f=2$, they appear only for different solutions to $\beta_y$. The presence of distinct stable fixed points means that, unlike in the purely scalar case, the RG flows arising from the beta functions can be divided into different basins of stability. The IR theory one ends up in, if one does not flow away to infinity, is not fixed but rather depends upon the initial position in theory space and perturbation one introduces.

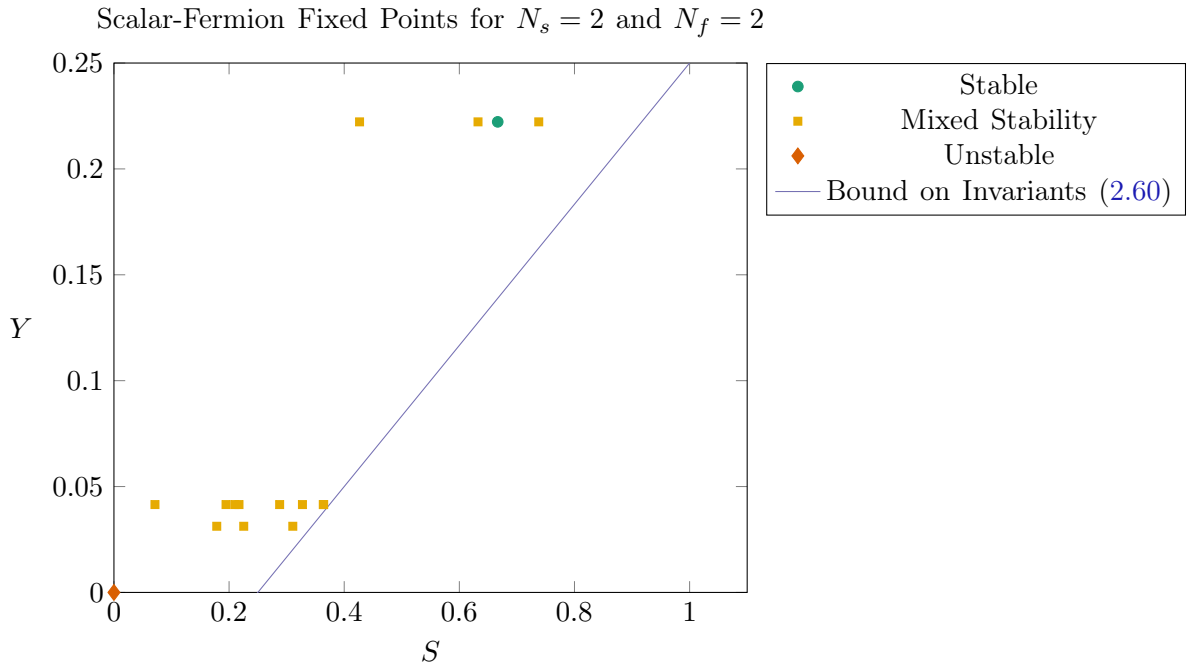
\begin{figure}[t]
\centering
\begin{tikzpicture}
\begin{axis}[
    xmin=0, xmax=1.1,
    ymin=0, ymax=0.25,
    xlabel= $S$,
    ylabel= $Y$,
    ylabel style={rotate=-90},
    yticklabel style={scaled ticks=false, /pgf/number format/fixed, /pgf/number format/precision=3},
    title={Scalar-Fermion Fixed Points for $N_s=2$ and $N_f=2$},
    legend pos = outer north east
]
\addplot+[
    only marks,
    mark=*,
    mark options={color=Dark2-A,fill=Dark2-A},
    mark size=2pt]
table{Datafiles/2,2/stable.dat};
\addplot+[
    only marks,
    mark=square*,
    mark options={color=Dark2-F,fill=Dark2-F},
    mark size=1.5pt]
table{Datafiles/2,2/semistable.dat};
\addplot+[
    only marks,
    mark=diamond*,
    mark options={color=Dark2-B,fill=Dark2-B},
    mark size=3pt]
table{Datafiles/2,2/unstable.dat};
\addplot[domain=0:1, 
    samples=2, 
    color=Dark2-C]{x/3-2/24};
\legend{Stable, Mixed Stability, Unstable, Bound on Invariants (\ref{simplestbound})}
\end{axis}
\end{tikzpicture}
 \caption{Results of numerical search for $N_s=N_f=2$ beginning with 10,000 points. The points are listed in \cref{tab:22n}. We find a total of 19 fully interacting fixed points. The stable fixed point agrees with that found in \cref{tab:22}.}
    \label{fig:22}
\end{figure}

It is important to note that the blue line requires the simultaneous saturation of both the scalar bound (\ref{eq:three}) and the fermionic bound (\ref{eq:two}). While a number of fixed points are able to saturate one of these bounds, none of the fixed points have saturated both. However, as the plots indicate there are a number of points which are very nearly able to saturate (\ref{simplestbound}), making it unlikely that an improved bound exists when the number of fields involved is small. This feature is unlike the purely scalar case\cite{Osborn:2020cnf}.

\begin{figure}[t]
\centering
\begin{tikzpicture}
\begin{axis}[
    xmin=0, xmax=1,
    ymin=0, ymax=0.25,
    xlabel= $S$,
    ylabel= $Y$,
    ylabel style={rotate=-90},
    yticklabel style={scaled ticks=false, /pgf/number format/fixed, /pgf/number format/precision=3},
    title={Scalar-Fermion Fixed Points for $N_s=2$ and $N_f=3$},
    legend pos = outer north east
]
\addplot+[
    only marks,
    mark=*,
    mark options={color=Dark2-A,fill=Dark2-A},
    mark size=2pt]
table{Datafiles/2,3/stable.dat};
\addplot+[
    only marks,
    mark=square*,
    mark options={color=Dark2-F,fill=Dark2-F},
    mark size=1.5pt]
table{Datafiles/2,3/semistable.dat};
\addplot+[
    only marks,
    mark=diamond*,
    mark options={color=Dark2-B,fill=Dark2-B},
    mark size=3pt]
table{Datafiles/2,3/unstable.dat};
\addplot[domain=0:1, 
    samples=2, 
    color=Dark2-C]{x/3-2/24};
\legend{Stable, Mixed Stability, Unstable, Bound on Invariants (\ref{simplestbound})}
\end{axis}
\end{tikzpicture}
    \caption{Results of numerical search for $N_s=2$, $N_f=3$ beginning with 10,000 points. The points are listed in \cref{tab:23} We find 33 distinct fixed points, of which one is stable.}
    \label{fig:23}
\end{figure}
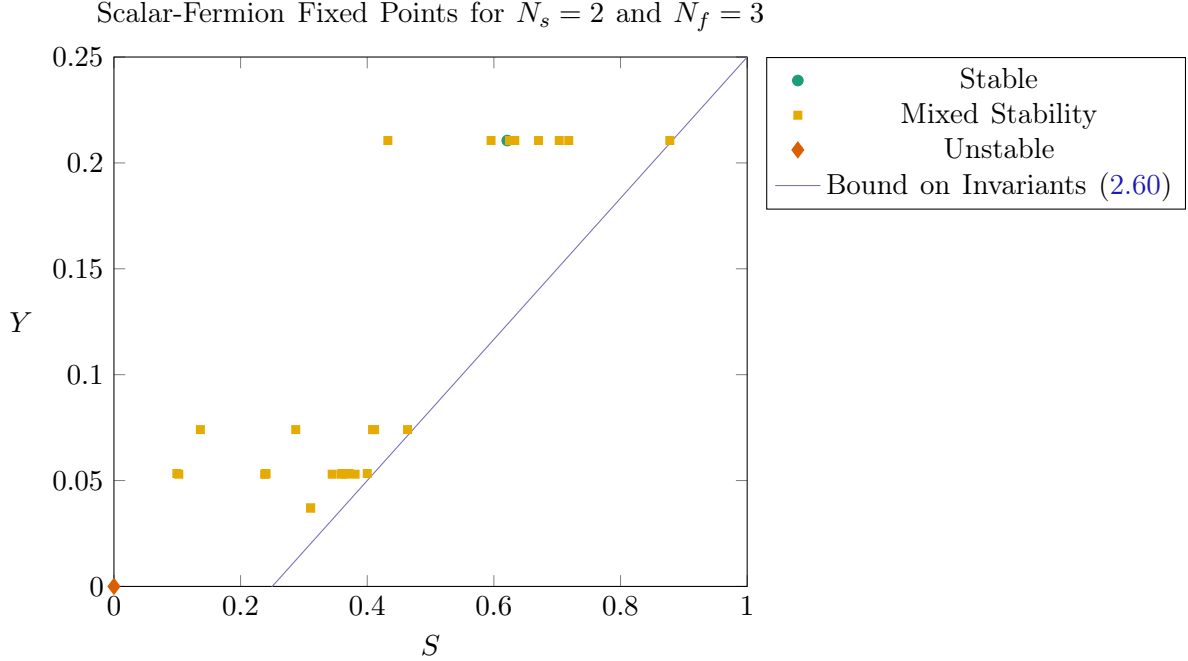

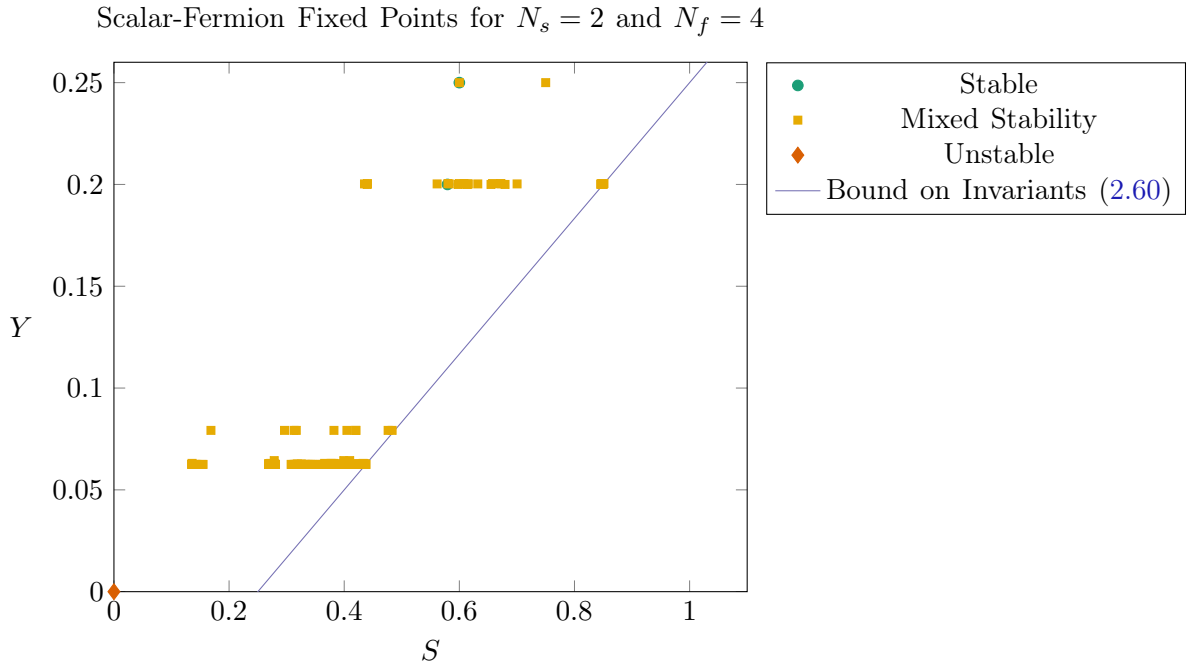
\begin{figure}[H]
\centering
\begin{tikzpicture}
\begin{axis}[
    xmin=0, xmax=1.1,
    ymin=0, ymax=0.26,
    xlabel= $S$,
    ylabel= $Y$,
    ylabel style={rotate=-90},
    yticklabel style={scaled ticks=false, /pgf/number format/fixed, /pgf/number format/precision=3},
    title={Scalar-Fermion Fixed Points for $N_s=2$ and $N_f=4$},
    legend pos = outer north east
]
\addplot+[
    only marks,
    mark=*,
    mark options={color=Dark2-A,fill=Dark2-A},
    mark size=2pt]
table{Datafiles/2,4/stable.dat};
\addplot+[
    only marks,
    mark=square*,
    mark options={color=Dark2-F,fill=Dark2-F},
    mark size=1.5pt]
table{Datafiles/2,4/semistable.dat};
\addplot+[
    only marks,
    mark=diamond*,
    mark options={color=Dark2-B,fill=Dark2-B},
    mark size=3pt]
table{Datafiles/2,4/unstable.dat};
\addplot[domain=0:2, 
    samples=2, 
    color=Dark2-C]{x/3-2/24};
\legend{Stable, Mixed Stability, Unstable, Bound on Invariants (\ref{simplestbound})}
\end{axis}
\end{tikzpicture}
    \caption{Results of numerical search for $N_s=2$, $N_f=4$ beginning with 10,000 points. The points are listed in \cref{tab:24}. Note that no fixed points manage to completely saturate (\ref{simplestbound}). In total we find 254 distinct fixed points, two of which are stable. The NJLY model appears as the stable fixed point at $(0.6,0.25)$.}
    \label{fig:24}
\end{figure}

Even for these small numbers of scalars and fermions, \cref{fig:22,fig:23,fig:24,fig:31} are sufficient to identify some qualitative trends in how the solutions arrange themselves. As emphasised before, the independence of $\beta^y$ from the scalar coupling $\lambda$, a feature specific to one loop,\footnote{While higher-loop terms in the beta function will remove this $\lambda$-independence, there is expected to be a one-to-one correspondence between solutions of the one loop beta function and higher loop fixed points. This correspondence suggests that the higher order terms will simply perturb the fixed points to no longer exactly lie on the same line.} means that there will be many fixed point solutions which share the same value of the Yukawa coupling, perhaps up to an $O(N_s)\times O(N_f)$ field redefinition. These fixed points may, however, have markedly distinct scalar potentials, leading to the lines of solutions with constant $Y$ but variable $S$ which feature prominently in the figures.

Also interesting to note is the fact that in all instances $Y$ is rational, unlike $S$, and this may be a general feature of the one loop beta function (\ref{eq:betay}). This can clearly be seen by examining Tables \ref{tab:11} to \ref{tab:31}. For $N_f=1$ this immediately follows from (\ref{nfequal1}). For $N_s=1$, the beta function for $y_{ab}$ becomes
\begin{equation}
    \big[\big(-\tfrac{1}{2}\varepsilon+\Tr y^2\big)\delta_{ab}+3(y^2)_{ab}\big]y_{bc}=0\,.
\end{equation}
For fully interacting fixed points, $y_{ab}$ must be invertible, so that one can see that the only allowed value of $Y$ will be
\begin{equation}
    Y=\frac{N_f}{(N_f+6)^2}\varepsilon\,.
\end{equation}
This permits not only the GNY model, but also those mentioned previously arising from flipping individual signs in $y_{ab}$. Though we do not have a proof of this property for general $N_s$ and $N_f$, we have verified it in all of the cases identified numerically. To demonstrate that these trends are not peculiar to $N_s=1$ or $2$, we include plots of $N_s=3,4$ and $N_f\leq 4$ in \cref{fig:32,fig:33,fig:34,fig:41,fig:42,fig:43,fig:44}. Here, the number of fixed points becomes too large to reasonably list in tabular form. It is interesting to note that the solver is unable to find any stable fixed points for $N_s=3,\,4$; however this may be due to limitations of the solver rather than a feature of the beta functions.

Examining \cref{tab:22n,tab:23,tab:24,tab:31}, one sees another peculiarity, namely the existence of fixed points with the same invariants, yet distinct numbers of $\kappa=0$ eigenvalues. In the analytic solutions it was seen that this may be associated with sign changes in Yukawa couplings which break the symmetry of the fixed point without either spoiling the beta functions or changing the invariants, which depend upon $y$ only in even powers. It seems likely that a similar effect is happening here. 

\begin{figure}[H]
\centering
\begin{tikzpicture}
\begin{axis}[
    xmin=0, xmax=0.45,
    ymin=0, ymax=0.025,
    xlabel= $S$,
    ylabel= $Y$,
    ylabel style={rotate=-90},
    yticklabel style={scaled ticks=false, /pgf/number format/fixed, /pgf/number format/precision=3},
    title={Scalar-Fermion Fixed Points for $N_s=3$ and $N_f=1$},
    legend pos = outer north east
]
\addplot+[
    only marks,
    mark=square*,
    mark options={color=Dark2-F,fill=Dark2-F},
    mark size=1.5pt]
table{Datafiles/3,1/semistable.dat};
\addplot+[
    only marks,
    mark=diamond*,
    mark options={color=Dark2-B,fill=Dark2-B},
    mark size=3pt]
table{Datafiles/3,1/unstable.dat};
\addplot[domain=0:1, 
    samples=2, 
    color=Dark2-C]{x/3-3/24};
\legend{Mixed Stability, Unstable, Bound on Invariants (\ref{simplestbound})}
\end{axis}
\end{tikzpicture}
    \caption{Results of numerical search for $N_s=3$, $N_f=1$ beginning with 10,000 points. The points are listed in \cref{tab:31}. We find 7 distinct interacting fixed points, of which none are stable.}
    \label{fig:31}
\end{figure}
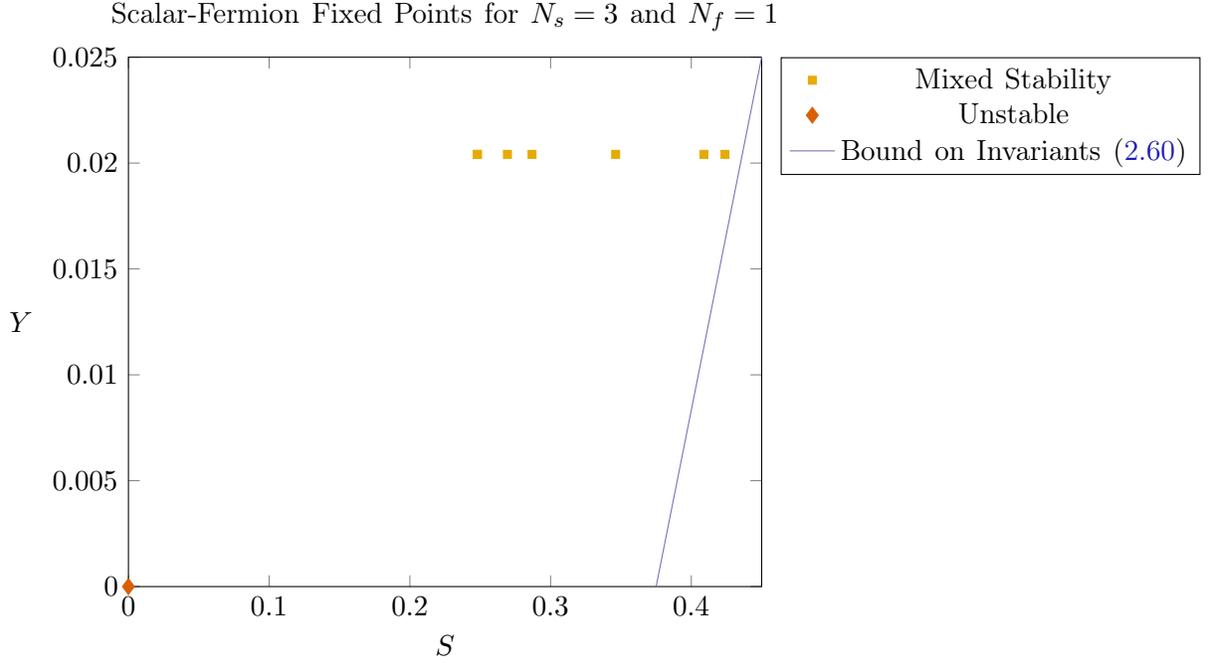

\begin{figure}[H]
\centering
\begin{tikzpicture}
\begin{axis}[
    xmin=0, xmax=1.2,
    ymin=0, ymax=0.25,
    xlabel= $S$,
    ylabel= $Y$,
    ylabel style={rotate=-90},
    yticklabel style={scaled ticks=false, /pgf/number format/fixed, /pgf/number format/precision=3},
    title={Scalar-Fermion Fixed Points for $N_s=3$ and $N_f=2$},
    legend pos = outer north east
]
\addplot+[
    only marks,
    mark=square*,
    mark options={color=Dark2-F,fill=Dark2-F},
    mark size=1.5pt]
table{Datafiles/3,2/semistable.dat};
\addplot+[
    only marks,
    mark=diamond*,
    mark options={color=Dark2-B,fill=Dark2-B},
    mark size=3pt]
table{Datafiles/3,2/unstable.dat};
\addplot[domain=0:2, 
    samples=2, 
    color=Dark2-C]{x/3-3/24};
\legend{Mixed Stability, Unstable, Bound on Invariants (\ref{simplestbound})}
\end{axis}
\end{tikzpicture}
    \caption{Results of numerical search for $N_s=3$, $N_f=2$ beginning with 10,000 points. We find 35 distinct fixed points, of which none are stable.}
    \label{fig:32}
\end{figure}

\begin{figure}[H]
\centering
\begin{tikzpicture}
\begin{axis}[
    xmin=0, xmax=1.6,
    ymin=0, ymax=0.41,
    xlabel= $S$,
    ylabel= $Y$,
    ylabel style={rotate=-90},
    title={Scalar-Fermion Fixed Points for $N_s=3$ and $N_f=3$},
    legend pos = outer north east
]
\addplot+[
    only marks,
    mark=square*,
    mark options={color=Dark2-F,fill=Dark2-F},
    mark size=1.5pt]
table{Datafiles/3,3/semistable.dat};
\addplot+[
    only marks,
    mark=diamond*,
    mark options={color=Dark2-B,fill=Dark2-B},
    mark size=3pt]
table{Datafiles/3,3/unstable.dat};
\addplot[domain=0:2, 
    samples=2, 
    color=Dark2-C]{x/3-3/24};
\legend{Mixed Stability, Unstable, Bound on Invariants (\ref{simplestbound})}
\end{axis}
\end{tikzpicture}
    \caption{Results of numerical search for $N_s=N_f=3$ beginning with 10,000 points. We find 171 distinct fixed points, of which none are stable.}
    \label{fig:33}
\end{figure}

\begin{figure}[H]
\centering
\begin{tikzpicture}
\begin{axis}[
    xmin=0, xmax=3.3,
    ymin=0, ymax=1.1,
    xlabel= $S$,
    ylabel= $Y$,
    ylabel style={rotate=-90},
    title={Scalar-Fermion Fixed Points for $N_s=3$ and $N_f=4$},
    legend pos = outer north east
]
\addplot+[
    only marks,
    mark=square*,
    mark options={color=Dark2-F,fill=Dark2-F},
    mark size=1.5pt]
table{Datafiles/3,4/semistable.dat};
\addplot+[
    only marks,
    mark=diamond*,
    mark options={color=Dark2-B,fill=Dark2-B},
    mark size=3pt]
table{Datafiles/3,4/unstable.dat};
\addplot[domain=0:4, 
    samples=2, 
    color=Dark2-C]{x/3-3/24};
\legend{Stable, Mixed Stability, Unstable, Bound on Invariants (\ref{simplestbound})}
\end{axis}
\end{tikzpicture}
    \caption{Results of numerical search for $N_s=3$, $N_f=4$ beginning with 10,000 points. We find 518 distinct fixed points, none of which are stable.}
    \label{fig:34}
\end{figure}
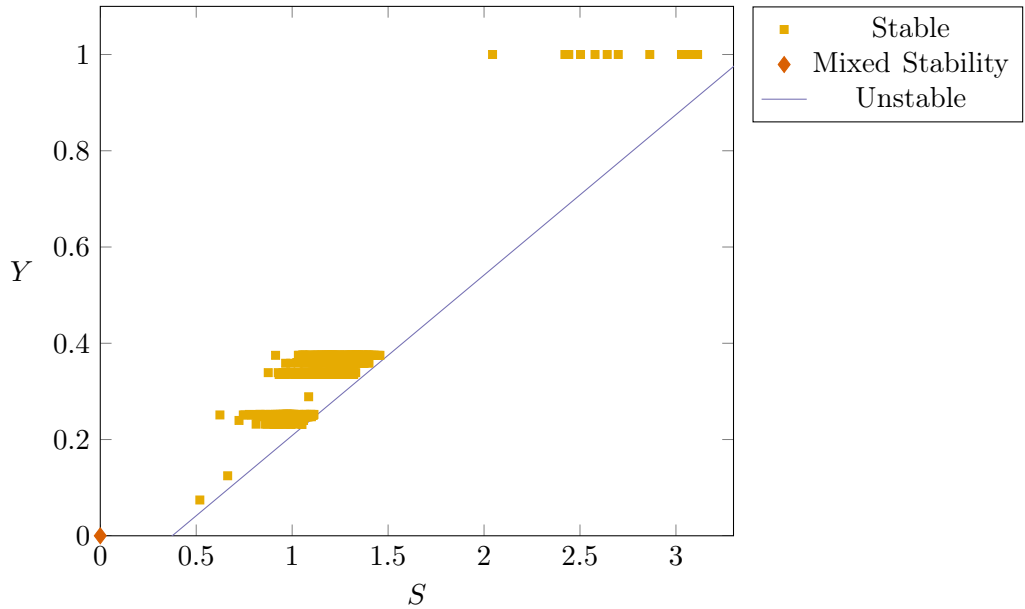

\begin{figure}[H]
\centering
\begin{tikzpicture}
\begin{axis}[
    xmin=0, xmax=0.6,
    ymin=0, ymax=0.03,
    xlabel= $S$,
    ylabel= $Y$,
    ylabel style={rotate=-90},
    yticklabel style={scaled ticks=false, /pgf/number format/fixed, /pgf/number format/precision=3},
    title={Scalar-Fermion Fixed Points for $N_s=4$ and $N_f=1$},
    legend pos = outer north east
]
\addplot+[
    only marks,
    mark=square*,
    mark options={color=Dark2-F,fill=Dark2-F},
    mark size=1.5pt]
table{Datafiles/4,1/semistable.dat};
\addplot+[
    only marks,
    mark=diamond*,
    mark options={color=Dark2-B,fill=Dark2-B},
    mark size=3pt]
table{Datafiles/4,1/unstable.dat};
\addplot[domain=0:5, 
    samples=2, 
    color=Dark2-C]{x/3-4/24};
\legend{Mixed Stability, Unstable, Bound on Invariants (\ref{simplestbound})}
\end{axis}
\end{tikzpicture}
    \caption{Results of numerical search for $N_s=4$, $N_f=1$ beginning with 10,000 points. We find 5 distinct fixed points, of which none are stable.}
    \label{fig:41}
\end{figure}

\begin{figure}[H]
\centering
\begin{tikzpicture}
\begin{axis}[
    xmin=0, xmax=1.5,
    ymin=0, ymax=0.3,
    xlabel= $S$,
    ylabel= $Y$,
    ylabel style={rotate=-90},
    yticklabel style={scaled ticks=false, /pgf/number format/fixed, /pgf/number format/precision=3},
    title={Scalar-Fermion Fixed Points for $N_s=4$ and $N_f=2$},
    legend pos = outer north east
]
\addplot+[
    only marks,
    mark=square*,
    mark options={color=Dark2-F,fill=Dark2-F},
    mark size=1.5pt]
table{Datafiles/4,2/semistable.dat};
\addplot+[
    only marks,
    mark=diamond*,
    mark options={color=Dark2-B,fill=Dark2-B},
    mark size=3pt]
table{Datafiles/4,2/unstable.dat};
\addplot[domain=0:2, 
    samples=2, 
    color=Dark2-C]{x/3-4/24};
\legend{Mixed Stability, Unstable, Bound on Invariants (\ref{simplestbound})}
\end{axis}
\end{tikzpicture}
    \caption{Results of numerical search for $N_s=4$, $N_f=2$ beginning with 10,000 points. We find 34 distinct fixed points, of which none are stable.}
    \label{fig:42}
\end{figure}

\begin{figure}[H]
\centering
\begin{tikzpicture}
\begin{axis}[
    xmin=0, xmax=1.7,
    ymin=0, ymax=0.45,
    xlabel= $S$,
    ylabel= $Y$,
    ylabel style={rotate=-90},
    title={Scalar-Fermion Fixed Points for $N_s=4$ and $N_f=3$},
    legend pos = outer north east
]
\addplot+[
    only marks,
    mark=square*,
    mark options={color=Dark2-F,fill=Dark2-F},
    mark size=1.5pt]
table{Datafiles/4,3/semistable.dat};
\addplot+[
    only marks,
    mark=diamond*,
    mark options={color=Dark2-B,fill=Dark2-B},
    mark size=3pt]
table{Datafiles/4,3/unstable.dat};
\addplot[domain=0:3, 
    samples=2, 
    color=Dark2-C]{x/3-4/24};
\legend{Mixed Stability, Unstable, Bound on Invariants (\ref{simplestbound})}
\end{axis}
\end{tikzpicture}
    \caption{Results of numerical search for $N_s=4$, $N_f=3$ beginning with 10,000 points. We find 88 distinct interacting fixed points, of which none are stable.}
    \label{fig:43}
\end{figure}
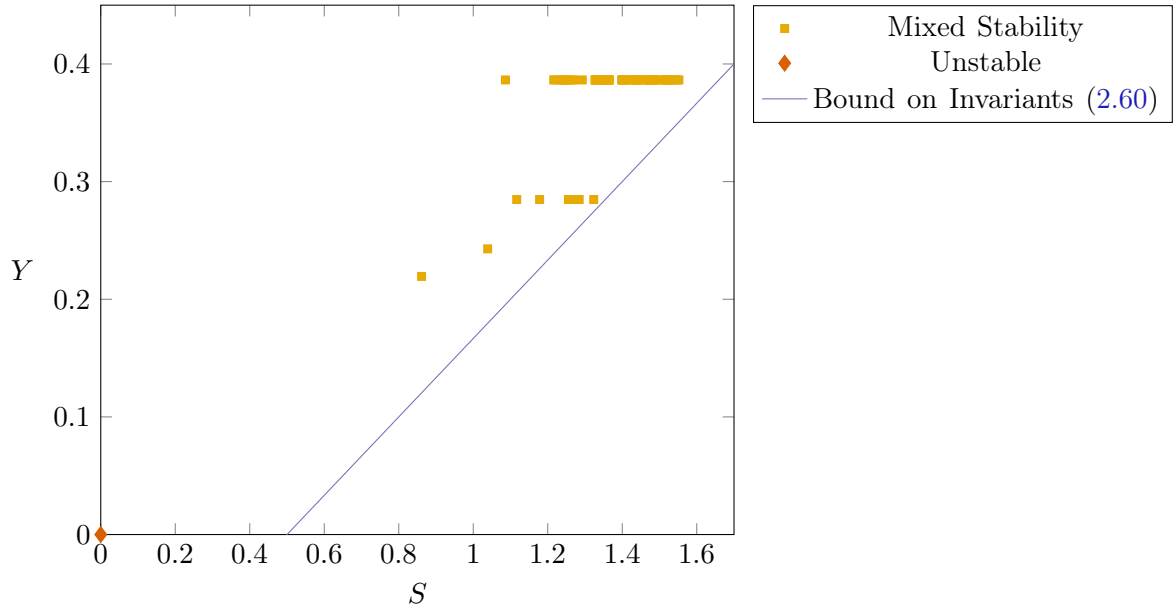

\begin{figure}[H]
\centering
\begin{tikzpicture}
\begin{axis}[
    xmin=0, xmax=2.2,
    ymin=0, ymax=0.6,
    xlabel= $S$,
    ylabel= $Y$,
    ylabel style={rotate=-90},
    title={Scalar-Fermion Fixed Points for $N_s=4$ and $N_f=4$},
    legend pos = outer north east
]
\addplot+[
    only marks,
    mark=square*,
    mark options={color=Dark2-F,fill=Dark2-F},
    mark size=1.5pt]
table{Datafiles/4,4/semistable.dat};
\addplot+[
    only marks,
    mark=diamond*,
    mark options={color=Dark2-B,fill=Dark2-B},
    mark size=3pt]
table{Datafiles/4,4/unstable.dat};
\addplot[domain=0:27, 
    samples=2, 
    color=Dark2-C]{x/3-4/24};
\legend{Mixed Stability, Unstable, Bound on Invariants (\ref{simplestbound})}
\end{axis}
\end{tikzpicture}
    \caption{Results of numerical search for $N_s=4$, $N_f=4$ beginning with 10,000 points. We find 240 distinct fixed points, of which none are stable.}
    \label{fig:44}
\end{figure}
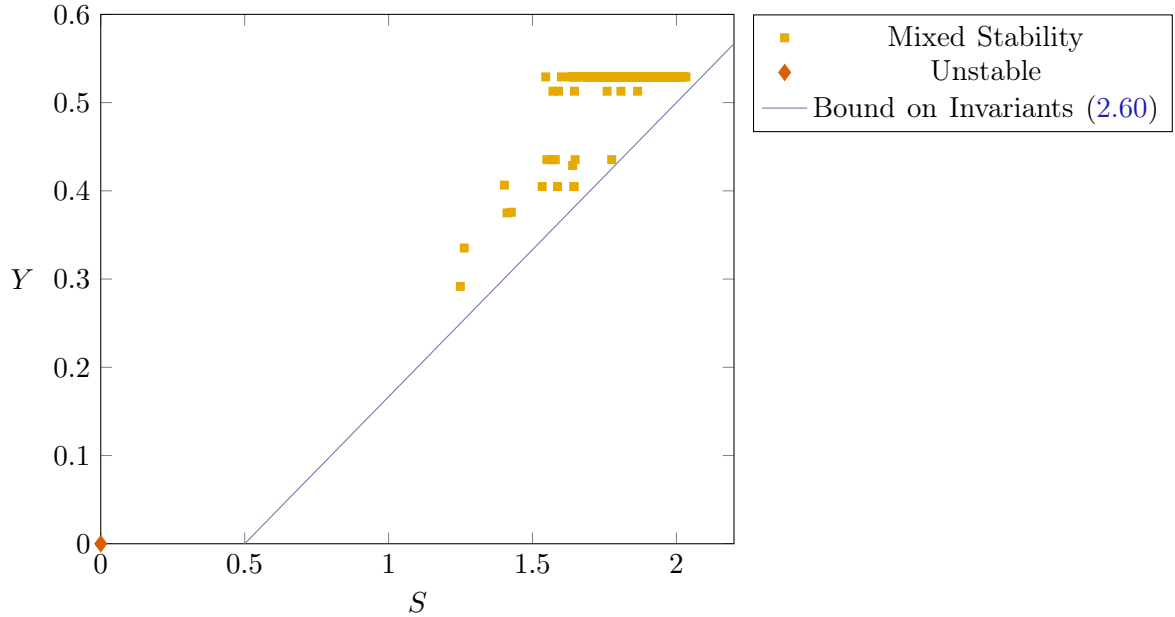

\FloatBarrier

\section{Conclusion}\label{conc}
Beginning with the beta functions for the couplings in a general scalar-fermion theory with a Yukawa-type interaction, we have derived bounds on linear combinations of coupling invariants which must be obeyed at all fixed points, beginning with either Dirac or Weyl fermions in four dimensions. When reduced to three dimensions, these theories, presuming we begin with an integer number of fermions, will always give an even number of three-dimensional Majorana fermions. Using the Weyl fermion beta functions as a basis, we have also considered a naive extension to generalise the results to include all possible Yukawa-type theories in three dimensions. We have also shown that, for a given solution to the Yukawa beta functions, there exists at most one stable fixed point.

While we presented the numerical results with a simple linear bound in the $S$-$Y$ plane using \eqref{simplestbound}, this is not the only way to organise the data. Instead, one could plot the fixed points in the $R$-$T'$ plane using the bound (\ref{parabolicbound}), which is a direct generalisation of a purely scalar bound\cite{Osborn:2020cnf}. We plot this bound for a few combinations of $N_s$ and $N_f$ in \cref{fig:33parabolic,fig:34parabolic,fig:43parabolic,fig:44parabolic}, which are analogous to Figs.\ 4 to 6 in \cite{Osborn:2020cnf}. Here, the fermionic levels do not appear as straight lines, and are instead distributed throughout the allowed region. Interestingly, one does not find the same clumping of fixed points as in the purely scalar case, with the points, even at each fermionic level, being distributed seemingly randomly.

\begin{figure}[H]
\centering
\begin{tikzpicture}
\begin{axis}[
    xmin=-1, xmax=1,
    ymin=-0.2, ymax=0.4,
    xlabel= $R$,
    ylabel= $T'$,
    ylabel style={rotate=-90},
    title={Scalar-Fermion Fixed Points for $N_s=3$ and $N_f=3$},
    legend pos = outer north east,
    colorbar,
    colormap/Spectral,
    colorbar style={ylabel=$Y$,ylabel style={rotate=-90}}
]
\addplot+[
    only marks,
    mark=*,
    scatter,
    mark size=2pt,
    scatter src = explicit]
table[x=R,y=T,meta=b4]{Datafiles/3,3/parabolicbound.dat};
\addplot[domain=-2:2, 
    samples=1000, 
    color=Dark2-C]{-x^2+3/8};
\end{axis}
\end{tikzpicture}
    \caption{Results of numerical search for $N_s=3$, $N_f=3$ beginning with 10,000 points. This includes a total of 171 fixed points, and does not include the free fixed point.}
\label{fig:33parabolic}
\end{figure}
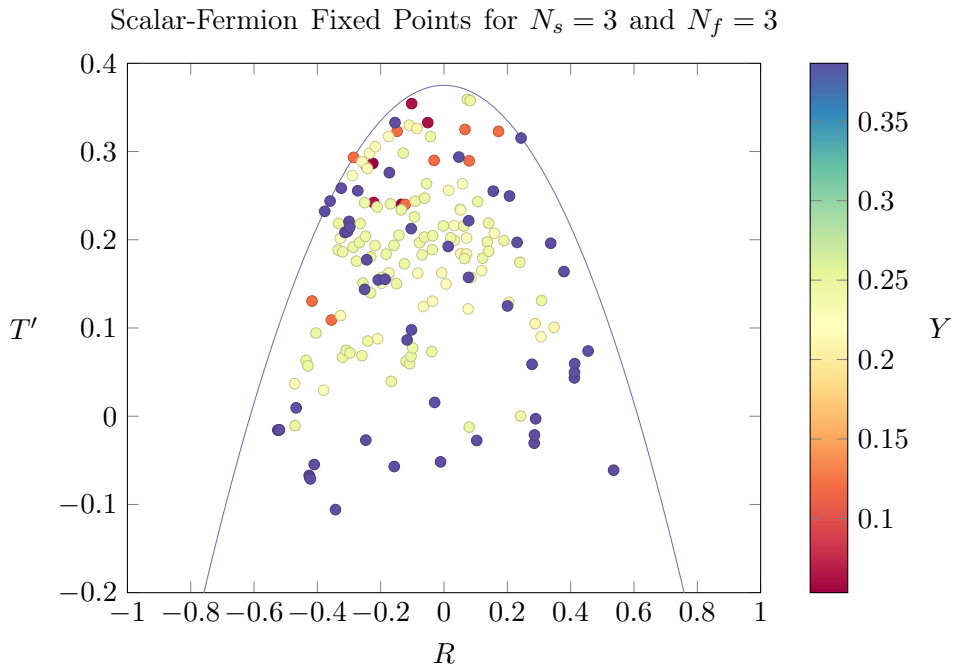

\begin{figure}[H]
\centering
\begin{tikzpicture}
\begin{axis}[
    xmin=-1, xmax=1,
    ymin=-0.2, ymax=0.4,
    xlabel= $R$,
    ylabel= $T'$,
    ylabel style={rotate=-90},
    title={Scalar-Fermion Fixed Points for $N_s=3$ and $N_f=4$},
    legend pos = outer north east,
    colorbar,
    colormap/Spectral,
    colorbar style={ylabel=$Y$,ylabel style={rotate=-90}}
]
\addplot+[
    only marks,
    mark=*,
    scatter,
    mark size=2pt,
    scatter src = explicit]
table[x=R,y=T,meta=b4]{Datafiles/3,4/parabolicbound.dat};
\addplot[domain=-2:2, 
    samples=1000, 
    color=Dark2-C]{-x^2+3/8};
\end{axis}
\end{tikzpicture}
    \caption{Results of numerical search for $N_s=3$, $N_f=4$ beginning with 10,000 points. This includes a total of 518 fixed points, and does not include the free fixed point.}
\label{fig:34parabolic}
\end{figure}

\begin{figure}[H]
\centering
\begin{tikzpicture}
\begin{axis}[
    xmin=-1, xmax=1,
    ymin=0, ymax=0.6,
    xlabel= $R$,
    ylabel= $T'$,
    ylabel style={rotate=-90},
    title={Scalar-Fermion Fixed Points for $N_s=4$ and $N_f=3$},
    legend pos = outer north east,
    colorbar,
    colormap/Spectral,
    colorbar style={ylabel=$Y$,ylabel style={rotate=-90}}
]
\addplot+[
    only marks,
    mark=*,
    scatter,
    mark size=2pt,
    scatter src = explicit]
table[x=R,y=T,meta=b4]{Datafiles/4,3/parabolicbound.dat};
\addplot[domain=-2:2, 
    samples=1000, 
    color=Dark2-C]{-x^2+4/8};
\end{axis}
\end{tikzpicture}
    \caption{Results of numerical search for $N_s=4$, $N_f=3$ beginning with 10,000 points. This includes a total of 88 fixed points, and does not include the free fixed point.}
\label{fig:43parabolic}
\end{figure}

\begin{figure}[H]
\centering
\begin{tikzpicture}
\begin{axis}[
    xmin=-0.6, xmax=0.6,
    ymin=-0.1, ymax=0.6,
    xlabel= $R$,
    ylabel= $T'$,
    ylabel style={rotate=-90},
    title={Scalar-Fermion Fixed Points for $N_s=4$ and $N_f=4$},
    legend pos = outer north east,
    colorbar,
    colormap/Spectral,
    colorbar style={ylabel=$Y$,ylabel style={rotate=-90}}
]
\addplot+[
    only marks,
    mark=*,
    scatter,
    mark size=2pt,
    scatter src = explicit]
table[x=R,y=T,meta=b4]{Datafiles/4,4/parabolicbound.dat};
\addplot[domain=-2:2, 
    samples=1000, 
    color=Dark2-C]{-x^2+4/8};
\end{axis}
\end{tikzpicture}
    \caption{Results of numerical search for $N_s=4$, $N_f=4$ beginning with 10,000 points. This includes a total of 240 fixed points, and does not include the free fixed point.}
\label{fig:44parabolic}
\end{figure}
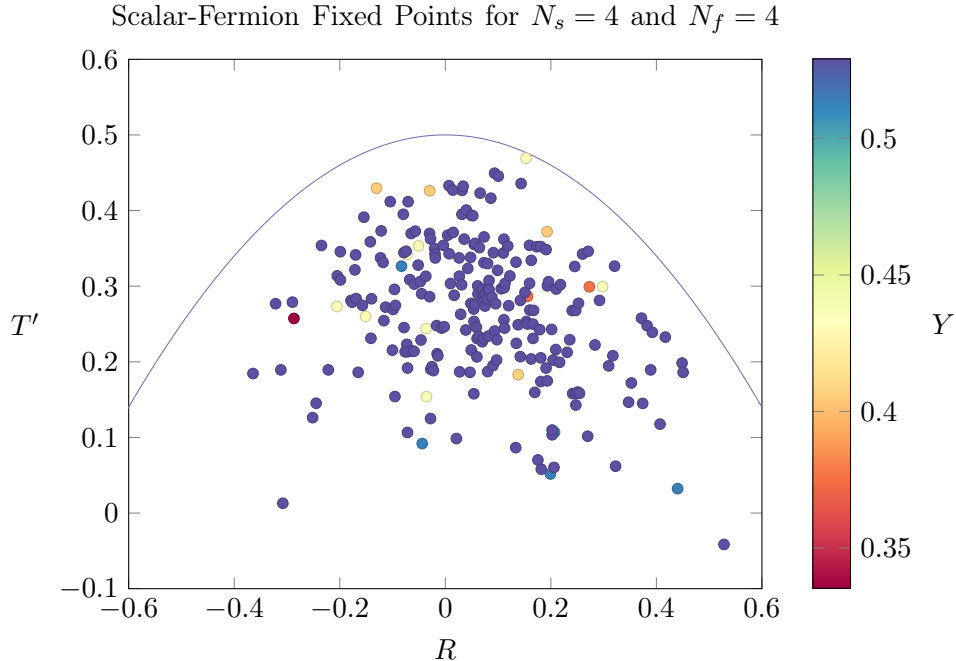

We have searched for all possible scalar-fermion fixed points both analytically and numerically, but we cannot guarantee that our numerical searches have found all existing fixed points. The numerical methods used extend very naturally to higher $N_s$ and $N_f$, and we were only prevented from listing more numeric results by a lack of space. For large enough $N_s$ and $N_f$, \emph{Mathematica} may no longer have sufficient power to efficiently canvas the entire space of fixed points, at which point it would be necessary to consider alternatives.

Statements made about the properties of beta functions for arbitrary potentials become statements about a wide class of theories. What these statements may lack in depth and specificity is made up for in the breadth of their application, especially as the potential used becomes more and more general. To that end, one could generalise the Lagrangian we began with in a number of different ways. There exist important models, for example the chiral Heisenberg model~\cite{Zerf:2017zqi,Rosenstein:1993zf} for $N_s=3$ scalars, which couple scalars and fermions in a more non-trivial way. These are not encompassed by (\ref{basicLag}) in the general case. One could also attempt to add spin-one gauge fields into the system, either by gauging one of the global symmetries or by introducing a fresh gauge group. Such a system would produce new fixed points for which the gauge coupling would be non-zero.

\ack{We would like to thank Hugh Osborn for enlightening discussions and comments on the manuscript. AS is funded by the Royal Society under grant URF{\textbackslash}R1{\textbackslash}211417.}

\FloatBarrier

\begin{appendices}

\section{Fixed point data for small \texorpdfstring{$\boldsymbol{N_s,N_f}$}{N\_s,N\_f}}\label{dataappendix}
Let us here add a remark concerning the number of $\kappa=0$ eigenvalues. The equation
\begin{equation}
    S_{IJ}v_J=0
\end{equation} for some non-zero vector $v_J$ corresponds to a non-trivial constraint which the tensors $\lambda_{ijkl}$ and $y_{iab}$ must obey at the fixed point. From \cref{stabmatrixeigenvalues} we see that broken symmetry generators can lead to these constraints, but one can additionally imagine that the relatively simple form of the one-loop beta-functions could allow additional 'accidental' constraints. In these additional cases the extra zero $\kappa$ eigenvalue will be lifted at higher loop order because they are not related to symmetry. For example for $N_s=2, N_f=3$ we can have a maximum of four broken generators of the free-theory global symmetry group $O(2)\times O(3)$. However, in \cref{tab:23} there are entries with five zero $\kappa$ eigenvalues. By considering the two-loop beta-functions one can indeed verify that these accidental eigenvalues are lifted at higher loop order, and that only the four zero $\kappa$ eigenvalues remain. Similar comments apply to \cref{tab:24}, and will hold also for points with larger numbers of scalars and fermions.

\begin{center}
\begin{longtable}{|c c c c c c|}\caption{Fixed points found for $N_s=N_f=2$.} \\
\hline
$S$ & $Y$ & \makecell{$\#$ different \\[-3pt] $\gamma_\phi$(degeneracies)} & \makecell{$\#$ different \\[-3pt] $\gamma_\psi$(degeneracies)} & $\#\,\kappa<0$, =0  & $a_2,\,b_1\neq0$ \\ [0.5ex] 
 \hline
\endfirsthead

\multicolumn{6}{c}%
{{\bfseries \tablename\ \thetable{} -- continued from previous page}} \\
\hline
$S$ & $Y$ & \makecell{$\#$ different \\[-3pt] $\gamma_\phi$(degeneracies)} & \makecell{$\#$ different \\[-3pt] $\gamma_\psi$(degeneracies)} & $\#\,\kappa<0$, =0  & $a_2,\,b_1\neq0$\\ [0.5ex] 
 \hline
\endhead

\hline \multicolumn{6}{|r|}{{Continued on next page}} \\ \hline
\endfoot

\hline
\endlastfoot \label{tab:22n}
$\!\!0.178588$ & $\frac{1}{32}$ & $\text{2(1,1)}$ & $\text{1(2)}$ & $\text{6, 2}$ & $\star,\,\star$ \\
$0.178588$ & $\frac{1}{32}$ & $\text{2(1,1)}$ & $\text{1(2)}$ & $\text{6, 1}$ & $\star,\,\star$ \\
$0.225456$ & $\frac{1}{32}$ & $\text{2(1,1)}$ & $\text{1(2)}$ & $\text{4, 2}$ & $\star,\,\star$ \\
$0.225456$ & $\frac{1}{32}$ & $\text{2(1,1)}$ & $\text{1(2)}$ & $\text{4, 1}$ & $\star,\,\star$ \\
$0.310693$ & $\frac{1}{32}$ & $\text{2(1,1)}$ & $\text{1(2)}$ & $\text{3, 2}$ & $\star,\,\star$ \\
$0.310693$ & $\frac{1}{32}$ & $\text{2(1,1)}$ & $\text{1(2)}$ & $\text{3, 1}$ & $\star,\,\star$ \\
$0.0712582$ & $\frac{493}{11881}$ & $\text{2(1,1)}$ & $\text{2(1,1)}$ & $\text{6, 2}$ & $\star,\,\star$ \\
$0.194755$ & $\frac{493}{11881}$ & $\text{2(1,1)}$ & $\text{2(1,1)}$ & $\text{5, 2}$ & $\star,\,\star$ \\
$0.194755$ & $\frac{493}{11881}$ & $\text{2(1,1)}$ & $\text{2(1,1)}$ & $\text{5, 2}$ & $\star,\,\star$ \\
$0.210175$ & $\frac{493}{11881}$ & $\text{2(1,1)}$ & $\text{2(1,1)}$ & $\text{5, 2}$ & $\star,\,\star$ \\
$0.217217$ & $\frac{493}{11881}$ & $\text{2(1,1)}$ & $\text{2(1,1)}$ & $\text{4, 2}$ & $\star,\,\star$ \\
$0.287836$ & $\frac{493}{11881}$ & $\text{2(1,1)}$ & $\text{2(1,1)}$ & $\text{3, 2}$ & $\star,\,\star$ \\
$0.327445$ & $\frac{493}{11881}$ & $\text{2(1,1)}$ & $\text{2(1,1)}$ & $\text{4, 2}$ & $\star,\,\star$ \\
$0.363636$ & $\frac{493}{11881}$ & $\text{2(1,1)}$ & $\text{2(1,1)}$ & $\text{1, 2}$ & $\star,\,\star$ \\
$\frac{4329}{11881}$ & $\frac{493}{11881}$ & $\text{2(1,1)}$ & $\text{2(1,1)}$ & $\text{2, 2}$ & $\star,\,\star$ \\
$\frac{32}{75}$ & $\frac{2}{9}$ & $\text{1(2)}$ & $\text{1(2)}$ & $\text{5, 1}$ & $\text{}$ \\
$0.632444$ & $\frac{2}{9}$ & $\text{1(2)}$ & $\text{1(2)}$ & $\text{3, 2}$ & $\star,$ \\
$\frac{2}{3}$ & $\frac{2}{9}$ & $\text{1(2)}$ & $\text{1(2)}$ & $\text{0, 1}$ & $\text{}$ \\
$0.737926$ & $\frac{2}{9}$ & $\text{1(2)}$ & $\text{1(2)}$ & $\text{1, 2}$ & $\star,$ \\
\end{longtable}
\end{center}

\begin{center}
\begin{longtable}{|c c c c c c|}\caption{Fixed points found for $N_s=2$ and $N_f=3$.} \\
\hline
$S$ & $Y$ & \makecell{$\#$ different \\[-3pt] $\gamma_\phi$(degeneracies)} & \makecell{$\#$ different \\[-3pt] $\gamma_\psi$(degeneracies)} & $\#\,\kappa<0$, =0 & $a_2,\,b_1\neq0$ \\ [0.5ex] 
 \hline
\endfirsthead

\multicolumn{6}{c}%
{{\bfseries \tablename\ \thetable{} -- continued from previous page}} \\
\hline
$S$ & $Y$ & \makecell{$\#$ different \\[-3pt] $\gamma_\phi$(degeneracies)} & \makecell{$\#$ different \\[-3pt] $\gamma_\psi$(degeneracies)} & $\#\,\kappa<0$, =0  & $a_2,\,b_1\neq0$ \\ [0.5ex] 
 \hline
\endhead

\hline \multicolumn{6}{|r|}{{Continued on next page}} \\ \hline
\endfoot

\hline
\endlastfoot \label{tab:23}
$\!\!0.310449$ & $\frac{1}{27}$ & $\text{2(1,1)}$ & $\text{1(3)}$ & $\text{6, 3}$ & $\star,\,\star$ \\
$0.310719$ & $\frac{1}{27}$ & $\text{2(1,1)}$ & $\text{1(3)}$ & $\text{7, 3}$ & $\star,\,\star$ \\
$0.10243$ & $\frac{1191}{22472}$ & $\text{2(1,1)}$ & $\text{3(1,1,1)}$ & $\text{8, 4}$ & $\star,\,\star$ \\
$0.238032$ & $\frac{1191}{22472}$ & $\text{2(1,1)}$ & $\text{3(1,1,1)}$ & $\text{6, 4}$ & $\star,\,\star$ \\
$0.239854$ & $\frac{1191}{22472}$ & $\text{2(1,1)}$ & $\text{3(1,1,1)}$ & $\text{7, 4}$ & $\star,\,\star$ \\
$0.344635$ & $\frac{1191}{22472}$ & $\text{2(1,1)}$ & $\text{3(1,1,1)}$ & $\text{4, 4}$ & $\star,\,\star$ \\
$0.359239$ & $\frac{1191}{22472}$ & $\text{2(1,1)}$ & $\text{3(1,1,1)}$ & $\text{5, 4}$ & $\star,\,\star$ \\
$0.365948$ & $\frac{1191}{22472}$ & $\text{2(1,1)}$ & $\text{3(1,1,1)}$ & $\text{3, 4}$ & $\star,\,\star$ \\
$0.380859$ & $\frac{1191}{22472}$ & $\text{2(1,1)}$ & $\text{3(1,1,1)}$ & $\text{6, 4}$ & $\star,\,\star$ \\
$0.0993507$ & $\frac{4}{75}$ & $\text{1(2)}$ & $\text{1(3)}$ & $\text{8, 4}$ & $\text{}$ \\
$\frac{6}{25}$ & $\frac{4}{75}$ & $\text{1(2)}$ & $\text{1(3)}$ & $\text{6, 5}$ & $\star,$ \\
$\frac{6}{25}$ & $\frac{4}{75}$ & $\text{1(2)}$ & $\text{1(3)}$ & $\text{4, 5}$ & $\star,$ \\
$\frac{9}{25}$ & $\frac{4}{75}$ & $\text{1(2)}$ & $\text{1(3)}$ & $\text{4, 5}$ & $\star,$ \\
$\frac{9}{25}$ & $\frac{4}{75}$ & $\text{1(2)}$ & $\text{1(3)}$ & $\text{2, 5}$ & $\star,$ \\
$0.371049$ & $\frac{4}{75}$ & $\text{1(2)}$ & $\text{1(3)}$ & $\text{3, 4}$ & $\text{}$ \\
$0.371049$ & $\frac{4}{75}$ & $\text{1(2)}$ & $\text{1(3)}$ & $\text{1, 4}$ & $\text{}$ \\
$\frac{2}{5}$ & $\frac{4}{75}$ & $\text{1(2)}$ & $\text{1(3)}$ & $\text{5, 5}$ & $\text{}$ \\
$\frac{2}{5}$ & $\frac{4}{75}$ & $\text{1(2)}$ & $\text{1(3)}$ & $\text{3, 5}$ & $\text{}$ \\
$0.136488$ & $\frac{2}{27}$ & $\text{1(2)}$ & $\text{2(1,2)}$ & $\text{7, 3}$ & $\text{}$ \\
$0.287065$ & $\frac{2}{27}$ & $\text{1(2)}$ & $\text{2(1,2)}$ & $\text{5, 4}$ & $\star,$ \\
$0.408408$ & $\frac{2}{27}$ & $\text{1(2)}$ & $\text{2(1,2)}$ & $\text{3, 4}$ & $\star,$ \\
$0.411661$ & $\frac{2}{27}$ & $\text{1(2)}$ & $\text{2(1,2)}$ & $\text{2, 3}$ & $\text{}$ \\
$\frac{338}{729}$ & $\frac{2}{27}$ & $\text{1(2)}$ & $\text{2(1,2)}$ & $\text{4, 4}$ & $\text{}$ \\
$0.432574$ & $\frac{2148}{10201}$ & $\text{2(1,1)}$ & $\text{2(2,1)}$ & $\text{5, 4}$ & $\star,\,\star$ \\
$0.432574$ & $\frac{2148}{10201}$ & $\text{2(1,1)}$ & $\text{2(2,1)}$ & $\text{5, 4}$ & $\star,\,\star$ \\
$0.595522$ & $\frac{2148}{10201}$ & $\text{2(1,1)}$ & $\text{2(2,1)}$ & $\text{3, 4}$ & $\star,\,\star$ \\
$0.621098$ & $\frac{2148}{10201}$ & $\text{2(1,1)}$ & $\text{2(2,1)}$ & $\text{0, 4}$ & $\star,\,\star$ \\
$0.624528$ & $\frac{2148}{10201}$ & $\text{2(1,1)}$ & $\text{2(2,1)}$ & $\text{4, 4}$ & $\star,\,\star$ \\
$0.63302$ & $\frac{2148}{10201}$ & $\text{2(1,1)}$ & $\text{2(2,1)}$ & $\text{3, 4}$ & $\star,\,\star$ \\
$0.670493$ & $\frac{2148}{10201}$ & $\text{2(1,1)}$ & $\text{2(2,1)}$ & $\text{2, 4}$ & $\star,\,\star$ \\
$0.703234$ & $\frac{2148}{10201}$ & $\text{2(1,1)}$ & $\text{2(2,1)}$ & $\text{1, 4}$ & $\star,\,\star$ \\
$0.718282$ & $\frac{2148}{10201}$ & $\text{2(1,1)}$ & $\text{2(2,1)}$ & $\text{2, 4}$ & $\star,\,\star$ \\
$0.877832$ & $\frac{2148}{10201}$ & $\text{2(1,1)}$ & $\text{2(2,1)}$ & $\text{3, 4}$ & $\star,\,\star$ \\
\end{longtable}
\end{center}

\begin{center}
\begin{longtable}{|c c c c c c|}\caption{Fixed points found for $N_s=2$ and $N_f=4$.} \\
\hline
$S$ & $Y$ & \makecell{$\#$ different \\[-3pt] $\gamma_\phi$(degeneracies)} & \makecell{$\#$ different \\[-3pt] $\gamma_\psi$(degeneracies)} & $\#\,\kappa<0$, =0  & $a_2,\,b_1\neq0$\\ [0.5ex] 
 \hline
\endfirsthead

\multicolumn{6}{c}%
{{\bfseries \tablename\ \thetable{} -- continued from previous page}} \\
\hline
$S$ & $Y$ & \makecell{$\#$ different \\[-3pt] $\gamma_\phi$(degeneracies)} & \makecell{$\#$ different \\[-3pt] $\gamma_\psi$(degeneracies)} & $\#\,\kappa<0$, =0 & $a_2,\,b_1\neq0$ \\ [0.5ex] 
 \hline
\endhead

\hline \multicolumn{6}{|r|}{{Continued on next page}} \\ \hline
\endfoot

\hline
\endlastfoot \label{tab:24}
$0.135$ & $\frac{1}{16}$ & $\text{1(2)}$ & $\text{1(4)}$ & $\text{6, 8}$ & $\text{}$ \\
$0.135267$ & $\frac{1}{16}$ & $\text{1(2)}$ & $\text{1(4)}$ & $\text{6, 8}$ & $\text{}$ \\
$0.136281$ & $\frac{1}{16}$ & $\text{1(2)}$ & $\text{1(4)}$ & $\text{6, 8}$ & $\text{}$ \\
$0.140128$ & $\frac{1}{16}$ & $\text{1(2)}$ & $\text{1(4)}$ & $\text{6, 8}$ & $\text{}$ \\
$0.141986$ & $\frac{1}{16}$ & $\text{1(2)}$ & $\text{1(4)}$ & $\text{6, 8}$ & $\text{}$ \\
$0.144275$ & $\frac{1}{16}$ & $\text{1(2)}$ & $\text{1(4)}$ & $\text{6, 8}$ & $\text{}$ \\
$0.153065$ & $\frac{1}{16}$ & $\text{1(2)}$ & $\text{1(4)}$ & $\text{5, 9}$ & $\text{}$ \\
$0.154858$ & $\frac{1}{16}$ & $\text{1(2)}$ & $\text{1(4)}$ & $\text{5, 8}$ & $\text{}$ \\
$0.26852$ & $\frac{1}{16}$ & $\text{1(2)}$ & $\text{1(4)}$ & $\text{4, 8}$ & $\star,$ \\
$0.268533$ & $\frac{1}{16}$ & $\text{1(2)}$ & $\text{1(4)}$ & $\text{6, 8}$ & $\star,$ \\
$0.268544$ & $\frac{1}{16}$ & $\text{1(2)}$ & $\text{1(4)}$ & $\text{4, 8}$ & $\star,$ \\
$0.268552$ & $\frac{1}{16}$ & $\text{1(2)}$ & $\text{1(4)}$ & $\text{4, 8}$ & $\star,$ \\
$0.268581$ & $\frac{1}{16}$ & $\text{1(2)}$ & $\text{1(4)}$ & $\text{4, 8}$ & $\star,$ \\
$0.268604$ & $\frac{1}{16}$ & $\text{1(2)}$ & $\text{1(4)}$ & $\text{4, 8}$ & $\star,$ \\
$0.268621$ & $\frac{1}{16}$ & $\text{1(2)}$ & $\text{1(4)}$ & $\text{4, 8}$ & $\star,$ \\
$0.26865$ & $\frac{1}{16}$ & $\text{1(2)}$ & $\text{1(4)}$ & $\text{4, 8}$ & $\star,$ \\
$0.268734$ & $\frac{1}{16}$ & $\text{1(2)}$ & $\text{1(4)}$ & $\text{6, 8}$ & $\star,$ \\
$0.268745$ & $\frac{1}{16}$ & $\text{1(2)}$ & $\text{1(4)}$ & $\text{4, 8}$ & $\star,$ \\
$0.268916$ & $\frac{1}{16}$ & $\text{1(2)}$ & $\text{1(4)}$ & $\text{4, 8}$ & $\star,$ \\
$0.268988$ & $\frac{1}{16}$ & $\text{1(2)}$ & $\text{1(4)}$ & $\text{4, 8}$ & $\star,$ \\
$0.269024$ & $\frac{1}{16}$ & $\text{1(2)}$ & $\text{1(4)}$ & $\text{4, 8}$ & $\star,$ \\
$0.269032$ & $\frac{1}{16}$ & $\text{1(2)}$ & $\text{1(4)}$ & $\text{6, 8}$ & $\star,$ \\
$0.269074$ & $\frac{1}{16}$ & $\text{1(2)}$ & $\text{1(4)}$ & $\text{4, 8}$ & $\star,$ \\
$0.269203$ & $\frac{1}{16}$ & $\text{1(2)}$ & $\text{1(4)}$ & $\text{4, 8}$ & $\star,$ \\
$0.269254$ & $\frac{1}{16}$ & $\text{1(2)}$ & $\text{1(4)}$ & $\text{4, 8}$ & $\star,$ \\
$0.269337$ & $\frac{1}{16}$ & $\text{1(2)}$ & $\text{1(4)}$ & $\text{6, 8}$ & $\star,$ \\
$0.26949$ & $\frac{1}{16}$ & $\text{1(2)}$ & $\text{1(4)}$ & $\text{4, 8}$ & $\star,$ \\
$0.269497$ & $\frac{1}{16}$ & $\text{1(2)}$ & $\text{1(4)}$ & $\text{4, 8}$ & $\star,$ \\
$0.269553$ & $\frac{1}{16}$ & $\text{1(2)}$ & $\text{1(4)}$ & $\text{4, 8}$ & $\star,$ \\
$0.269596$ & $\frac{1}{16}$ & $\text{1(2)}$ & $\text{1(4)}$ & $\text{6, 8}$ & $\star,$ \\
$0.26963$ & $\frac{1}{16}$ & $\text{1(2)}$ & $\text{1(4)}$ & $\text{4, 8}$ & $\star,$ \\
$0.269789$ & $\frac{1}{16}$ & $\text{1(2)}$ & $\text{1(4)}$ & $\text{6, 8}$ & $\star,$ \\
$0.269851$ & $\frac{1}{16}$ & $\text{1(2)}$ & $\text{1(4)}$ & $\text{4, 8}$ & $\star,$ \\
$0.269866$ & $\frac{1}{16}$ & $\text{1(2)}$ & $\text{1(4)}$ & $\text{4, 8}$ & $\star,$ \\
$0.270031$ & $\frac{1}{16}$ & $\text{1(2)}$ & $\text{1(4)}$ & $\text{6, 8}$ & $\star,$ \\
$0.270176$ & $\frac{1}{16}$ & $\text{1(2)}$ & $\text{1(4)}$ & $\text{5, 8}$ & $\star,$ \\
$0.270514$ & $\frac{1}{16}$ & $\text{1(2)}$ & $\text{1(4)}$ & $\text{5, 8}$ & $\star,$ \\
$0.270533$ & $\frac{1}{16}$ & $\text{1(2)}$ & $\text{1(4)}$ & $\text{5, 8}$ & $\star,$ \\
$0.270909$ & $\frac{1}{16}$ & $\text{1(2)}$ & $\text{1(4)}$ & $\text{5, 8}$ & $\star,$ \\
$0.271088$ & $\frac{1}{16}$ & $\text{1(2)}$ & $\text{1(4)}$ & $\text{4, 8}$ & $\star,$ \\
$0.271141$ & $\frac{1}{16}$ & $\text{1(2)}$ & $\text{1(4)}$ & $\text{5, 8}$ & $\star,$ \\
$0.271873$ & $\frac{1}{16}$ & $\text{1(2)}$ & $\text{1(4)}$ & $\text{4, 8}$ & $\star,$ \\
$0.271925$ & $\frac{1}{16}$ & $\text{1(2)}$ & $\text{1(4)}$ & $\text{6, 8}$ & $\star,$ \\
$0.272103$ & $\frac{1}{16}$ & $\text{1(2)}$ & $\text{1(4)}$ & $\text{4, 8}$ & $\star,$ \\
$0.272212$ & $\frac{1}{16}$ & $\text{1(2)}$ & $\text{1(4)}$ & $\text{5, 8}$ & $\star,$ \\
$0.272429$ & $\frac{1}{16}$ & $\text{1(2)}$ & $\text{1(4)}$ & $\text{7, 8}$ & $\star,$ \\
$0.272563$ & $\frac{1}{16}$ & $\text{1(2)}$ & $\text{1(4)}$ & $\text{5, 8}$ & $\star,$ \\
$0.27281$ & $\frac{1}{16}$ & $\text{1(2)}$ & $\text{1(4)}$ & $\text{5, 8}$ & $\star,$ \\
$0.273018$ & $\frac{1}{16}$ & $\text{1(2)}$ & $\text{1(4)}$ & $\text{4, 8}$ & $\star,$ \\
$0.273331$ & $\frac{1}{16}$ & $\text{1(2)}$ & $\text{1(4)}$ & $\text{4, 8}$ & $\star,$ \\
$0.274007$ & $\frac{1}{16}$ & $\text{1(2)}$ & $\text{1(4)}$ & $\text{5, 8}$ & $\star,$ \\
$0.274095$ & $\frac{1}{16}$ & $\text{1(2)}$ & $\text{1(4)}$ & $\text{4, 8}$ & $\star,$ \\
$0.274136$ & $\frac{1}{16}$ & $\text{1(2)}$ & $\text{1(4)}$ & $\text{3, 9}$ & $\star,$ \\
$0.27426$ & $\frac{1}{16}$ & $\text{1(2)}$ & $\text{1(4)}$ & $\text{5, 8}$ & $\star,$ \\
$0.274345$ & $\frac{1}{16}$ & $\text{1(2)}$ & $\text{1(4)}$ & $\text{3, 8}$ & $\star,$ \\
$0.27451$ & $\frac{1}{16}$ & $\text{1(2)}$ & $\text{1(4)}$ & $\text{3, 8}$ & $\star,$ \\
$0.274924$ & $\frac{1}{16}$ & $\text{1(2)}$ & $\text{1(4)}$ & $\text{3, 8}$ & $\star,$ \\
$0.275524$ & $\frac{1}{16}$ & $\text{1(2)}$ & $\text{1(4)}$ & $\text{4, 8}$ & $\star,$ \\
$0.276723$ & $\frac{1}{16}$ & $\text{1(2)}$ & $\text{1(4)}$ & $\text{3, 8}$ & $\star,$ \\
$0.27687$ & $\frac{1}{16}$ & $\text{1(2)}$ & $\text{1(4)}$ & $\text{5, 8}$ & $\star,$ \\
$0.277346$ & $\frac{1}{16}$ & $\text{1(2)}$ & $\text{1(4)}$ & $\text{5, 8}$ & $\star,$ \\
$0.277617$ & $\frac{1}{16}$ & $\text{1(2)}$ & $\text{1(4)}$ & $\text{5, 8}$ & $\star,$ \\
$0.277671$ & $\frac{1}{16}$ & $\text{1(2)}$ & $\text{1(4)}$ & $\text{5, 8}$ & $\star,$ \\
$0.277844$ & $\frac{1}{16}$ & $\text{1(2)}$ & $\text{1(4)}$ & $\text{5, 8}$ & $\star,$ \\
$0.278128$ & $\frac{1}{16}$ & $\text{1(2)}$ & $\text{1(4)}$ & $\text{5, 8}$ & $\star,$ \\
$0.278186$ & $\frac{1}{16}$ & $\text{1(2)}$ & $\text{1(4)}$ & $\text{5, 8}$ & $\star,$ \\
$0.278477$ & $\frac{1}{16}$ & $\text{1(2)}$ & $\text{1(4)}$ & $\text{5, 8}$ & $\star,$ \\
$0.278763$ & $\frac{1}{16}$ & $\text{1(2)}$ & $\text{1(4)}$ & $\text{5, 8}$ & $\star,$ \\
$0.278953$ & $\frac{1}{16}$ & $\text{1(2)}$ & $\text{1(4)}$ & $\text{5, 8}$ & $\star,$ \\
$0.279605$ & $\frac{1}{16}$ & $\text{1(2)}$ & $\text{1(4)}$ & $\text{5, 10}$ & $\star,$ \\
$0.279754$ & $\frac{1}{16}$ & $\text{1(2)}$ & $\text{1(4)}$ & $\text{4, 8}$ & $\star,$ \\
$0.280278$ & $\frac{1}{16}$ & $\text{1(2)}$ & $\text{1(4)}$ & $\text{5, 8}$ & $\star,$ \\
$0.280656$ & $\frac{1}{16}$ & $\text{1(2)}$ & $\text{1(4)}$ & $\text{5, 8}$ & $\star,$ \\
$0.307795$ & $\frac{1}{16}$ & $\text{1(2)}$ & $\text{1(4)}$ & $\text{3, 8}$ & $\star,$ \\
$0.308871$ & $\frac{1}{16}$ & $\text{1(2)}$ & $\text{1(4)}$ & $\text{4, 8}$ & $\star,$ \\
$0.312307$ & $\frac{1}{16}$ & $\text{1(2)}$ & $\text{1(4)}$ & $\text{2, 8}$ & $\star,$ \\
$0.314198$ & $\frac{1}{16}$ & $\text{1(2)}$ & $\text{1(4)}$ & $\text{3, 10}$ & $\star,$ \\
$0.314269$ & $\frac{1}{16}$ & $\text{1(2)}$ & $\text{1(4)}$ & $\text{3, 8}$ & $\star,$ \\
$0.317513$ & $\frac{1}{16}$ & $\text{1(2)}$ & $\text{1(4)}$ & $\text{5, 8}$ & $\star,$ \\
$0.3194$ & $\frac{1}{16}$ & $\text{1(2)}$ & $\text{1(4)}$ & $\text{3, 8}$ & $\star,$ \\
$0.321802$ & $\frac{1}{16}$ & $\text{1(2)}$ & $\text{1(4)}$ & $\text{3, 8}$ & $\star,$ \\
$0.325682$ & $\frac{1}{16}$ & $\text{1(2)}$ & $\text{1(4)}$ & $\text{3, 8}$ & $\star,$ \\
$0.331278$ & $\frac{1}{16}$ & $\text{1(2)}$ & $\text{1(4)}$ & $\text{5, 8}$ & $\star,$ \\
$0.333196$ & $\frac{1}{16}$ & $\text{1(2)}$ & $\text{1(4)}$ & $\text{3, 8}$ & $\star,$ \\
$0.336749$ & $\frac{1}{16}$ & $\text{1(2)}$ & $\text{1(4)}$ & $\text{3, 8}$ & $\star,$ \\
$0.337934$ & $\frac{1}{16}$ & $\text{1(2)}$ & $\text{1(4)}$ & $\text{3, 8}$ & $\star,$ \\
$0.341643$ & $\frac{1}{16}$ & $\text{1(2)}$ & $\text{1(4)}$ & $\text{3, 8}$ & $\star,$ \\
$0.344151$ & $\frac{1}{16}$ & $\text{1(2)}$ & $\text{1(4)}$ & $\text{5, 8}$ & $\star,$ \\
$0.345882$ & $\frac{1}{16}$ & $\text{1(2)}$ & $\text{1(4)}$ & $\text{3, 8}$ & $\star,$ \\
$0.352982$ & $\frac{1}{16}$ & $\text{1(2)}$ & $\text{1(4)}$ & $\text{3, 8}$ & $\star,$ \\
$0.359244$ & $\frac{1}{16}$ & $\text{1(2)}$ & $\text{1(4)}$ & $\text{3, 8}$ & $\star,$ \\
$0.361662$ & $\frac{1}{16}$ & $\text{1(2)}$ & $\text{1(4)}$ & $\text{3, 8}$ & $\star,$ \\
$0.365092$ & $\frac{1}{16}$ & $\text{1(2)}$ & $\text{1(4)}$ & $\text{3, 8}$ & $\star,$ \\
$0.372355$ & $\frac{1}{16}$ & $\text{1(2)}$ & $\text{1(4)}$ & $\text{3, 8}$ & $\star,$ \\
$0.374331$ & $\frac{1}{16}$ & $\text{1(2)}$ & $\text{1(4)}$ & $\text{3, 8}$ & $\star,$ \\
$0.375208$ & $\frac{1}{16}$ & $\text{1(2)}$ & $\text{1(4)}$ & $\text{1, 8}$ & $\text{}$ \\
$0.375619$ & $\frac{1}{16}$ & $\text{1(2)}$ & $\text{1(4)}$ & $\text{1, 8}$ & $\text{}$ \\
$0.375947$ & $\frac{1}{16}$ & $\text{1(2)}$ & $\text{1(4)}$ & $\text{2, 8}$ & $\star,$ \\
$0.377629$ & $\frac{1}{16}$ & $\text{1(2)}$ & $\text{1(4)}$ & $\text{2, 8}$ & $\star,$ \\
$0.378414$ & $\frac{1}{16}$ & $\text{1(2)}$ & $\text{1(4)}$ & $\text{1, 8}$ & $\text{}$ \\
$0.379566$ & $\frac{1}{16}$ & $\text{1(2)}$ & $\text{1(4)}$ & $\text{2, 8}$ & $\star,$ \\
$0.380593$ & $\frac{1}{16}$ & $\text{1(2)}$ & $\text{1(4)}$ & $\text{1, 8}$ & $\text{}$ \\
$0.381094$ & $\frac{1}{16}$ & $\text{1(2)}$ & $\text{1(4)}$ & $\text{1, 8}$ & $\text{}$ \\
$0.382421$ & $\frac{1}{16}$ & $\text{1(2)}$ & $\text{1(4)}$ & $\text{4, 8}$ & $\star,$ \\
$0.382862$ & $\frac{1}{16}$ & $\text{1(2)}$ & $\text{1(4)}$ & $\text{2, 8}$ & $\star,$ \\
$0.383032$ & $\frac{1}{16}$ & $\text{1(2)}$ & $\text{1(4)}$ & $\text{4, 8}$ & $\star,$ \\
$0.383406$ & $\frac{1}{16}$ & $\text{1(2)}$ & $\text{1(4)}$ & $\text{2, 8}$ & $\star,$ \\
$0.384387$ & $\frac{1}{16}$ & $\text{1(2)}$ & $\text{1(4)}$ & $\text{1, 8}$ & $\text{}$ \\
$0.384491$ & $\frac{1}{16}$ & $\text{1(2)}$ & $\text{1(4)}$ & $\text{3, 8}$ & $\text{}$ \\
$0.386203$ & $\frac{1}{16}$ & $\text{1(2)}$ & $\text{1(4)}$ & $\text{3, 8}$ & $\text{}$ \\
$0.386914$ & $\frac{1}{16}$ & $\text{1(2)}$ & $\text{1(4)}$ & $\text{1, 8}$ & $\text{}$ \\
$0.387962$ & $\frac{1}{16}$ & $\text{1(2)}$ & $\text{1(4)}$ & $\text{5, 8}$ & $\text{}$ \\
$0.389562$ & $\frac{1}{16}$ & $\text{1(2)}$ & $\text{1(4)}$ & $\text{4, 8}$ & $\text{}$ \\
$0.390925$ & $\frac{1}{16}$ & $\text{1(2)}$ & $\text{1(4)}$ & $\text{1, 8}$ & $\text{}$ \\
$0.391891$ & $\frac{1}{16}$ & $\text{1(2)}$ & $\text{1(4)}$ & $\text{4, 10}$ & $\text{}$ \\
$0.392975$ & $\frac{1}{16}$ & $\text{1(2)}$ & $\text{1(4)}$ & $\text{6, 8}$ & $\text{}$ \\
$0.39368$ & $\frac{1}{16}$ & $\text{1(2)}$ & $\text{1(4)}$ & $\text{1, 10}$ & $\text{}$ \\
$0.393713$ & $\frac{1}{16}$ & $\text{1(2)}$ & $\text{1(4)}$ & $\text{2, 8}$ & $\star,$ \\
$0.395524$ & $\frac{1}{16}$ & $\text{1(2)}$ & $\text{1(4)}$ & $\text{2, 8}$ & $\star,$ \\
$0.396745$ & $\frac{1}{16}$ & $\text{1(2)}$ & $\text{1(4)}$ & $\text{4, 8}$ & $\text{}$ \\
$0.396918$ & $\frac{1}{16}$ & $\text{1(2)}$ & $\text{1(4)}$ & $\text{4, 8}$ & $\text{}$ \\
$0.396983$ & $\frac{1}{16}$ & $\text{1(2)}$ & $\text{1(4)}$ & $\text{2, 8}$ & $\star,$ \\
$0.397389$ & $\frac{1}{16}$ & $\text{1(2)}$ & $\text{1(4)}$ & $\text{2, 8}$ & $\star,$ \\
$0.397804$ & $\frac{1}{16}$ & $\text{1(2)}$ & $\text{1(4)}$ & $\text{4, 8}$ & $\text{}$ \\
$0.398367$ & $\frac{1}{16}$ & $\text{1(2)}$ & $\text{1(4)}$ & $\text{4, 8}$ & $\text{}$ \\
$0.399186$ & $\frac{1}{16}$ & $\text{1(2)}$ & $\text{1(4)}$ & $\text{6, 8}$ & $\text{}$ \\
$0.402385$ & $\frac{1}{16}$ & $\text{1(2)}$ & $\text{1(4)}$ & $\text{2, 8}$ & $\star,$ \\
$0.403082$ & $\frac{1}{16}$ & $\text{1(2)}$ & $\text{1(4)}$ & $\text{4, 8}$ & $\text{}$ \\
$0.403357$ & $\frac{1}{16}$ & $\text{1(2)}$ & $\text{1(4)}$ & $\text{4, 8}$ & $\star,$ \\
$0.403859$ & $\frac{1}{16}$ & $\text{1(2)}$ & $\text{1(4)}$ & $\text{4, 8}$ & $\text{}$ \\
$0.403921$ & $\frac{1}{16}$ & $\text{1(2)}$ & $\text{1(4)}$ & $\text{2, 8}$ & $\star,$ \\
$0.405881$ & $\frac{1}{16}$ & $\text{1(2)}$ & $\text{1(4)}$ & $\text{2, 8}$ & $\star,$ \\
$0.407596$ & $\frac{1}{16}$ & $\text{1(2)}$ & $\text{1(4)}$ & $\text{2, 8}$ & $\star,$ \\
$0.414038$ & $\frac{1}{16}$ & $\text{1(2)}$ & $\text{1(4)}$ & $\text{4, 8}$ & $\text{}$ \\
$0.414147$ & $\frac{1}{16}$ & $\text{1(2)}$ & $\text{1(4)}$ & $\text{2, 8}$ & $\star,$ \\
$0.41774$ & $\frac{1}{16}$ & $\text{1(2)}$ & $\text{1(4)}$ & $\text{4, 8}$ & $\text{}$ \\
$0.418276$ & $\frac{1}{16}$ & $\text{1(2)}$ & $\text{1(4)}$ & $\text{2, 8}$ & $\star,$ \\
$0.419079$ & $\frac{1}{16}$ & $\text{1(2)}$ & $\text{1(4)}$ & $\text{4, 8}$ & $\star,$ \\
$0.419573$ & $\frac{1}{16}$ & $\text{1(2)}$ & $\text{1(4)}$ & $\text{2, 8}$ & $\star,$ \\
$0.420037$ & $\frac{1}{16}$ & $\text{1(2)}$ & $\text{1(4)}$ & $\text{2, 8}$ & $\star,$ \\
$0.420698$ & $\frac{1}{16}$ & $\text{1(2)}$ & $\text{1(4)}$ & $\text{4, 8}$ & $\star,$ \\
$0.421879$ & $\frac{1}{16}$ & $\text{1(2)}$ & $\text{1(4)}$ & $\text{2, 8}$ & $\star,$ \\
$0.422539$ & $\frac{1}{16}$ & $\text{1(2)}$ & $\text{1(4)}$ & $\text{2, 8}$ & $\star,$ \\
$0.422564$ & $\frac{1}{16}$ & $\text{1(2)}$ & $\text{1(4)}$ & $\text{2, 8}$ & $\star,$ \\
$0.423244$ & $\frac{1}{16}$ & $\text{1(2)}$ & $\text{1(4)}$ & $\text{4, 8}$ & $\text{}$ \\
$0.423484$ & $\frac{1}{16}$ & $\text{1(2)}$ & $\text{1(4)}$ & $\text{4, 8}$ & $\text{}$ \\
$0.423728$ & $\frac{1}{16}$ & $\text{1(2)}$ & $\text{1(4)}$ & $\text{1, 9}$ & $\star,$ \\
$0.423912$ & $\frac{1}{16}$ & $\text{1(2)}$ & $\text{1(4)}$ & $\text{4, 8}$ & $\text{}$ \\
$0.424515$ & $\frac{1}{16}$ & $\text{1(2)}$ & $\text{1(4)}$ & $\text{4, 8}$ & $\text{}$ \\
$0.424672$ & $\frac{1}{16}$ & $\text{1(2)}$ & $\text{1(4)}$ & $\text{4, 8}$ & $\text{}$ \\
$0.425003$ & $\frac{1}{16}$ & $\text{1(2)}$ & $\text{1(4)}$ & $\text{1, 8}$ & $\star,$ \\
$0.42551$ & $\frac{1}{16}$ & $\text{1(2)}$ & $\text{1(4)}$ & $\text{1, 8}$ & $\star,$ \\
$0.426232$ & $\frac{1}{16}$ & $\text{1(2)}$ & $\text{1(4)}$ & $\text{1, 8}$ & $\star,$ \\
$0.426519$ & $\frac{1}{16}$ & $\text{1(2)}$ & $\text{1(4)}$ & $\text{4, 8}$ & $\text{}$ \\
$0.427174$ & $\frac{1}{16}$ & $\text{1(2)}$ & $\text{1(4)}$ & $\text{4, 8}$ & $\text{}$ \\
$0.427916$ & $\frac{1}{16}$ & $\text{1(2)}$ & $\text{1(4)}$ & $\text{6, 8}$ & $\text{}$ \\
$0.429833$ & $\frac{1}{16}$ & $\text{1(2)}$ & $\text{1(4)}$ & $\text{4, 8}$ & $\text{}$ \\
$0.430072$ & $\frac{1}{16}$ & $\text{1(2)}$ & $\text{1(4)}$ & $\text{2, 8}$ & $\text{}$ \\
$0.430612$ & $\frac{1}{16}$ & $\text{1(2)}$ & $\text{1(4)}$ & $\text{3, 8}$ & $\text{}$ \\
$0.432475$ & $\frac{1}{16}$ & $\text{1(2)}$ & $\text{1(4)}$ & $\text{2, 9}$ & $\text{}$ \\
$0.432602$ & $\frac{1}{16}$ & $\text{1(2)}$ & $\text{1(4)}$ & $\text{5, 8}$ & $\text{}$ \\
$0.432763$ & $\frac{1}{16}$ & $\text{1(2)}$ & $\text{1(4)}$ & $\text{5, 8}$ & $\text{}$ \\
$0.432779$ & $\frac{1}{16}$ & $\text{1(2)}$ & $\text{1(4)}$ & $\text{5, 8}$ & $\text{}$ \\
$0.433177$ & $\frac{1}{16}$ & $\text{1(2)}$ & $\text{1(4)}$ & $\text{3, 8}$ & $\text{}$ \\
$0.433336$ & $\frac{1}{16}$ & $\text{1(2)}$ & $\text{1(4)}$ & $\text{3, 8}$ & $\text{}$ \\
$0.433375$ & $\frac{1}{16}$ & $\text{1(2)}$ & $\text{1(4)}$ & $\text{3, 8}$ & $\text{}$ \\
$0.433745$ & $\frac{1}{16}$ & $\text{1(2)}$ & $\text{1(4)}$ & $\text{3, 8}$ & $\text{}$ \\
$0.433977$ & $\frac{1}{16}$ & $\text{1(2)}$ & $\text{1(4)}$ & $\text{3, 8}$ & $\text{}$ \\
$0.434047$ & $\frac{1}{16}$ & $\text{1(2)}$ & $\text{1(4)}$ & $\text{5, 8}$ & $\text{}$ \\
$0.434078$ & $\frac{1}{16}$ & $\text{1(2)}$ & $\text{1(4)}$ & $\text{3, 8}$ & $\text{}$ \\
$0.434171$ & $\frac{1}{16}$ & $\text{1(2)}$ & $\text{1(4)}$ & $\text{3, 8}$ & $\text{}$ \\
$0.434717$ & $\frac{1}{16}$ & $\text{1(2)}$ & $\text{1(4)}$ & $\text{3, 8}$ & $\text{}$ \\
$0.434806$ & $\frac{1}{16}$ & $\text{1(2)}$ & $\text{1(4)}$ & $\text{3, 8}$ & $\text{}$ \\
$0.435204$ & $\frac{1}{16}$ & $\text{1(2)}$ & $\text{1(4)}$ & $\text{3, 8}$ & $\text{}$ \\
$0.435966$ & $\frac{1}{16}$ & $\text{1(2)}$ & $\text{1(4)}$ & $\text{5, 8}$ & $\text{}$ \\
$0.436023$ & $\frac{1}{16}$ & $\text{1(2)}$ & $\text{1(4)}$ & $\text{5, 8}$ & $\text{}$ \\
$0.4362$ & $\frac{1}{16}$ & $\text{1(2)}$ & $\text{1(4)}$ & $\text{3, 8}$ & $\text{}$ \\
$0.436251$ & $\frac{1}{16}$ & $\text{1(2)}$ & $\text{1(4)}$ & $\text{3, 8}$ & $\text{}$ \\
$0.436473$ & $\frac{1}{16}$ & $\text{1(2)}$ & $\text{1(4)}$ & $\text{3, 8}$ & $\text{}$ \\
$0.43654$ & $\frac{1}{16}$ & $\text{1(2)}$ & $\text{1(4)}$ & $\text{3, 8}$ & $\text{}$ \\
$0.436762$ & $\frac{1}{16}$ & $\text{1(2)}$ & $\text{1(4)}$ & $\text{3, 8}$ & $\text{}$ \\
$0.437164$ & $\frac{1}{16}$ & $\text{1(2)}$ & $\text{1(4)}$ & $\text{3, 8}$ & $\text{}$ \\
$0.437292$ & $\frac{1}{16}$ & $\text{1(2)}$ & $\text{1(4)}$ & $\text{5, 8}$ & $\text{}$ \\
$0.437349$ & $\frac{1}{16}$ & $\text{1(2)}$ & $\text{1(4)}$ & $\text{5, 8}$ & $\text{}$ \\
$0.437493$ & $\frac{1}{16}$ & $\text{1(2)}$ & $\text{1(4)}$ & $\text{3, 8}$ & $\text{}$ \\
$0.269845$ & $\frac{28410}{45292}$ & $\text{2(1,1)}$ & $\text{4(1,1,1,1)}$ & $\text{7, 7}$ & $\star,\,\star$ \\
$0.274768$ & $\frac{28410}{45292}$ & $\text{2(1,1)}$ & $\text{4(1,1,1,1)}$ & $\text{7, 7}$ & $\star,\,\star$ \\
$0.278967$ & $\frac{28410}{45292}$ & $\text{2(1,1)}$ & $\text{4(1,1,1,1)}$ & $\text{8, 7}$ & $\star,\,\star$ \\
$0.318579$ & $\frac{28410}{45292}$ & $\text{2(1,1)}$ & $\text{4(1,1,1,1)}$ & $\text{6, 7}$ & $\star,\,\star$ \\
$0.325431$ & $\frac{28410}{45292}$ & $\text{2(1,1)}$ & $\text{4(1,1,1,1)}$ & $\text{6, 7}$ & $\star,\,\star$ \\
$0.397534$ & $\frac{28410}{45292}$ & $\text{2(1,1)}$ & $\text{4(1,1,1,1)}$ & $\text{7, 7}$ & $\star,\,\star$ \\
$0.420142$ & $\frac{28410}{45292}$ & $\text{2(1,1)}$ & $\text{4(1,1,1,1)}$ & $\text{5, 7}$ & $\star,\,\star$ \\
$0.435306$ & $\frac{28410}{45292}$ & $\text{2(1,1)}$ & $\text{4(1,1,1,1)}$ & $\text{6, 7}$ & $\star,\,\star$ \\
$0.135737$ & $\frac{12068}{191535}$ & $\text{2(1,1)}$ & $\text{4(1,1,1,1)}$ & $\text{10, 7}$ & $\star,\,\star$ \\
$0.272473$ & $\frac{12068}{191535}$ & $\text{2(1,1)}$ & $\text{4(1,1,1,1)}$ & $\text{7, 7}$ & $\star,\,\star$ \\
$0.366059$ & $\frac{12068}{191535}$ & $\text{2(1,1)}$ & $\text{4(1,1,1,1)}$ & $\text{5, 7}$ & $\star,\,\star$ \\
$0.377273$ & $\frac{12068}{191535}$ & $\text{2(1,1)}$ & $\text{4(1,1,1,1)}$ & $\text{6, 7}$ & $\star,\,\star$ \\
$0.381222$ & $\frac{12068}{191535}$ & $\text{2(1,1)}$ & $\text{4(1,1,1,1)}$ & $\text{7, 7}$ & $\star,\,\star$ \\
$0.432955$ & $\frac{12068}{191535}$ & $\text{2(1,1)}$ & $\text{4(1,1,1,1)}$ & $\text{7, 7}$ & $\star,\,\star$ \\
$0.278466$ & $\frac{1429}{22201}$ & $\text{2(1,1)}$ & $\text{2(2,2)}$ & $\text{9, 7}$ & $\star,\,\star$ \\
$0.399211$ & $\frac{1429}{22201}$ & $\text{2(1,1)}$ & $\text{2(2,2)}$ & $\text{8, 7}$ & $\star,\,\star$ \\
$0.399211$ & $\frac{1429}{22201}$ & $\text{2(1,1)}$ & $\text{2(2,2)}$ & $\text{8, 6}$ & $\star,\,\star$ \\
$0.40947$ & $\frac{1429}{22201}$ & $\text{2(1,1)}$ & $\text{2(2,2)}$ & $\text{6, 7}$ & $\star,\,\star$ \\
$0.168508$ & $\frac{14307}{180625}$ & $\text{2(1,1)}$ & $\text{4(1,1,1,1)}$ & $\text{8, 7}$ & $\star,\,\star$ \\
$0.296289$ & $\frac{14307}{180625}$ & $\text{2(1,1)}$ & $\text{4(1,1,1,1)}$ & $\text{6, 7}$ & $\star,\,\star$ \\
$0.312871$ & $\frac{14307}{180625}$ & $\text{2(1,1)}$ & $\text{4(1,1,1,1)}$ & $\text{7, 7}$ & $\star,\,\star$ \\
$0.316749$ & $\frac{14307}{180625}$ & $\text{2(1,1)}$ & $\text{4(1,1,1,1)}$ & $\text{6, 7}$ & $\star,\,\star$ \\
$0.382194$ & $\frac{14307}{180625}$ & $\text{2(1,1)}$ & $\text{4(1,1,1,1)}$ & $\text{5, 7}$ & $\star,\,\star$ \\
$0.404715$ & $\frac{14307}{180625}$ & $\text{2(1,1)}$ & $\text{4(1,1,1,1)}$ & $\text{3, 7}$ & $\star,\,\star$ \\
$0.420064$ & $\frac{14307}{180625}$ & $\text{2(1,1)}$ & $\text{4(1,1,1,1)}$ & $\text{4, 7}$ & $\star,\,\star$ \\
$0.42027$ & $\frac{14307}{180625}$ & $\text{2(1,1)}$ & $\text{4(1,1,1,1)}$ & $\text{5, 7}$ & $\star,\,\star$ \\
$0.476291$ & $\frac{14307}{180625}$ & $\text{2(1,1)}$ & $\text{4(1,1,1,1)}$ & $\text{6, 7}$ & $\star,\,\star$ \\
$0.483238$ & $\frac{14307}{180625}$ & $\text{2(1,1)}$ & $\text{4(1,1,1,1)}$ & $\text{5, 7}$ & $\star,\,\star$ \\
$0.440034$ & $\frac{1}{5}$ & $\text{1(2)}$ & $\text{2(2,2)}$ & $\text{5, 7}$ & $\text{}$ \\
$0.579461$ & $\frac{1}{5}$ & $\text{1(2)}$ & $\text{2(2,2)}$ & $\text{0, 7}$ & $\text{}$ \\
$0.598295$ & $\frac{1}{5}$ & $\text{1(2)}$ & $\text{2(2,2)}$ & $\text{3, 7}$ & $\star,$ \\
$0.598295$ & $\frac{1}{5}$ & $\text{1(2)}$ & $\text{2(2,2)}$ & $\text{3, 7}$ & $\star,$ \\
$0.613931$ & $\frac{1}{5}$ & $\text{1(2)}$ & $\text{2(2,2)}$ & $\text{4, 7}$ & $\star,$ \\
$0.655038$ & $\frac{1}{5}$ & $\text{1(2)}$ & $\text{2(2,2)}$ & $\text{2, 7}$ & $\star,$ \\
$0.679402$ & $\frac{1}{5}$ & $\text{1(2)}$ & $\text{2(2,2)}$ & $\text{1, 7}$ & $\star,$ \\
$0.845888$ & $\frac{1}{5}$ & $\text{1(2)}$ & $\text{2(2,2)}$ & $\text{3, 7}$ & $\text{}$ \\
$0.849994$ & $\frac{1}{5}$ & $\text{1(2)}$ & $\text{2(2,2)}$ & $\text{2, 7}$ & $\text{}$ \\
$0.435233$ & $\frac{157}{784}$ & $\text{2(1,1)}$ & $\text{2(2,2)}$ & $\text{7, 7}$ & $\star,\,\star$ \\
$0.435233$ & $\frac{157}{784}$ & $\text{2(1,1)}$ & $\text{2(2,2)}$ & $\text{7, 6}$ & $\star,\,\star$ \\
$0.561397$ & $\frac{157}{784}$ & $\text{2(1,1)}$ & $\text{2(2,2)}$ & $\text{5, 7}$ & $\star,\,\star$ \\
$0.561397$ & $\frac{157}{784}$ & $\text{2(1,1)}$ & $\text{2(2,2)}$ & $\text{5, 6}$ & $\star,\,\star$ \\
$0.582485$ & $\frac{157}{784}$ & $\text{2(1,1)}$ & $\text{2(2,2)}$ & $\text{2, 6}$ & $\star,\,\star$ \\
$0.614584$ & $\frac{157}{784}$ & $\text{2(1,1)}$ & $\text{2(2,2)}$ & $\text{6, 7}$ & $\star,\,\star$ \\
$0.614584$ & $\frac{157}{784}$ & $\text{2(1,1)}$ & $\text{2(2,2)}$ & $\text{6, 6}$ & $\star,\,\star$ \\
$0.615509$ & $\frac{157}{784}$ & $\text{2(1,1)}$ & $\text{2(2,2)}$ & $\text{4, 7}$ & $\star,\,\star$ \\
$0.615509$ & $\frac{157}{784}$ & $\text{2(1,1)}$ & $\text{2(2,2)}$ & $\text{4, 6}$ & $\star,\,\star$ \\
$0.632019$ & $\frac{157}{784}$ & $\text{2(1,1)}$ & $\text{2(2,2)}$ & $\text{5, 7}$ & $\star,\,\star$ \\
$0.632019$ & $\frac{157}{784}$ & $\text{2(1,1)}$ & $\text{2(2,2)}$ & $\text{5, 6}$ & $\star,\,\star$ \\
$0.672395$ & $\frac{157}{784}$ & $\text{2(1,1)}$ & $\text{2(2,2)}$ & $\text{3, 7}$ & $\star,\,\star$ \\
$0.672395$ & $\frac{157}{784}$ & $\text{2(1,1)}$ & $\text{2(2,2)}$ & $\text{3, 7}$ & $\star,\,\star$ \\
$0.672395$ & $\frac{157}{784}$ & $\text{2(1,1)}$ & $\text{2(2,2)}$ & $\text{3, 6}$ & $\star,\,\star$ \\
$0.700306$ & $\frac{157}{784}$ & $\text{2(1,1)}$ & $\text{2(2,2)}$ & $\text{4, 7}$ & $\star,\,\star$ \\
$0.700306$ & $\frac{157}{784}$ & $\text{2(1,1)}$ & $\text{2(2,2)}$ & $\text{4, 6}$ & $\star,\,\star$ \\
$0.845775$ & $\frac{157}{784}$ & $\text{2(1,1)}$ & $\text{2(2,2)}$ & $\text{5, 7}$ & $\star,\,\star$ \\
$0.845775$ & $\frac{157}{784}$ & $\text{2(1,1)}$ & $\text{2(2,2)}$ & $\text{5, 6}$ & $\star,\,\star$ \\
$0.84714$ & $\frac{157}{784}$ & $\text{2(1,1)}$ & $\text{2(2,2)}$ & $\text{4, 7}$ & $\star,\,\star$ \\
$0.84714$ & $\frac{157}{784}$ & $\text{2(1,1)}$ & $\text{2(2,2)}$ & $\text{4, 6}$ & $\star,\,\star$ \\
$0.440354$ & $\frac{112141}{559504}$ & $\text{2(1,1)}$ & $\text{3(2,1,1)}$ & $\text{6, 7}$ & $\star,\,\star$ \\
$0.580188$ & $\frac{112141}{559504}$ & $\text{2(1,1)}$ & $\text{3(2,1,1)}$ & $\text{1, 7}$ & $\star,\,\star$ \\
$0.599872$ & $\frac{112141}{559504}$ & $\text{2(1,1)}$ & $\text{3(2,1,1)}$ & $\text{4, 7}$ & $\star,\,\star$ \\
$0.607946$ & $\frac{112141}{559504}$ & $\text{2(1,1)}$ & $\text{3(2,1,1)}$ & $\text{5, 7}$ & $\star,\,\star$ \\
$0.657683$ & $\frac{112141}{559504}$ & $\text{2(1,1)}$ & $\text{3(2,1,1)}$ & $\text{3, 7}$ & $\star,\,\star$ \\
$0.671595$ & $\frac{112141}{559504}$ & $\text{2(1,1)}$ & $\text{3(2,1,1)}$ & $\text{2, 7}$ & $\star,\,\star$ \\
$0.847494$ & $\frac{112141}{559504}$ & $\text{2(1,1)}$ & $\text{3(2,1,1)}$ & $\text{4, 7}$ & $\star,\,\star$ \\
$0.85121$ & $\frac{112141}{559504}$ & $\text{2(1,1)}$ & $\text{3(2,1,1)}$ & $\text{3, 7}$ & $\star,\,\star$ \\
$\frac{3}{5}$ & $\frac{1}{4}$ & $\text{1(2)}$ & $\text{1(4)}$ & $\text{5, 5}$ & $\text{}$ \\
$\frac{3}{5}$ & $\frac{1}{4}$ & $\text{1(2)}$ & $\text{1(4)}$ & $\text{0, 5}$ & $\text{}$ \\
$\frac{3}{4}$ & $\frac{1}{4}$ & $\text{1(2)}$ & $\text{1(4)}$ & $\text{3, 6}$ & $\star,$ \\
$\frac{3}{4}$ & $\frac{1}{4}$ & $\text{1(2)}$ & $\text{1(4)}$ & $\text{1, 6}$ & $\star,$ \\
\end{longtable}
\end{center}

\begin{center}
\begin{longtable}{|c c c c c c|}\caption{Fixed points found for $N_s=3$ and $N_f=1$.} \\
\hline
$S$ & $Y$ & \makecell{$\#$ different \\[-3pt] $\gamma_\phi$(degeneracies)} & \makecell{$\#$ different \\[-3pt] $\gamma_\psi$(degeneracies)} & $\#\,\kappa<0$, =0 & $a_2,\,b_1\neq0$ \\ [0.5ex] 
 \hline
\endfirsthead

\multicolumn{6}{c}%
{{\bfseries \tablename\ \thetable{} -- continued from previous page}} \\
\hline
$S$ & $Y$ & \makecell{$\#$ different \\[-3pt] $\gamma_\phi$(degeneracies)} & \makecell{$\#$ different \\[-3pt] $\gamma_\psi$(degeneracies)} & $\#\,\kappa<0$, =0 & $a_2,\,b_1\neq0$  \\ [0.5ex] 
 \hline
\endhead

\hline \multicolumn{6}{|r|}{{Continued on next page}} \\ \hline
\endfoot

\hline
\endlastfoot \label{tab:31}
$\!\!0.247862$ & $\frac{1}{49}$ & $\text{2(1,2)}$ & $\text{1(1)}$ & $\text{11, 3}$ & $\star,\,\star$ \\
$0.248061$ & $\frac{1}{49}$ & $\text{2(1,2)}$ & $\text{1(1)}$ & $\text{11, 3}$ & $\star,\,\star$ \\
$0.269318$ & $\frac{1}{49}$ & $\text{2(1,2)}$ & $\text{1(1)}$ & $\text{9, 3}$ & $\star,\,\star$ \\
$0.286823$ & $\frac{1}{49}$ & $\text{2(1,2)}$ & $\text{1(1)}$ & $\text{9, 2}$ & $\star,\,\star$ \\
$0.346219$ & $\frac{1}{49}$ & $\text{2(1,2)}$ & $\text{1(1)}$ & $\text{6, 3}$ & $\star,\,\star$ \\
$0.40898$ & $\frac{1}{49}$ & $\text{2(1,2)}$ & $\text{1(1)}$ & $\text{4, 3}$ & $\star,\,\star$ \\
$0.423905$ & $\frac{1}{49}$ & $\text{2(1,2)}$ & $\text{1(1)}$ & $\text{4, 2}$ & $\star,\,\star$ \\
\end{longtable}
\end{center}

\end{appendices}

\bibliography{main}

\end{document}